\documentclass[11pt,reqno]{amsart}  
\usepackage{amsaddr}

%
%


\usepackage{xcolor}

\RequirePackage{amsthm,amsmath,amssymb}
\RequirePackage{hyperref}
\usepackage{cancel, color}

\usepackage{etoolbox}     
\makeatletter
\patchcmd{\appendix}{\@Alph}{\@Roman}{}{}
\makeatother

\usepackage{parskip}
\usepackage{enumitem}

\hypersetup{
    linktocpage,
    linkcolor={black} 
    }
\RequirePackage{tabularx}

\theoremstyle{plain}
\newtheorem{thm}{Theorem}[section]




\renewcommand{\appendix}{\par   
  \setcounter{section}{0}
  \setcounter{subsection}{0}
  \gdef\thesection{\Alph{section}}
}

\usepackage{empheq}               
\usepackage[strict]{changepage}

\usepackage{bm}
\usepackage{enumitem}




\DeclareMathOperator{\sgn}{sgn}

\usepackage{mathtools}

\def\Z{\mathbb{Z}}                          
\def\R{\mathbb{R}}  
\def\G{\mathbb{G}}
\def\V{\mathcal{V}}
\def\E{\mathcal{E}}


\newtheorem{theorem}{Theorem}[section]

\newtheorem{lemma}[theorem]{Lemma}

\newtheorem{definition}[theorem]{Definition}

\newcommand{\be}[1]{\begin{equation}\label{#1}}
\newcommand{\ee}{\end{equation}}
\numberwithin{equation}{section}

\newcommand{\ba}[1]{\begin{align}\label{#1}}
\newcommand{\ea}{\end{align}}
\numberwithin{equation}{section}

\newcommand{\ben}{\begin{equation*}}
\newcommand{\een}{\end{equation*}}
\numberwithin{equation}{section}






\newcommand{\bbE}{\mathbb{E}}

\newcommand{\bbP}{\mathbb{P}}

\newcommand{\bbR}{\mathbb{R}}

\def\Z{\mathbb{Z}}                          
\def\X{\mathbb X}
\def\be{\begin{equation}}
\def\ee{\end{equation}}

\def\S{{\mathcal S}}

\def\id{\bm{1}}
\def\ind{\bm{1}}
\def\C{\mathbb{C}}
\def\Q{\mathfrak{g}}



\newcommand{\n}{\mathbf{n}}
\newcommand{\m}{\mathbf m}
\newcommand{\om}{\mathbf \omega}

\newcommand{\rk}[1]{\bgroup\color{red}%
  \par\medskip\hrule\smallskip%
  \noindent\textbf{#1}%
  \par\smallskip\hrule\medskip\egroup}


\date{printed \today}

\title{Geometric analysis of Ising models, part III \vspace{-0.3cm}} 

\author{Michael Aizenman$^*$}
\address{Departments of Physics and  Mathematics, Princeton University, USA 
\footnote{$^*$ Weston Visiting Professor at the Weizmann Institute of Science, Israel. \\  
\indent E-mail:  aizenman@princeton.edu
}}

\setcounter{tocdepth}{1}

\begin{document}

	\setlength{\parskip}{1ex} 
\setlength{\parindent}{25pt}
		
\begin{abstract}  \mbox{}  \\[2ex]  
The random current representation of the Ising model, along with a related path expansion, has been a source of insight on the  stochastic geometric underpinning of the ferromagnetic model's phase structure and critical behavior in different dimensions.   This representation is 
extended here to systems with a mild amount of frustration, such as generated by disorder operators and external field of mixed signs.  
Further examples of the utility of such stochastic geometric representations are presented in the context of the deconfinement transition of the $\Z_2$ lattice gauge model -- particularly in three dimensions-- and in  streamlined proofs of correlation inequalities with wide-ranging applications.

 \end{abstract}
	
	\maketitle
\tableofcontents

\section{Introduction}{}

Since its introduction some hundred years ago~\cite{Len20,Isi25}, and especially after its validation in  the ground breaking contributions of Peierls~\cite{Pei36}, and Onsager~\cite{Ons44}, the Ising model has been the source of many  insights on symmetry breaking, phase transitions,  and critical phenomena.   The  simplicity of its formulation, the ever increasing range of applicable tools, and the observed principles of universality, have made its study into a worthwhile and rewarding enterprise.  The role it has played for statistical mechanics is comparable to that of Drosophila for genetics~\cite{ShiWei81}.

This  article is not intended to offer a thorough review of it subject.  
What is presented here is a delayed and correspondingly re-edited Part III  of the author's 1982  work on Ising and related models, of which Parts I and II appeared bundled together in~\cite{Aiz82}.   

``Part I''  exposed a stochastic geometric perspective on the structure of spin correlations  in ferromagnetic Ising models.  Its highlight was an intuitive, yet fully rigorous, explanation of some of the dimension dependence on the Ising model's critical behavior.  More specifically, the fact that  its  critical exponents and structure of correlation functions in dimensions $d>4$ are similar to those of the model's mean-field version  while that is not the case  in dimensions $d<4$, for reasons which are particularly transparent in  the planar case of $d=2$.  

``Part II''  was directed at the implications of the above for the constructive field theory project, whose declared goal has been the construction of a non-trivial $\phi^4$ field theory in $d$ dimensions.    For that, the random current  perspective, and the parallel work~\cite{Fro82}, produced contrasting implications concerning the nature of the scaling limits of the order parameter's fluctuation field which was shown to be Gaussian in high dimensions ($d>4$, and possibly also $d=4$ ~\cite{AizGra83, Fro82}) and not so in  dimensions $d<4$, of which  $d=2$ is a particularly transparent case.  Since then, the partly similar stochastic geometric representation  of~\cite{BryFroSpe82}  was successfully applied by Brydges-Fr\"ohlich-Sokal~\cite{BryFroSok83} to the (non-Gaussian) $\phi^4$ field in $d=3$ dimensions.  For the borderline dimension $d=4$ it was eventually 
 proved, in collaboration with H. Duminil-Copin~\cite{AizDum21},  that the natural 
Ising and $\phi^4$ lattice fields have only Gaussian limits  there as well (as many, though not all, expected\footnote{The dust has still not settled over attempts to make sense of theories like $\phi^3_d$  which despite the luck of a probabilistic interpretation seems to have predictive power, e.g. for percolation.  Understanding that may offer another angle on $\phi^4_4$.}).  
  
The present ``Part III'', in line with its originally published outline  focuses on extensions of the tools that were deployed in ~\cite{Aiz82}, and a demonstration of their relevance for a number of other topics. 

Since the publication of   I \&  II, random-current based methods have been successfully extended and applied  to a range of non-perturbative results on the  critical behavior in ferromagnetic Ising spin systems and related models.  Partial lists of those can be found in~\cite{Dum16,Dum17,Aiz22}.  
Some of the extensions which were originally listed for part III did not yet seem to have been sufficiently publicized.  Though  some were  reported in other works, with references to this paper~\cite{Gra84,DunSch25}.  
These include: 
\begin{enumerate}  
\item [(a)]  a probabilistic extension of the random current representation beyond the frustration-free case 
\item [(b)]  the synthesis of RCR with reflection symmetry.
\end{enumerate}
  
For systems with couplings of mixed signs the expansion which yields the random current representation (RCR) produces terms which are no longer positive.   In Section~\ref{sec:frustration} it is shown that also in that case a probabilistic perspective can be recovered.  The key step for that is the identification  through  the switching lemma of exact cancelations which resolve the original RCR's  sign-problem.  
    
The technique is of relevance  the lattice $\Z_2$-gauge model, which is discussed here with particular focus on three dimensions, making use of its duality there with the regular Ising model.
Tools which are enabled through  the positivity mentioned in (a) are used here to  complete the proof of sharpness of the de-confinement transition.   
 More explicitly:   the existing proof  by Lebowitz and Pfister \cite{LebPfi81} of Wilson loop area-law confinement  at  
$\beta < \beta_c \equiv [\beta_c^{(\text{Ising})}]^*$, is supplemented here with a  perimeter-law lower bound on the Wilson loop's expectation  for  $\beta > \beta_c$.

Another application of (a) is a streamlined random-current-based derivation of the recent inequality of Ding-Song-Sun~\cite{DSS23}.   Their new result has facilitated  a number of recent developments, among which is the proof of cutoff in the Ising model's Glauber dynamics at all $T>T_c$~\cite{BauDeg24}.  To this list was just added the  proof that 
the bipartite entanglement in the ground states of the quantum Ising model on $\Z^d$ obeys a variant of an area law throughout its subcritical regime~\cite{AizWar25}. 
(Previous results on quantum ground states' entanglement have addressed the case $d=1$, and states sufficiently below the quantum phase transition threshold \cite{GOS20}).

 In (b) are included examples of conditional switching symmetry in the context of: i) a single configuration folded onto itself along a symmetry hyperplane, and ii) switching between graphs of only partial overlap (though examples of that have already appeared in published works).  
The former yields  an alternative 
proof of the  van Beijeren's theorem~\cite{Bei75}  that at temperatures  below $T_c(\Z^{d-1})$ 
the Dobrushin boundary conditions produce translation symmetry breaking in the nearest neighbor Ising model on $\Z^d$,  
The RCR perspective on the subject has  been cited to this work,  
and discussed more thoroughly 
in Ross Graham's paper~\cite{Gra84}, where the exactness of the link between the detouchment of the RCR-null surface and the  unidirectional translation symmetry breaking is established.   Since the question of existence of a roughening transition in $3D$ (at some $T< T_c(d)$ is still open, the topic remains included here. 

Also in line with the original plan, in the Appendix are presented  RCR-based derivations of few other Ising model  inequalities that were originally derived by different methods.   

An addition which was not anticipated in this article's originally presented TOC is the Fortuin-Kasteleyn  random cluster representation (Section~\ref{sec:FK}).  While the broader class of FK models  was introduced some ten years earlier, at the time its relevance for the study of the critical behavior of the Ising model was still to be realized (\cite{ACCN88}).   As the FK random cluster representation shares some grounds with the RCR random current representation, and on occasions it helps to combine the two (cf. \cite{DGR20}), the relation of the two is commented upon here in Section~\ref{sec:FK}, and further below.

In publishing a delayed work a natural question is ``why now''?  In this case this is answered by the recent surge of interest in the subject,  the relevance of the methods presented here for some recent  works,  
and potentially also for some of the still open questions.  

It should however be stressed that this is not a report on all that has happened since Parts I \& II in the field of rigorous studies of the statistical mechanics of Ising models.  In particular not addressed here are  disorder effects, such as found in spin glass and random field Ising models which are surely mentioned elsewhere in this volume.

\section{The Ising model}\label{sec:Ising} 

By the term Ising model we refer here to any system of $\pm1$ valued``spin'' variables $\{\sigma_x\}_{x\in \mathcal V}$ associated with the sites of a graph $\G = (\mathcal V,\mathcal E)$ (described this way in terms of vertex and edge sets), with an interaction Hamiltonian of the form 
\be \label{H_w_h}
 {H}(\sigma) := - \sum_{\{x,y\} \in \E} J_{x,y} \sigma_x \sigma_y 
    -  \sum_{x \in \V} h_x  \sigma_x 
\,, \ee
Here $ J_{x,y}$ and $h_x$ are the model's control parameters: interactions across the graph's edges, and the external ``magnetic field'' applied at the graph's sites.

The Gibbs equilibrium state of such a system in a finite subset $\Lambda \subset  \mathcal V$,     
at inverse temperature $\beta$ is the probability distribution under which the expectation values of functions $F: \{-1,+1\}^{\Lambda}$ are given by
\be \label{Gibbs}
\langle F \rangle_{\beta,\Lambda}   =  \frac{1}{2^{|\Lambda|} }  \sum_{\sigma\in \{-1,+1\}^{\Lambda}} F(\sigma) e^{-\beta H(\sigma)}/Z_\Lambda(\beta)  \, 
\ee 
which is normalized by  the partition function:  
\be \label{Z} 
Z_{\Lambda,\beta}(J,h) :=   \frac{1}{2^{|\Lambda|} } \sum_{\sigma\in \{-1,+1\}^{\Lambda}}   e^{-\beta H(\sigma)} 
\, . 
\ee

The infinite volume Gibbs  states are defined through limits of the Gibbs states in finite domains, possibly under some specified boundary conditions.   They can also be characterized as solutions of the corresponding local DLR equations. 
Among the first questions of interest are: 
\begin{enumerate}
\item[i)] for what range of $\beta$ (at given $(\mathcal J, h)$) does the limiting state depend on the boundary conditions,  
\item[ii)] what is the decay rate of spin-spin correlations,   
\item[iii)] what is the structure  of the multi-spin correlations, especially at the model's  critical  points.  
\end{enumerate}

 \section{Selected provable results}  

 In the absence of exact solutions, which so far have been limited to the mean-field case and dimensions $d= 1, 2$, the model's rigorous analysis has been based on expansions and  inequalities.   
 In particular the latter, being less limited by issues of convergence,   allowed to establish the following basic  results on the model's phase transition and critical behavior.

\begin{theorem}[\cite{ABF87}] \label{thm:sharpness} 
For any ferromagnetic Ising model on $\Z^d$, at $d>1$, with a shift invariant  irreducible  pair interaction of exponential decay in  $\|x-y\|$ (the Euclidean distance),
there exists $\beta_c < \infty$ such that: 
\begin{eqnarray}  \label{eq:offcriticality} 
\langle \sigma_x \sigma_y \rangle_{\beta,0}  \,\, \,  
\begin{cases} 
 \,\, \leq A(\beta) \, e^{-\|x-y\| /\xi(\beta)}  & \forall \beta < \beta_c 
 \quad \mbox{with $\xi(\beta) >0, \, A(\beta) < \infty$}
 \\[2ex]  
 \,\,  \geq M(\beta)^2  & \forall \beta > \beta_c\quad \mbox{with $M(\beta) >0$}
 \end{cases}
 \end{eqnarray}  
 
Furthermore,  approaching the critical point $(\beta_c,0)$ from three different directions:
 \be \label{power_laws}
{ M(\beta,0)   \geq     \rm{C} |\beta -\beta_c|_+^{1/2} \, , \quad   
M(\beta_c,h)  \geq   \rm{C} \, h^{1/3} \,, \quad
\frac{\partial M}{\partial h}(\beta,0)  \geq   \frac{\rm{C}}{|\beta_c-\beta|_+^{1}}  }
\ee
with constants $\rm{C}$ which may vary from one equation to another.
\end{theorem}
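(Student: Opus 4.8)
The engine for both parts is the random current representation and the associated switching lemma, which are the central tools of this paper: writing $\langle\sigma_A\rangle_{\beta,h}$ as a ratio of weighted sums over integer flux configurations $\n$ with prescribed sources $\partial\n=A$ (a ghost vertex absorbing the field $h$), the switching lemma converts products of correlations into sums over configurations in which prescribed sources are forced to be connected. The plan is to use this to derive a closed system of differential inequalities for the magnetization $M(\beta,h):=\langle\sigma_0\rangle_{\beta,h}$ and the susceptibility $\chi(\beta,h):=\partial M/\partial h$, and then to extract the dichotomy \eqref{eq:offcriticality} and the power laws \eqref{power_laws} by elementary integration, in the spirit of Aizenman--Barsky.

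First I would establish the main differential inequality
\[
M \;\le\; h\,\frac{\partial M}{\partial h} \;+\; M^3 \;+\; \beta\,M^2\,\frac{\partial M}{\partial\beta}\,,
\]
valid for $h>0$. Its derivation proceeds by differentiating $M$ in $h$ and in $\beta$, representing each resulting term through a pair of currents, and applying the switching lemma to reroute the source at the origin; the cubic term $M^3$ reflects the $\Z_2$ symmetry (odd sources) and is the Ising analogue of the quadratic term in the percolation inequality. Alongside it I would record (i) the Griffiths--Hurst--Sherman concavity $\partial^2 M/\partial h^2\le 0$ for $h\ge0$, and (ii) a Riccati-type lower bound $\partial\chi/\partial\beta \ge c\,\chi^2$ at $h=0$, the latter obtained by expanding $\partial_\beta\chi$ as a sum of truncated four-point functions and bounding them below via the switching lemma. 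This last inequality is saturated by the mean-field model, and is what pins the susceptibility exponent.

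With these in hand, define $\beta_c:=\sup\{\beta:\,M(\beta,0^+)=0\}$, its finiteness for $d>1$ being supplied by the existence of spontaneous magnetization at low temperature (e.g.\ a Peierls or reflection-positivity bound). For $\beta>\beta_c$, letting $h\to0^+$ in the main inequality and dividing by $M$ gives $1-M^2\le\tfrac{\beta}{2}\,\partial_\beta(M^2)$, whose integration from $\beta_c$ yields $M(\beta,0)\ge C\,(\beta-\beta_c)^{1/2}$, hence the lower branch of \eqref{eq:offcriticality} and the exponent $1/2$. Restricting the main inequality to the critical line $\beta=\beta_c$ and using GHS concavity to control the temperature-derivative term, the $M^3$ term dominates as $h\to0$ and forces $M(\beta_c,h)\ge C\,h^{1/3}$. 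Integrating the Riccati inequality up to the point $\beta_c$ where $\chi$ diverges gives the rate $\chi(\beta,0)\ge C/(\beta_c-\beta)$ for $\beta<\beta_c$. Finally, for $\beta<\beta_c$ the finiteness of $\chi$ (part of the same sharpness analysis) feeds the Simon--Lieb inequality $\langle\sigma_0\sigma_x\rangle\le\sum_{y\in\partial\Lambda}\langle\sigma_0\sigma_y\rangle_\Lambda\langle\sigma_y\sigma_x\rangle$, which itself has a short random-current proof: once $\sum_{y\in\partial\Lambda}\langle\sigma_0\sigma_y\rangle_\Lambda<1$ for a finite box $\Lambda$, iteration yields exponential decay, and the assumed exponential decay of $J$ together with irreducibility converts this into decay in $\|x-y\|$, giving the upper branch of \eqref{eq:offcriticality}.

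The main obstacle is the derivation of the differential inequalities with the correct signs: the switching lemma must be applied with careful bookkeeping of which source configurations survive, and with the ghost-vertex encoding of $h$, the four-point lower bound underlying the Riccati inequality being the least routine of these. A second, more conceptual difficulty is the sharpness itself -- that the magnetization threshold $\beta_c$ coincides with the threshold for finite susceptibility, equivalently for $\sum_{y\in\partial\Lambda}\langle\sigma_0\sigma_y\rangle_\Lambda<1$ at some finite $\Lambda$. This coincidence is precisely what excludes an intermediate phase and what makes the Simon--Lieb iteration available for every $\beta<\beta_c$; delivering it is the crux that the differential-inequality method is designed to achieve.
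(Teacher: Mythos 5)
The paper does not actually prove Theorem~\ref{thm:sharpness}: it is quoted from \cite{ABF87}, with only the remark that the proofs rest on correlation inequalities and the partial differential inequalities they enable. Your sketch is therefore a reconstruction of the Aizenman--Barsky--Fern\'andez argument rather than a comparison with anything in the text, and in outline it is the right route: the PDI $M \le h\,\partial_h M + M^3 + \beta M^2\,\partial_\beta M$, the GHS concavity $\partial_h^2 M\le 0$ (hence $\chi\le M/h$), a second inequality converting $\partial_\beta M$ into $M\chi$, and the Simon--Lieb iteration (here Theorem~\ref{SL_ineq}; for interactions that are merely exponentially decaying rather than finite range you need the edge version \eqref{eq_SL_E}, since a vertex cut-set can be jumped over) for the subcritical branch of \eqref{eq:offcriticality}.

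There is, however, a concrete error in the susceptibility step. The inequality that yields $\chi(\beta,0)\ge C/(\beta_c-\beta)$ is an \emph{upper} bound on the derivative, $\partial\chi/\partial\beta \le 2\|J\|_1\,\chi^2$, coming from the Lebowitz upper bound on the truncated four-point function $\langle\sigma_0\sigma_x;\sigma_u\sigma_v\rangle \le \langle\sigma_0\sigma_u\rangle\langle\sigma_x\sigma_v\rangle + \langle\sigma_0\sigma_v\rangle\langle\sigma_x\sigma_u\rangle$; integrating $(-1/\chi)' \le 2\|J\|_1$ from $\beta$ up to $\beta_c$, where $\chi$ diverges, gives $1/\chi(\beta)\le 2\|J\|_1(\beta_c-\beta)$. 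Your proposed lower bound $\partial\chi/\partial\beta\ge c\,\chi^2$ integrates to the \emph{opposite} conclusion, $\chi(\beta)\le C/(\beta_c-\beta)$, so as written this leg of \eqref{power_laws} would not follow. Separately, the coincidence of the magnetization threshold with the finite-susceptibility threshold --- which you correctly identify as the crux, and which is what licenses the Simon--Lieb iteration for every $\beta<\beta_c$ --- is asserted rather than derived; in \cite{ABF87} it is extracted from the same two PDIs (showing that $\chi(\beta,0)=\infty$ forces $M(\beta,h)\ge c\,h^{1/3}$ and hence $M(\beta',0^+)>0$ for all $\beta'>\beta$), and a complete write-up would have to include that step.
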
 
Bundled here is a number of statements  of which even the assertion that the low and high temperature regimes extend up to a single transition point (i.e. without an intermediate phase where neither option holds) is beyond the reach of perturbation expansions.  Proofs have been derived through non-perturbative means, mostly based on correlation inequalities and relations expressed in terms of  partial differential inequalities which they enable.

Though stated initially for $\Z^d$, the familiar proofs readily generalize to   Ising spin systems on amenable transitive  graphs with homogenous ferromagnetic interaction of exponential decay.  And with suitable adjustments in  the decay rate in \eqref{eq:offcriticality}, the assumption  of exponential decay of $J_{x,y}$  can be adjusted to slower decay rates, as long as the two-body interaction is summable, i.e.  $\|J\|_1<\infty$.

More can be said for the critical behavior of the nearest neighbor interaction in different dimensions.  And not only about the special case of $d=2$, but also  for higher dimensions.  A significant change occurs as $d$ crosses its ``upper-critical'' value $d=4$, cf. \cite{Aiz82, AizFer86,AizDumSid15,AizDum21} and references therein.  Fundamental role for many of these was played by methods which emerged from the model's random current representation, which is our main subject here.

\section{The random current representation}\label{sec:RCR}

\subsection{RCR's construction} \mbox{ } 

Following is  the RCR definition in the context of finite systems, as it was presented in \cite{Aiz82}.  Let us however note that RCR's  infinite volume limits admit also a direct formulation on the limiting graphs.  As was learned in \cite{AizDumSid15}, the infinite volume RCR formalism allows to combine RCR tools with powerful considerations enabled by translation invariance. 

\begin{definition} 
For an Ising model on a finite graph $\G$  
of vertex and edge sets  $\mathcal V$ and $\mathcal E$, 
with the Hamiltonian 
\be \label{H_no_h}
 {H}_\G(\sigma) := - \sum_{\{x,y\} \in \E(\G)} J_{x,y} \sigma_x \sigma_y 
\,. \ee
a {\em current} configuration  is an integer-valued function defined over the  unordered edges, $\n: \mathcal E  \to \Z_+$.    
The value of $n( (x,y))$ referred to as the current's flux over the edge $(x,y)\in \mathcal E$. 

The collection of vertices of odd incident flux 
\be \label{d_n}
 { \partial\n \, := \, \{ x\in \V :\, (-1)^{\sum_{(x,y) \in \E} n( (x,y))} = -1 \} }
\ee  
is referred to as the current's   \emph{source set}.

At given $\beta \geq 0$, the current configurations are assigned the \emph{weight} function
\be \label{w(n)}
 { w(\n)  :=\prod_{\{x,y\}\subset \E}\frac{\displaystyle(\beta J_{x,y})^{n((x,y))}}{n((x,y))!} } \,.
\ee
\end{definition} 

In these terms, the  Ising model's partition function (at the free boundary conditions) is: 
\begin{equation}\label{eq:8}
Z_{\Lambda,\beta}(J,h) := \frac{1}{2^{|\Lambda|} }\sum_{\sigma: \V\to \{-1,1\}} \prod_{(x,y)\in \E} \exp(\beta J_{x,y}\sigma_x\sigma_y) 
=
  \sum_{\substack{\n:\,\E\to \Z_+ \\ \partial\n=\emptyset}} w(\n) \,.
\end{equation}
This representation emerges through the expansion of  the Gibbs factors 
into its  Taylor  series in $\{J_{x,y}\}$ (which is convergent for all  $\beta \in \C$) 
\be \label{RCR_spin_product}
\exp(\beta J_{x,y}\sigma_x\sigma_y)=\sum_{n( (x,y))\ge0} \frac{(\beta J_{x,y}\sigma_x\sigma_y)^{n( (x,y))}}{n( (x,y))!} \, 
\ee
followed up by the independent average over the spin values.

Correspondingly, the Gibbs state multi-spin correlation functions take the form
\begin{equation}\label{eq:erg}
S_n(x_1, ... , x_n) := { \langle\prod_{x\in A}\sigma_x\rangle_{\Lambda,\beta}=\frac{\displaystyle\sum_{\n:\, \partial\n=A}w(\n)}{\displaystyle\sum_{\n:\partial\n=\emptyset}w(\n)}\,  }
\end{equation}
In this reformulation of the expectation values, Ising spins appear as source-insertion operators acting on the functional integral/sum \eqref{eq:8}.  

It is instructive to note that any current configuration $\n$ is decomposable (in a non-unique way) into the sum of functions  counting the number of times the graph's edges are traversed by a union of a family of potentially overlapping   of  loops with few lines of non-overlapping end points (whose collection forms $\partial \n$).  In the model's infinite volume limit this may also include lines running off  to infinity (at one end or both).

In a sense,  $\n$ resembles the flux of a charge-conserving current, albeit of a charge which is conserved only \emph{mod} 2 (or, alternatively stated:  a parity current).

\subsection{The switching lemma}\mbox{  } 
  
  A feature which makes RCR particularly effective is 
the  combinatorial symmetry which is expressed in the so-called switching lemma.  The switching argument appeared first in the graph theoretic derivation of the Griffiths-Hurst-Sherman inequality~\cite{GHS70}.   
The more comprehensive random current perspective was developed  in \cite{Aiz82} with  results which 
laid down a stochastic geometric picture of the propagation of Ising model's correlations.   These include:
\begin{itemize} 
\item  the spin correlations'  representation in terms of RCR backbone paths 
\item  an RCR-based percolation  characterization of the onset of the Ising model's phase transition 
\item  key features of the  multi-spin spin correlations at the critical state (Gaussian in high dimensions, definitely not in $d=2$, and likely not in $d=3$)
\end{itemize} 
And the list of RCR enabled insights and results still continues to grow.

As we try to strike a balance between the statement's usefulness and simplicity of formulation, following is a version of 
the switching lemma (a term coined in \cite{Aiz82}) 
formulated for pairs of graphs of which one is a subgraph of the other.  The statement is a bit more general than some of its published versions, but the proofs in essence coincide, and rest on a combinatorial argument~\footnote{The switching symmetry is linked with the infinite divisibility of the Poisson distribution 
whose presence can be recognized in the weights $w(\n)$ defined in \eqref{w(n)}.}
which goes back to Griffiths-Hurst-Sherman \cite{GHS70}.

\begin{lemma}\label{lem:switching}  Let  $\G_j = (\mathcal V_j, \mathcal E_j)$,  $j=1,2$,  be a  pair  of graphs, of non-empty intersection $\widehat {\G} = \G_1 \cap \G_2 $ with  two sets of coupling constants  associated with the graphs' edges, $\mathcal J_j: \mathcal E_j \to \R$, which coincide on the common edge set $\widehat {\mathcal E} = \mathcal E_1 \cap \mathcal E_2$.

 Then for any triple of vertex sets $A_j\subset \V_j$ ($j=1,2$), $  B\subset \widehat{\mathcal V}$
\begin{multline} \label{eq:switching}
\sum_{\substack{
\n_1:\partial\n_1=A_1\\
\n_2:\partial\n_2=A_2}}  w(\n_1)w(\n_2) \,   F(\n_1+\n_2)\,  \,  \id_{_{B}} [\n_1+\n_2]  \ = \\   
\sum_{\substack{
\n_1:\partial\n_1=A_1\Delta B\\
\n_2:\partial\n_2=A_2\Delta B }}  w(\n_1)w(\n_2) \,   F(\n_1+\n_2) \,  \,  \id_{_{B}} [\n_1+\n_2]     
\end{multline}
where $\n_j$ are summed over the random current configurations on the corresponding graphs $\G_j$, 
  $\Delta $  denotes the symmetric difference of sets  and 
$\id_{_{B}} [\m]$ is the  indicator function
\be \label{eq:xi}
\id_{_{B}} [\m] = 
\begin{cases}  
1 & {\mbox { $\exists \mathbf {k}: \widehat {\mathcal E} \to \Z_+$, with 
            $\mathbf {k}  \leq \m$ 
            and $\partial \mathbf {k} = B$ }}
\\ 
0& \mbox {otherwise} 
\end{cases} 
\ee 
which in \eqref{eq:switching} is evaluated over the restriction of $\n_1+\n_2$ to  $\widehat {\mathcal E}$.  
\end{lemma}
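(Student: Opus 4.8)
The plan is to organize both sides of \eqref{eq:switching} according to the value of the \emph{total} current $\m:=\n_1+\n_2$, and thereby reduce the claimed identity to a purely combinatorial statement about the number of admissible ways a fixed $\m$ splits into an ordered pair $(\n_1,\n_2)$ with prescribed sources. First I would record the elementary edge-by-edge reweighting identity
\be
w(\n_1)\,w(\n_2)\ =\ w(\m)\,\prod_{e\in\mathcal E_1\cup\mathcal E_2}\binom{\m(e)}{\n_1(e)}\,,
\ee
which follows directly from the product form \eqref{w(n)} together with the assumed coincidence of $\mathcal J_1,\mathcal J_2$ on the shared edges $\widehat{\mathcal E}$ (so that the single value $\beta J_e$ appearing in $w(\m)$ is unambiguous; note the identity is purely algebraic and insensitive to the signs of the couplings). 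Since $F(\n_1+\n_2)=F(\m)$ and $\id_{_B}[\n_1+\n_2]=\id_{_B}[\m]$ depend only on $\m$ (the latter only through $\m|_{\widehat{\mathcal E}}$), each side becomes a sum over $\m$ of $w(\m)\,F(\m)\,\id_{_B}[\m]$ times a counting factor. Moreover $\partial\m=\partial\n_1\,\Delta\,\partial\n_2$, so on the left only currents with $\partial\m=A_1\,\Delta\,A_2$ contribute, while on the right $\partial\m=(A_1\Delta B)\Delta(A_2\Delta B)=A_1\Delta A_2$ as well; hence both sides range over exactly the same family of $\m$.

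It then suffices to prove, for each such $\m$ with $\id_{_B}[\m]=1$, the combinatorial equality $N(\m;A_1,A_2)=N(\m;A_1\Delta B,A_2\Delta B)$, where $N(\m;A_1,A_2)$ is the binomially weighted number of splittings with $\partial\n_1=A_1$, $\partial\n_2=A_2$. To make this transparent I would pass to the multigraph $\widehat{\mathcal M}$ obtained from $\m|_{\widehat{\mathcal E}}$ by replacing each shared edge $e$ with $\m(e)$ parallel copies. On edges lying in only one of $\mathcal E_1,\mathcal E_2$ the split is forced (the corresponding current must carry all of $\m$ there), so the only freedom is the choice of a sub-multigraph $\omega\subseteq\widehat{\mathcal M}$ recording which parallel copies are assigned to $\n_1$; the factor $\prod_e\binom{\m(e)}{\n_1(e)}$ counts precisely the sub-multigraphs realizing a given copy-count. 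Absorbing the fixed parity contribution of the forced non-shared edges, the source condition $\partial\n_1=A_1$ collapses to a single boundary condition $\partial\omega=D$ for an explicit $D\subseteq\widehat{\mathcal V}$ determined by $A_1$ and $\m$, while $\partial\n_2=A_2$ is then automatic because $\partial\m=A_1\Delta A_2$. Thus $N(\m;A_1,A_2)=\#\{\omega\subseteq\widehat{\mathcal M}:\partial\omega=D\}$.

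Finally, toggling the sources $A_j\mapsto A_j\Delta B$ (with $B\subseteq\widehat{\mathcal V}$) shifts the boundary set to $D\Delta B$, so the claim reduces to
\be
\#\{\omega\subseteq\widehat{\mathcal M}:\partial\omega=D\}\ =\ \#\{\omega\subseteq\widehat{\mathcal M}:\partial\omega=D\,\Delta\, B\}\,.
\ee
Here the hypothesis $\id_{_B}[\m]=1$ enters decisively: it furnishes a subcurrent $\mathbf k\le\m|_{\widehat{\mathcal E}}$ with $\partial\mathbf k=B$, which I would realize as a fixed sub-multigraph $\kappa\subseteq\widehat{\mathcal M}$ with $\partial\kappa=B$. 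Since degrees add modulo $2$, the map $\omega\mapsto\omega\,\Delta\,\kappa$ (symmetric difference of edge sets inside $\widehat{\mathcal M}$) satisfies $\partial(\omega\Delta\kappa)=\partial\omega\,\Delta\,B$ and is an involution, hence a bijection between the two families; this establishes the equality of counts and, summing back over $\m$, the lemma.

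I expect the main obstacle to be the two-graph bookkeeping of the middle step rather than any deep idea: one must carefully track the parity contributed by edges belonging to only one of $\G_1,\G_2$, verify that the pair of source constraints genuinely collapses to the single condition $\partial\omega=D$ on the shared multigraph (including the degenerate cases forcing $N=0$), and confirm that this reduction has the same structure for both source sets, so that the single involution built from $\kappa$ settles both sides simultaneously. The combinatorial core itself---the switching involution $\omega\mapsto\omega\Delta\kappa$---is the same mechanism used in \cite{GHS70,Aiz82} and is a manifestation of the infinite divisibility noted in the footnote.
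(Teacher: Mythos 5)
Your proof is correct and follows essentially the same route as the paper's: both pass to the total current $\m=\n_1+\n_2$, use the binomial reweighting $w(\n_1)w(\n_2)=w(\m)\binom{\m}{\n_1}$ to reduce the identity to a count of multigraph decompositions with prescribed sources, and establish equality of the counts via the involution given by symmetric difference with a fixed subgraph $\kappa\le\m|_{\widehat{\mathcal E}}$ with $\partial\kappa=B$, whose existence is exactly the content of $\id_{_B}[\m]=1$. Your treatment of the non-shared edges (where the split is forced) is slightly more explicit than the paper's, but the mechanism is identical.
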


\begin{proof} 
The claimed relation holds for each specified set of values of the sum $\m=\n_1+\n_2$.  
For an especially simple proof, consider the multigraph representation of the pair of graphs, in which the edges of $\G_1\cup \G_2$ are drawn with multiplicity $\m$.   Let $\mathcal G(\m)$   be the thus constructed multigraph over the vertex set of $\G_1\cup \G_2$.  A useful combinatorial symmetry is revealed by the identity   
\be \label{eq_combs}
w(\n_1) \, w(\n_2)  = w(\m)  
\prod_{\{x,y\} }\left( {\begin{array}{c}
   m_{\{x,y\} } \\
   {n_1}_{\{x,y\}}   
  \end{array} } \right)  \, =:  w(\m)  
 \left( {\begin{array}{c}
   \m  \\
    \n_1  
  \end{array} } \right) \, .
\ee
This allows to rewrite the sums on left  side of \eqref{eq:switching}  in the form
\be  \label{switching_LHS}
\begin{array}{c}
\mbox{LHS} \\[1ex]   
\mbox{RHS} 
\end{array} 
 \mbox{of \eqref{eq:switching}} = 
\sum_{\substack{ \m: \mathcal E_1 \sqcup \mathcal E_2 \rightarrow \Z_+\\
\partial\m=A_1\Delta A_2}}
 w(\m) \,   F(\m)\,  \,  \id_{_{B}} [\m]  \,  \times 
 \begin{array}{c}
\mathcal N_{LHS}  \\[1ex]  
\mathcal N_{RHS}  
\end{array}  
 \ee 
 and 
with $\mathcal N_{LHS}$ and $\mathcal N_{RHS}$
the numbers of decompositions of the multigraph $\G(\m)$ into $\G(\n_1)\sqcup \G(\n_2)$ 
with $\G(\n_j)$ meeting the 
support and source conditions of the  sums on the left and the right sides of \eqref{eq:switching}, correspondingly.    

Under the condition imposed by $\xi_B[\m]$ 
the set of edges of $\widehat \G(\m)$ includes at least one set of edges whose source set is $B$.  
It is important, and easy to see (e.g.~by the set's countability) that among those a  choice of one such 
subgraph (whose edge count $k$ mees the conditions spelled in \eqref{eq:xi}) can be made as a  measurable functional of $
\G(\m)$.  Given such a choice, the invertible replacement in the last sum of the pair $\G(\n_1)$ and $\G(\n_2)$  by their symmetric difference with that fixed subgraph  shows that for each $\m$ (with $\id_{_{B}} [\m] \neq 0$) 
\be 
\mathcal N_{LHS} = \mathcal N_{RHS} 
\ee 
Hence the two terms in \eqref{eq:switching} are equal, as claimed.
\end{proof}  

\subsection{Emergence of a stochastic geometric picture}  \mbox{}

Among the immediate implications are the following exact relations, that hold for the Ising model over any finite graph (which is sometimes omitted in the abbreviated notation).  
 \begin{eqnarray}  \label{pf:G2}
\frac{ \langle \sigma_A  \rangle  \, \langle \sigma_B \rangle} 
{\langle \sigma_A  \, \sigma_B \rangle }   &=&   
 \sum_{\substack{
\partial\n_1=A \Delta B  \notag \\
\partial\n_2=\emptyset}} 
 w(\n_1) w(\n_2) \,  \,  \id_{_{A}} [\n_1+\n_2]  \big / 
{ \sum_{\substack{
\partial\n_1=A \Delta B \\
\partial\n_2=\emptyset}} 
 w(\n_1) w(\n_2) 
}  \\[2ex]
  &=&  \mathbb {E}^{A \Delta B , \emptyset }\big( \id_{_{A}} [\n_1+\n_2]  \big)  \leq 1
\end{eqnarray}  
where $\mathbb {E}^{A \Delta B , \emptyset }$ denotes the \emph{probability}  expectation value over a pair of random currents  with the indicated source sets,   
$\n_1+\n_2$ supports a subgraph of sources $A$.

In particular, for any pair of sites $x,y$: 
\be \label{eq:XtoY}
\langle \sigma_x \sigma_y \rangle^2 =    \mathbb {P}^{\emptyset, \emptyset } \big( x 
\stackrel{\n_1+\n_2}{   \longleftrightarrow}  y\big) \,.
\ee 
 This readily implies that in the natural infinite volume limit the onset of \emph{long range order} 
 ($\langle \sigma_x \sigma_y \rangle _\beta > M(\beta) >0$) coincides with the \emph{percolation} transition in a duplicated system of two independent sourceless random currents.  
 
The structure of the higher order correlation functions is expressed in terms of  the Ursell functions $U_n$, of which of the bellwether is 
 \be \label{def_U4}
U_4(x_1,...,x_4) =S_4(x_1,...,x_4)   - 
\left[ \langle 1, 2 \rangle  \langle 3, 4 \rangle + \langle 1, 3 \rangle  \langle 2, 4 \rangle  
+ \langle 1, 4  \rangle  \langle 2, 3 \rangle  \right]
\ee
where $\langle j, k \rangle = S_2(x_j,x_k)$.  If $|U_4(x_1,...,x_4)|/s_4(x_1,...,x_4) \ll 1$, then the correlations are close to Gaussian, at least at the level of $n=4$ (further inequalities imply that from $n=4$  follows the rest~\cite{New75, Aiz82}).  

A simple applications of the switching lemma \eqref{pf:G2} yields: 
\begin{theorem}[\cite{GHS70,Aiz82}]  For any  triplet of sites:
 \be \label{xyz}
\frac{\langle \sigma_x \sigma_y \rangle  \, \langle \sigma_y \sigma_z \rangle}     
{\langle \sigma_x \sigma_z \rangle } \ = \ 
   \mathbb {P}^{\{x,z\}, \emptyset } \big( x 
\stackrel{\n_1+\n_2}{   \longleftrightarrow}  y\big) \,.
\ee
And for any quadruplet of sites  
\begin{subequations} 
\begin{align}
\label{U4a}   
\frac{-1}{2}\, U_4(x_1,...,x_4)  
&=&  \!\!\!S_2(x_1,x_2)  \,  S_2(x_3, x_4) \,\, 
\mathbb {P}^{\{x_1,x_2\}, \{x_3,x_4\} } 
\big(x_1  
\stackrel{\n_1+\n_2}{\longleftrightarrow}  x_3 \big) 
 \\   
&=&   S_4(x_1,...,x_4)  \,  \, \, 
\mathbb {P}^{\{x_1,...,x_4\}, \emptyset } 
\big(  \forall j,k:  \, x_j  
\stackrel{\n_1+\n_2}{\longleftrightarrow}  x_k\, \big)  
\end{align}  
\end{subequations}
\end{theorem}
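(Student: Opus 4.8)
The plan is to deduce both displays from the already-established relation \eqref{pf:G2} together with a single geometric reading of the algebraic indicator $\id_{S}$ of \eqref{eq:xi}. I would first record the dictionary statement: for any current $\m$ and any two sites $a,b$,
\[ \id_{\{a,b\}}[\m]=1 \iff a \stackrel{\m}{\longleftrightarrow} b , \]
where $\stackrel{\m}{\longleftrightarrow}$ denotes connection within $\supp \m$. The forward implication rests on the parity observation that any current $\mathbf{k}$ has an even number of sources in each connected component of its support (every internal edge contributes its flux twice), so $\partial\mathbf{k}=\{a,b\}$ with $\mathbf{k}\le\m$ forces $a$ and $b$ into a common component of $\supp \m$; the reverse implication is witnessed by the unit current along any simple $\m$-path from $a$ to $b$.

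Granting this, \eqref{xyz} is immediate: apply \eqref{pf:G2} with $A=\{x,y\}$ and $B=\{y,z\}$, so that $A\Delta B=\{x,z\}$ and $\sigma_A\sigma_B=\sigma_x\sigma_z$, and rewrite $\id_{\{x,y\}}[\n_1+\n_2]$ through the dictionary as the event $x\stackrel{\n_1+\n_2}{\longleftrightarrow} y$. This turns the right-hand side of \eqref{pf:G2} into precisely the probability claimed in \eqref{xyz}.

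For the quartet identities I would work at fixed total current, exactly as in the proof of Lemma~\ref{lem:switching}. Put $Z:=\sum_{\partial\n=\emptyset}w(\n)$ (the partition function \eqref{eq:8}) and, for $\m$ with $\partial\m=\{x_1,\dots,x_4\}$ and an admissible source split $(A_1,A_2)$, i.e. $A_1\Delta A_2=\partial\m$, let $\mathcal{N}(\m;A_1,A_2)$ be the number of decompositions $\G(\m)=\G(\n_1)\sqcup\G(\n_2)$ with $\partial\n_1=A_1$. By \eqref{eq_combs} each of $S_4$, $\langle 1,2\rangle\langle 3,4\rangle$, $\langle 1,3\rangle\langle 2,4\rangle$, $\langle 1,4\rangle\langle 2,3\rangle$ equals $Z^{-2}\sum_{\partial\m=\{x_1,\dots,x_4\}} w(\m)\,\mathcal{N}(\m;\cdot,\cdot)$ for the sources $(\{x_1,\dots,x_4\},\emptyset)$, $(\{x_1,x_2\},\{x_3,x_4\})$, and the two remaining pairings. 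The parity observation sorts such $\m$ into four exclusive and exhaustive connection patterns of its sources — all four joined (pattern $\mathrm{i}$) or one of the three pairings (patterns $\mathrm{ii},\mathrm{iii},\mathrm{iv}$) — and it forbids any split whose $A_1$ has an odd number of sources in some component, which annihilates the ``wrong'' pairings (e.g. $\mathcal{N}(\m;\{x_1,x_3\},\{x_2,x_4\})=0$ on pattern $\mathrm{ii}$). On the other hand, the fixed-$\m$ content of Lemma~\ref{lem:switching}, namely the equality $\mathcal N_{LHS}=\mathcal N_{RHS}$ in \eqref{switching_LHS}, gives $\mathcal{N}(\m;A_1,A_2)=\mathcal{N}(\m;A_1\Delta S,A_2\Delta S)$ whenever $\id_{S}[\m]=1$, which identifies all surviving counts within a given pattern.

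Tabulating, let $I,II,III,IV$ be the pattern-restricted sums $Z^{-2}\sum_{\m}w(\m)\,\mathcal{N}(\m;\{x_1,\dots,x_4\},\emptyset)$. The two facts above give $S_4=I+II+III+IV$ together with $\langle 1,2\rangle\langle 3,4\rangle=I+II$, $\langle 1,3\rangle\langle 2,4\rangle=I+III$, $\langle 1,4\rangle\langle 2,3\rangle=I+IV$, whence $U_4=S_4-(\langle 1,2\rangle\langle 3,4\rangle+\langle 1,3\rangle\langle 2,4\rangle+\langle 1,4\rangle\langle 2,3\rangle)=-2I$. It then remains to recognize $I$ in the two ways demanded by \eqref{U4a}: among configurations with sources $(\{x_1,x_2\},\{x_3,x_4\})$ only patterns $\mathrm{i},\mathrm{ii}$ occur and $x_1\leftrightarrow x_3$ selects pattern $\mathrm{i}$, so $\langle 1,2\rangle\langle 3,4\rangle\,\mathbb{P}^{\{x_1,x_2\},\{x_3,x_4\}}(x_1\leftrightarrow x_3)=I$; while for sources $(\{x_1,\dots,x_4\},\emptyset)$ the all-connected event is exactly pattern $\mathrm{i}$, so $S_4\,\mathbb{P}^{\{x_1,\dots,x_4\},\emptyset}(\forall j,k:\ x_j\leftrightarrow x_k)=I$. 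Both therefore equal $-\tfrac12 U_4$. The only substantive obstacle is this fixed-$\m$ tabulation, where one must use the parity constraint to kill the off-diagonal splits and the switching bijection to equate the diagonal ones simultaneously; once that is in place, the one-line deduction of \eqref{xyz} and the arithmetic $U_4=-2I$ are routine.
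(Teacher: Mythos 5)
Your proposal is correct and follows the route the paper itself indicates: the first identity is read off from \eqref{pf:G2} with $A=\{x,y\}$, $B=\{y,z\}$ together with the observation that $\id_{\{x,y\}}[\m]$ is the indicator of $x\stackrel{\m}{\longleftrightarrow}y$, and the quartet identities come from the fixed-$\m$ switching count of Lemma~\ref{lem:switching} combined with the even-sources-per-component parity constraint. The paper leaves these details implicit ("a simple application of the switching lemma"), and your tabulation into the four connection patterns with $U_4=-2I$ is exactly the standard way of filling them in.
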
 

Reading \eqref{xyz} it should be noted that under the indicated source conditions 
within the multigraph $\G(\n_1+\n_2)$  
there exists at least one (typically many more)   simple path from $x$ to $z$.  The above ratio of the correlation functions is identified here with the probability that there is such a path which on its way passes through $y$.    

In \eqref{U4a}   the  source conditions imply that 
$x_1  \stackrel{\n_1+\n_2}{   \longleftrightarrow}  x_2$ and $x_3  
\stackrel{\n_1+\n_2}{   \longleftrightarrow}  x_4$.  The Ursell function is then the conditional probability that 
the paths pairing the sources are inter connected in $\G(\n_1+\n_2)$.

These observations, give rise to a stochastic geometric picture of  the propagation of spin correlations through source-pairing paths superposed over, or rater mixed in, a random set of  potentially overlapping  loops.

\section{RCR based path expansions} \label{sec:path}

To sharpen the above observation about the conservation of parity into a working tool, in \cite{Aiz82,AizGra83,Aiz85}
 it was found useful to extract from the RCR expression for  correlation functions  
 a ``backbone'' of paths supported on edges of odd flux,  leaving the rest as a random sourceless flux configuration.
 
The  decomposition of the random current's flux into an odd ``backbone'' and loops is not unique.   
In \cite{Aiz82} the backbone paths were extracted by starting from a randomly selected source,  then proceeding through one of the 
 edges of odd 
$\n_b$ which were not yet traversed, selected at random, and continuing in this fashion till another source is reached.    If by that time not all sources were paired,  the next  one will be selected at random and the construction continued.  The construction stops once the source set is fully paired.

In subsequent works it was found more convenient to base the path selection on a predetermined ranking of the sites and (separately) edges of the finite graph.  We retain here  this choice.    

In this manner one gets  a representation of the form
\be 
\langle \sigma_{x_1} ... \sigma_{x_{2n}} \rangle \ = \  \sum_{\mbox{pairings}} \rho(\gamma_1, \gamma_2,..., \gamma_n)
\ee
where $\{\gamma_j\}_{j=1}^n$ are paths by which the $2n$ sources are paired, each path expressed as a sequence of concatenated edges $b$.  (To avoid confusion with Euler's constant $e$, we use  $b$ as the generic symbol for edges.)

The weights $\rho(\gamma_1, \gamma_2,..., \gamma_n)$ are obtained by summing the Gibbs  weights $w(\n)$ of random currents which yield the  given collection of backbone paths (under the preselected ranking of sites and edges).     

In any such deterministic algorithm the specificaiton of $\gamma$ implies not only that $\n$ is odd along the edges covered by $\gamma$, but also that its values are even at every edge which should have been been included in $\gamma$ had that not been the case.  We denote by $\widehat \gamma$ the collection of edges of both type.   

It  follows~(cf. \cite{Aiz85}) that under the deterministic algorithm the weights, for sequences of paths in $\Lambda$, are of the form
\be
  \rho(\gamma_1, \gamma_2,..., \gamma_n) \ = \  I[\gamma_1, \gamma_2,..., \gamma_n] \ \zeta_\Lambda(\gamma_1, \gamma_2,..., \gamma_n) 
  \ \prod_{j=1}^n \prod_{b\in \gamma_j} \tanh(\beta J_b) 
 \ee 
where $ I[\gamma_1, \gamma_2,..., \gamma_n] $ is a binary indicator function which is $1$ if the paths are consistent with the constraints which are naturally associated with the path selection algorithm, and    
\be \label{eq_zeta}
\zeta(\gamma_1, \gamma_2,..., \gamma_n) \ = \  \frac{Z'_\Lambda(\widehat \gamma_1, \widehat\gamma_2,..., \widehat \gamma_n)}{Z_\Lambda} \  \prod_{j=1}^n \prod_{b\in \widehat \gamma_j} \cosh(\beta J_b) 
\ee 
where  $Z'_\Lambda(\widehat \gamma_1, ..., \widehat \gamma_n)$ is the partition function under a depleted interaction, obtained by setting  $J_b=0$ for all the edges $b\in \cup_{j=1}^n \widehat \gamma_j$.

\begin{thm}  For the random walk expansion based on the deterministic algorithm for the selection of RCR backbone paths the following holds  regardless of the order of sites and edges which guides the path selection: 
\begin{enumerate}  \label{thm:path_exp}
\item[i)] Denoting by $ \gamma_1\circ \gamma_2$ the concatenation of  paths, in case the second starts from the endpoint of the first, one has 
\be \label{RW_i}
\rho(\gamma_1\circ \gamma_2)  \ = \  \rho(\gamma_1, \gamma_2)
\ee
(which is order dependent, i.e. need not equal $\rho(\gamma_2, \gamma_1)$) 
\item[ii)] The partial sums of $\rho(\gamma_1,...,\gamma_n)$ at specified sequence of end points $\{x_1,...,x_{2n}\}$
satisfy
\be \label{RW_ii}
\sum_{\gamma_n: x_{2n-1} \mapsto x_{2n}} \rho(\gamma_1, \gamma_2,...,\gamma_n)   \ = \ \rho(\gamma_1,...,\gamma_{n-1}) \  
\langle \sigma_{x_{2n-1}} \sigma_{x_2} \rangle_{( \cup_{j=1}^{n-1} \widehat \gamma_j)^c} 
\ee 
where
the subscript indicates that the spin-spin correlation is to be computed for the weakened interaction, obtained by setting $J$ to $0$ on the edges of 
$\{\widehat \gamma_1, ..., \widehat \gamma_{n-1}\}$. 

\item[iii)] For any path in $\Lambda$,  $\zeta_\lambda(\gamma) \leq 1$, and at fixed $\gamma$ the  function is decreasing in $\Lambda$.
\item[iv)]  The weight $\zeta_\Lambda(\gamma_1, \gamma_2,..., \gamma_n) $ is ``super-multiplicative'' under splits of the sequence.  In particular 
\be \label{RW_iv}
\zeta_\Lambda(\gamma_1, \gamma_2)  \geq \zeta_\Lambda(\gamma_1) \ \zeta_\Lambda(\gamma_2)
\ee   
\end{enumerate}

\end{thm}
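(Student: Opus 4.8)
The plan is to route everything through one reformulation of the geometric weight $\zeta$. Writing $t_b := \tanh(\beta J_b)$ and using $e^{\beta J_b\sigma_x\sigma_y} = \cosh(\beta J_b)\,(1+t_b\,\sigma_x\sigma_y)$, the high-temperature expansion gives $Z_\Lambda = \big(\prod_{b\in\E}\cosh(\beta J_b)\big)\,\tilde Z_\Lambda$ with $\tilde Z_\Lambda := \sum_{\omega}\prod_{b\in\omega}t_b$, the sum running over the \emph{even subgraphs} $\omega$ (those with all vertex degrees even). Since setting $J_b=0$ on $\widehat\gamma$ is the same as deleting those edges, the depleted partition function satisfies $Z'_\Lambda(\widehat\gamma)=Z_{\Lambda\setminus\widehat\gamma}$, and the $\cosh$ prefactors in \eqref{eq_zeta} cancel exactly, leaving the clean identity
\[
\zeta_\Lambda(\gamma) \;=\; \frac{\tilde Z_{\Lambda\setminus\widehat\gamma}}{\tilde Z_\Lambda}
\;=\; \mathbb{P}_\Lambda\big(\omega\cap\widehat\gamma=\emptyset\big),
\]
the probability that the random even subgraph avoids $\widehat\gamma$. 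Telescoping the same ratio along the chain $\widehat\gamma_1,\ \widehat\gamma_1\cup\widehat\gamma_2,\ \dots$ yields the multiplicative law
\[
\zeta_\Lambda(\gamma_1,\dots,\gamma_n)\;=\;\prod_{j=1}^n \zeta_{\Lambda^{(j-1)}}(\gamma_j),
\qquad \Lambda^{(j-1)}:=\Lambda\setminus\bigcup_{i<j}\widehat\gamma_i,
\]
which, together with the trivial factorization of the $\tanh$-product, is the backbone of items (ii)--(iv).

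Item (iii) is then almost free. Boundedness $\zeta_\Lambda(\gamma)\le 1$ holds because $\tilde Z_{\Lambda\setminus\widehat\gamma}$ is a subsum, over even subgraphs avoiding $\widehat\gamma$, of the manifestly nonnegative terms comprising $\tilde Z_\Lambda$. For monotonicity in $\Lambda$ I would add the new edges one at a time: for a single new edge $b_0=\{u_0,v_0\}\notin\widehat\gamma$ one has $\tilde Z_{\Lambda'}=\tilde Z_\Lambda+t_{b_0}A$ and $\tilde Z_{\Lambda'\setminus\widehat\gamma}=\tilde Z_{\Lambda\setminus\widehat\gamma}+t_{b_0}A'$, where $A$ (resp.\ $A'$) sums $\prod t_b$ over subgraphs of $\Lambda$ (resp.\ of $\Lambda\setminus\widehat\gamma$) with odd degree exactly at $u_0,v_0$. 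By the high-temperature representation of the two-point function, $A/\tilde Z_\Lambda=\langle\sigma_{u_0}\sigma_{v_0}\rangle_\Lambda$ and $A'/\tilde Z_{\Lambda\setminus\widehat\gamma}=\langle\sigma_{u_0}\sigma_{v_0}\rangle_{\Lambda\setminus\widehat\gamma}$, so a one-line cross-multiplication shows that $\zeta_{\Lambda'}(\gamma)\le\zeta_\Lambda(\gamma)$ is \emph{equivalent} to $\langle\sigma_{u_0}\sigma_{v_0}\rangle_{\Lambda\setminus\widehat\gamma}\le\langle\sigma_{u_0}\sigma_{v_0}\rangle_\Lambda$. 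The latter is exactly Griffiths' second inequality---deleting ferromagnetic bonds cannot increase a two-point function---and iterating over the added edges gives the claim.

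Item (iv) then follows with no new estimate. The multiplicative law for $n=2$ reads $\zeta_\Lambda(\gamma_1,\gamma_2)=\zeta_\Lambda(\gamma_1)\,\zeta_{\Lambda\setminus\widehat\gamma_1}(\gamma_2)$; since $\Lambda\setminus\widehat\gamma_1$ is a smaller graph than $\Lambda$, monotonicity (iii) gives $\zeta_{\Lambda\setminus\widehat\gamma_1}(\gamma_2)\ge\zeta_\Lambda(\gamma_2)$, whence $\zeta_\Lambda(\gamma_1,\gamma_2)\ge\zeta_\Lambda(\gamma_1)\zeta_\Lambda(\gamma_2)$. Supermultiplicativity under an arbitrary split of the sequence is obtained the same way, applying (iii) termwise to the telescoped product.

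Items (i) and (ii) concern the full weight $\rho=I\cdot\zeta\cdot\prod\tanh$ and are where the deterministic-algorithm bookkeeping must be handled. For (i), the walk traced by the algorithm from the start of $\gamma_1$ to the end of $\gamma_2$ through their shared endpoint is literally the same set of committed edges whether one labels it as the single path $\gamma_1\circ\gamma_2$ or as the consecutive pair $(\gamma_1,\gamma_2)$; hence $\widehat{\gamma_1\circ\gamma_2}=\widehat\gamma_1\cup\widehat\gamma_2$, the $\tanh$-products agree, $\zeta$ agrees by the identity above, and the two descriptions select the \emph{same} family of currents $\n$, so the summed weights coincide (the asymmetry under $(\gamma_1,\gamma_2)\mapsto(\gamma_2,\gamma_1)$ reflects that reversing the order starts the trace from a different ranked source). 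For (ii) I would first establish the factorization $\rho(\gamma_1,\dots,\gamma_n)=\rho(\gamma_1,\dots,\gamma_{n-1})\,\rho_{\Lambda^{(n-1)}}(\gamma_n)$, where the last factor is the single-path weight computed in the weakened model $\Lambda^{(n-1)}$: the $\tanh$ and $\zeta$ parts factor by the two displays above, and the indicator $I$ factors because the constraints the algorithm imposes on $\gamma_n$, given the earlier paths, are exactly those it would impose when run from scratch on $\Lambda^{(n-1)}$. Summing over $\gamma_n$ from $x_{2n-1}$ to $x_{2n}$ and invoking the backbone expansion of the two-point function in the weakened model, $\sum_{\gamma_n}\rho_{\Lambda^{(n-1)}}(\gamma_n)=\langle\sigma_{x_{2n-1}}\sigma_{x_{2n}}\rangle_{\Lambda^{(n-1)}}$, then delivers \eqref{RW_ii}. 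The main obstacle throughout is precisely this last point: verifying that the committed-edge set $\widehat\gamma$ and the consistency indicator $I$ generated by the deterministic selection rule factor cleanly across a split---that conditioning on the already-traced paths reduces the rule to the same rule on the depleted graph. Everything analytic (boundedness, monotonicity, supermultiplicativity) is by contrast a short consequence of the even-subgraph identity and Griffiths' inequality.
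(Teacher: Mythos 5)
Your proposal is correct and takes essentially the same route as the paper, which disposes of (i)--(ii) as immediate from the path-selection rule and of (iii)--(iv) via Griffiths' first and second inequalities together with the observation that ``it is easier to weaken an already weakened interaction.'' Your even-subgraph identity $\zeta_\Lambda(\gamma)=\tilde Z_{\Lambda\setminus\widehat\gamma}/\tilde Z_\Lambda$, the edge-by-edge reduction of the monotonicity in $\Lambda$ to Griffiths II, and the telescoping factorization behind (ii) and (iv) simply supply the details that the paper's two-sentence sketch leaves implicit.
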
 

Properties $i)$ and $ii)$ follow readily from the path definition, $iii)$ and $iv)$ are consequences of the Griffiths first and second inequality.   In essence:  aside from  the strictly multiplicative factor in \eqref{eq_zeta} the ratio $Z'/Z$ which appears there represents the multiplicative effect of weakening the interaction.  As it was put in \cite{Aiz85} (eq. (3.17) there),  Griffiths second inequality implies that for the ferromagnetic spin models it is easier to weaken an already weakened interaction.  

It is useful to note  that, as a consequence of the above, the path weights satisfy  the dichotomy:
\be \label{eq_alt} 
\rho(\gamma_1,\gamma_2)    
\begin{cases}  
\geq \rho(\gamma_1) \rho(\gamma_2) & \mbox{if the paths  form a compatible sequence} \\[1ex]  
0 & \mbox{otherwise} 
\end{cases} 
\ee

\noindent Two examples of these tools utility: 
 
1)  The backbone expansion and the dichotomy \eqref{eq_alt} allowed to establish that
 the Ursell function
satisfies the tree-diagram  bound 
\be \label{tree_diag}
|U_4(x_1,x_2,x_3,x_4)|  \, \leq \,  2\,   \sum_{u\in \mathcal V} \,  \prod_{j=1}^4\, 
 \langle \sigma_{x_j} \sigma_{u} \rangle  \,.   
\ee
Combined with the information obtained from the infrared Gaussian-domination bound of \cite{FILS78}, 
\eqref{tree_diag} 
played a key role in the proof that the Ising model's scaling limits are Gaussian in dimensions $d>4$ \cite{Aiz82}  (Part II), and the more recent extension of the result to $d=4$ \cite{AizDum21}.  

2)  The first two properties, $i)$ and $ii)$, enable optional stopping arguments which are natural for path expansion.  Their utility is demonstrated  below in an  alternative proof of  Simon-Lieb inequalities (Theorem~\ref{SL_ineq}).

   \subsection{A constructive exchange} \mbox{} 
   
The partial similarity between  the stochastic geometry underlying the Ising correlation and percolation has led to a  constructive dialog.  RCR-based relations bounds on the multi spin correlation functions have inspired analogous, yet somewhat different, bounds for the connectivity functions of independent percolation models, reachable by other methods \cite{Aiz82,AizNew84}, and related but different criteria for the upper-critical dimension.   That was followed by proofs of  sharpness of the phase transitions and mean-field  critical exponent bounds, but here progress proceeded in the  reverse order  \cite{AizBar87, ABF87}.  
Though only partial, the overlap in the results raised the question whether the two models can be handled simultaneously.  In time, this challenge was   reached (though still partially) through yet another technique~\cite{DumTas15}  which is applicable to both, and beyond - to the broader class of FK models that is presented next.

\section{The Fortuin Kasteleyn random cluster representation}  \label{sec:FK}

 \subsection{The FK representation} \mbox{}

As a tool for a stochastic geometric perspective on the Ising models's Gibbs states 
the \emph{random current} representation competes for our attention with the 
 Fortuin-Kasteleyn (FK) \emph{random cluster} representation~\cite{ForKas72,  Gri06}.      
 Each  reveals a stochastic geometric underpinning of  the spin correlations, though in a somewhat different manner.   
Before commenting further on the relation between the two, which is of relevance 
 for our discussion in Section~\ref{sec:deconf} let us briefly recall the Fortuin-Kasteleyn construction.

The FK representation starts from the identity
\be \label{FK_standard}
e^{\beta J_{u,v}  \sigma_u \sigma_v } = e^{\beta J_{u,v}} \left[ p_{u,v} \cdot  \delta_{\sigma_u\sigma_v, 1} + (1-  p_{u,v} ) \cdot 1 \right]
\ee
with $ p_{u,v}  = 1- e^{-2\beta J_{u,v}}$. 
One may read in this a polarization of the Gibbs factor associated to the edge $(x,y) \in \mathcal E$
into a linear superposition of two term, under the one with $\delta_{\sigma_u\sigma_v, 1} $ the spins are constrained to agree and under the other they do not interact at all:

For the  Ising model's partition function 
under the two-body Hamiltonian \eqref{H_no_h} this yields  
\begin{multline}\label{eq:18}
Z_{\Lambda,\beta}(J)  :=  \frac{1}{2^{|\Lambda|} } \sum_{\sigma: \V\to \{-1,1\}} 
\prod_{(x,y)\in \E} e^{\beta J_{x,y}\sigma_x\sigma_y) }  =
  \\   
  =  \frac{e^{\beta \sum_{b\in \mathcal E} J_b}}{2^{|\Lambda|} }  
  \sum_{\substack{{\sigma: \V\to \{-1,1\}} \\ \omega \subset \mathcal E}}\,\, 
\prod_{b\in \omega}  p_b 
\prod_{b\in \mathcal E \setminus \omega} (1-p_b)  \cdot 
\id_\omega \left[\sigma \right]  
\end{multline}
where $b$ and $(x,y)$ are used interchangeably to denote the graph's edges,  and the last factor  is the indicator function
\be\notag
\id_\omega \left[\sigma \right] = \begin{cases} 
1 & \mbox{if $\sigma_x$ is constant on each of the $\omega$-connected clusters} \\ 
0 & \mbox{otherwise} 
\end{cases}
\ee

Summing over the spin degrees of freedom one gets:
\be\label{eq:28}
Z_{\Lambda,\beta}(J,h)  
=  \frac{e^{\beta \sum_{b\in \mathcal E} J_b}}{2^{|\Lambda|} }  
  \sum_{ \omega \subset \mathcal E}\,\, 
\prod_{b\in \omega}  p_b 
\prod_{b\in \mathcal E \setminus \omega} (1-p_b)  \cdot 
q^{N(\om)}
\ee  
at $q=2$, where $N(\om)$ is the number of $\m$-connected clusters in the corresponding decomposition of $\mathcal V$, and  $q$ is the number of states of each spin.  

Associated with any decomposition of the partition function into a sum of positive terms is a probability distribution of the relevant degrees of freedom, that is obtained by normalizing the summed weights by the factor $1/Z_{\Lambda,\beta}$.    The FK joint probability distribution of the spin and edge variables $\{ \sigma_x\}_{x\in \mathcal V}$ and $\{ \omega _b\}_{b\in \mathcal E}$, 
 can be read in this manner from \eqref{eq:18},  and the marginal distribution of just the edges can be read from \eqref{eq:28}.
 The latter forms the Fortuin-Kasteleyn random cluster model.  

A  notable features  of the joint distribution of  $\{ \sigma_x\}_{x\in \mathcal V}$ and $\om=\{ \omega _b\}_{b\in \mathcal E}$    
is the simplicity of  the conditional distribution of each of these two sets of variables conditioned on the other:   

i) Conditioned on $\om$, the spins $\sigma_x$ are constrained to ``vote'' as cliques: full agreement within each $\om$-connected cluster, no correlation between different clusters. 

ii) Conditioned on the spin configuration, the edge variables $\om$ are distributed independently of each other, subject only to the constraint that $n((x,y)) =0$ 
where  $\sigma_x\neq \sigma_y$.

Partial summation over $\{ \sigma_x\}$, and normalization of the terms in the resulting \eqref{eq:28},  leads to a  joint distribution of the edge variable $\om$ which is known as the ``$q$-random cluster model'', in this case at $q=2$ (\cite{ForKas72,  Gri06}).

From the above description of the conditional distribution of the spins under $\om$ one readily concludes that for any pair of sites $x,y \in \mathcal V$:  
\be \label{FK:XtoY}
\langle \sigma_x \sigma_y \rangle =    \mathbb {P}^{\text{FK}(2)}_{\Lambda, \beta} \big( x 
\stackrel{\omega}{   \longleftrightarrow}  y\big) \,.
\ee 
with $\mathbb {P}^{FK} $ the marginal probability distribution of the edge variables $\om$ corresponding to \eqref{eq:28}.  

A comparison of   \eqref{FK:XtoY} and  RCR's \eqref{eq:XtoY} leads to the  striking observation that 
the Ising model's onset of \emph{long range order} coincides with a pair of  symultaneous with both 
the  double random current's percolation transition, and the percolation transition in the corresponding FK random cluster model. 
However along with that one may also notice  a curious difference:   the point-to-point connection probabilities for the FK clusters and those 
for the doubled RCR are not the same.  The relation between the two (linked through  $\langle \sigma_x \sigma_y \rangle_{\Lambda, \beta}$) is: 
\be
\mathbb {P}^{\text{FK}(2)}_{\Lambda, \beta} \big( x \stackrel{\omega}{   \longleftrightarrow}  y\big) = 
\sqrt{\mathbb {P}^{\emptyset, \emptyset}_{\Lambda, \beta} \big( x \stackrel{\n_1+\n_2}{   \longleftrightarrow}  y\big)} \, , 
\ee 

\subsection{Comparison of the methods}  \mbox{}  

The FK random cluster representation places the Ising model within a  family of interactive percolation models indexed by the continuous parameter $q$.  Among those are the independent percolation ($q=0$), Ising ($q=2$), and other $q$-state spin models ($q\in \mathbb N$).
The critical behavior in these models is not $q$-independent, however certain arguments apply for a broader range of this parameter than just $q=2$.  

Among the useful tools related to  the FK representation one finds:
\begin{itemize}
 \item The Swendsen-Wang  algorithm~\cite{SweWan87} (which Edwards-Sokal~\cite{EdwSok88} extended to  other systems)  for numerical simulation of the Gibbs state though 
a sequence of projections onto corresponding conditional distributions, conditioned on $\om$ and $\sigma$, alternatively.

\item The Fortuin-Kasteleyn-Ginibre (FKG) inequality~\cite{FKG71}, which implies  that monotone increasing functions of $\om$ are positively correlated, in the sense that under the above probability any pair of functions $F,G: \mathcal E \to \R$ which are coordinate-wise monotone increasing  are positively correlated: 
\be
\mbox{$F\nearrow$ and $G \nearrow$}  \Rightarrow \quad 
\big\langle F G \big\rangle^{\text{FK}(2)}_{\Lambda, \beta} \geq   \big \langle F \big\rangle^{\text{FK}(2)}_{\Lambda, \beta}  \, \big\langle G \big\rangle^{\text{FK}(2)}_{\Lambda, \beta} 
\ee

\item Variance bounds based on  algorithmic exploration (cf. \cite{OSSS05, DRT17}. 

\end{itemize} 

To which one may add that the different FK(q) probability distributions can be related  through the  FKG  inequalities, which  allow to  compensate for the difference in $q$ through suitable adjustments of $\beta$~\cite{FKG71, ACCN88}.  
(The last reference includes a notable result for which this relation was put to use).

Compared to the FK \emph{random cluster} representation the more specifically Ising-model's RCR \emph{random current} representation offers  
a sharper tool for  features  in which the parity conservation  plays a role (in consequence of the Ising model's spin flip symmetry).  Related to that is the difference between the Ising model's $\phi^4$ rather than percolations's $\phi^3$ type diagrammatics, which are linked to   the differences in the critical exponents and the upper critical dimensions ($d=4$ for Ising and $d=6$ for percolation).   

Thus, the RCR method within its narrower range offers some advantage. 
 The difference is amplified by the availability of  reflection positivity arguments for Ising models~\cite{FILS78}, the likes of which have not been found for percolation and non-integer $q>1$ valued FK(q) models.   
However,  when both techniques are applicable, which is not always the case, the arguments are sometimes simpler in their  FK version.    An example of that can be found below, in Section~\ref{sec:deconf}.   

 \section{Geometric representations' adjustment to frustration}  \label{sec:frustration} 

\subsection{Frustration}  \mbox{}

Frustration is said to be present when not all the term in the Hamiltonian can be minimized simultaneously.  
In the case of  
\be\label{H_only_J}
H = - \sum_{\{x,y\}\in \mathcal E} J_{x,y} \sigma_x \sigma_y 
\ee 
with the free  or locally  wired boundary conditions,    
frustration is present  if for some loop of edges the product of the coupling coefficients is negative.   
 
 Frustration  may also result  from the inclusion of  external fields or  boundary conditions of mixed signs.   For a unified treatment,  the latter two cases can be turned into \eqref{H_only_J}
through the Griffiths's ghost spin trick (recalled below). 

\subsection{Frustration-adjusted FK measures}  \mbox{} 

The FK representation's  adaptation to frustration bearing Hamiltonians was presented, and discussed in the context of spin glass models, by  Y Kasai and A. Okiji~\cite{KasOki88}, and Gandolfi-Kean-Newman~\cite{GKN92}[Ch. 4]. 
For such an extension,  in the case of antiferromagnetic (negative) couplings the polarizing decomposition \eqref{FK_standard}  is  changed to read:
\be \label{FK_mixed}
e^{\beta J_{u,v}  \sigma_u \sigma_v } = e^{\beta |J_{u,v}|} \left[ p_{u,v} \cdot  \delta_{\sigma_u\sigma_v, \, \text{sgn}(J_{u,v})} + (1-  p_{u,v} ) \cdot 1 \right]
\ee
with $ p_{u,v}  = 1- e^{-2\beta |J_{u,v}|}$.  

Under this convention  
 $p_{u,v} \in [0,1]$  regardless of the coupling's sign.   
The sign of $J_{u,v}$  affects the constraint in $\delta$, which in the \emph{antiferromagnetic} case requires the spins to \emph{disagree}.

The corresponding extension of \eqref{eq:18} and \eqref{eq:28} under the  coupling ${\bf J} =\{J_b\}_{b\in \mathcal E}$ is
\begin{multline}\label{eq:18mod}
Z_{\Lambda,\beta}(J,h) =    
   \frac{e^{\beta \sum_{b\in \mathcal E} |J_b|}}{2^{|\Lambda|} }  
  \sum_{\substack{{\sigma: \V\to \{-1,1\}} \\ \omega \subset \mathcal E}}\,\, 
\prod_{b\in \E}  p_b 
\prod_{b\in \E \setminus \omega} (1-p_b) \cdot 
\id^{\bf J}_\omega \left[\sigma \right]   
\end{multline}
with the {\bf J}-adapted  constraint: 
\be\label{constraint}
\id^{\bf J}_\omega \left[\sigma \right] = \begin{cases} 
1 & J_{u,v} \sigma_u \sigma_v \geq 0 \quad \mbox{for all  $(u,v)$ with $\omega_{(u,v)} =1$} \\ 
0 & \mbox{otherwise} 
\end{cases} \,. 
\ee

For a given edge configuration $\omega \subset \mathcal E$  sites $u$ and $v$ are said to be $\omega$-connected if there is a path linking the two along edges of $\omega$.    

\begin{definition}  \label{def:FF_FK}
For a system with a  Hamiltonian of the form \eqref{H_only_J} an edge configuration $\omega\subset \mathcal E$ is said to be {\bf J}-\emph{frustration-free} (FF) if  
closed loops supported on $\omega$ include only even numbers of edges with $J_b<0$. \\  
The indicator function of this condition is denoted here $\id^{\bf J}_{FF}[\omega]$.    
\end{definition}

This terminology allows to define the following $\{-1,0,1\}$-valued relative parity function  
\be  \label{sgn_omega}
 \sgn_{\mathcal J}(u, v ; \omega) = 
 \begin{cases}  
 \mbox{the parity  of $\omega$-connection}  & \mbox{if $\omega$ is FF and $u \stackrel{\omega}{\longleftrightarrow} v $}\\[1ex]  
     0   & \mbox{otherwise}    
     \end{cases}   \, 
\ee
where the parity  of $\omega$-connection is  the $\pm1$ parity of the numbers of 
negative edges ($J_b<0$) along paths which $\omega$-connect the specified pair of sites $\{u, v\}$. 
Consistency of this definition for all $u,v\in \mathcal V$ is equivalent to $\omega$ being $\bf J$-frustration free.  

In these terms one has the following generally known extension of the Fortuin-Kasteleyn representation to systems bearing frustration.

 \begin{theorem}[Frustration-adjusted FK]  For finite Ising systems with a Hamiltonian of the form \eqref{H_only_J} (not necessarily ferromagnetic) the partition function and  the corresponding Gibbs correlation function admit the following representation 
 \begin{eqnarray}  \label{FK_mod_Z}
Z_{\Lambda,\beta}(J,h) &=&  
Z_{\Lambda,\beta}(|{\bf J}|,h) \, \cdot \, \mathbb P_{\Lambda,\beta}^{\text{FK}(2)}
\left( \mbox{$\omega$ is $\bf J$-frustration free} 
\right)
\end{eqnarray}  
\begin{eqnarray}\label{FK_mod_corr}
\langle \sigma_u \sigma_v \rangle_{\Lambda,\beta}^{\mathcal J}    
&=&   
\frac
{\mathbb E_{\Lambda,\beta}^{\text{FK}(2)}
\left(\sgn_{\mathcal J}(u, v ; \omega) \right)}
{\mathbb P_{\Lambda,\beta}^{\text{FK}(2)}
\left( \mbox{$\omega$ is $\bf J$-frustration free} 
\right)} \notag \\[2ex]  &=& 
\mathbb E_{\Lambda,\beta}^{\text{FK}(2)}
\left(\sgn_{\mathcal J}(u, v ; \omega) \, \big| \, \mbox{FF} \right).
\end{eqnarray}
where the probabilities and expectation values refer to the FK random cluster measure of the modified ferromagnetic interaction ${\bf |J|} = \{ |J_b|\}_{b\in \mathcal E}$  
\end{theorem}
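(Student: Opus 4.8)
The plan is to start from the frustration-adjusted polarization \eqref{eq:18mod} and to carry out the spin summation at \emph{fixed} edge configuration $\omega$, exactly as in the ferromagnetic passage from \eqref{eq:18} to \eqref{eq:28}, the only new feature being the sign-sensitive constraint \eqref{constraint}. Since an external field is converted into a pair coupling to the ghost spin by the Griffiths trick recalled above, it suffices to treat the pure pair Hamiltonian \eqref{H_only_J}, so I take $h=0$ on a possibly augmented graph throughout.

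The decisive step is a consistency dichotomy for the constraint $\id^{\bf J}_\omega[\sigma]$. For fixed $\omega$ this factor forces $\sigma_x\sigma_y=\sgn(J_{x,y})$ across every edge of $\omega$; within one $\omega$-cluster it therefore determines every spin from a reference vertex by propagating $\sigma_y=\sigma_x\,\sgn(J_{x,y})$ along a connecting path. Two paths joining the same pair agree precisely when every $\omega$-loop carries an even number of negative edges, i.e.\ when $\omega$ is ${\bf J}$-frustration free in the sense of Definition~\ref{def:FF_FK}. Hence a frustration-free $\omega$ admits exactly two admissible spin values per cluster and a frustrated $\omega$ admits none, which I record as
\[
\sum_{\sigma:\V\to\{-1,1\}} \id^{\bf J}_\omega[\sigma] \;=\; 2^{N(\omega)}\,\id^{\bf J}_{FF}[\omega]\,.
\]
Substituting this into \eqref{eq:18mod} and comparing the surviving sum with \eqref{eq:28} at $q=2$ for the ferromagnetic coupling $|{\bf J}|$ (whose FK weight is $\prod_{b\in\omega}p_b\prod_{b\notin\omega}(1-p_b)\,2^{N(\omega)}$, with the same $p_b=1-e^{-2\beta|J_b|}$) gives \eqref{FK_mod_Z} at once: the leftover indicator $\id^{\bf J}_{FF}[\omega]$, normalized by $Z_{\Lambda,\beta}(|{\bf J}|)$, is the FK$(2)$ probability of the frustration-free event.

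For the correlation \eqref{FK_mod_corr} I would insert the source factor $\sigma_u\sigma_v$ into the same summation, applying \eqref{FK_mixed} edge by edge to $2^{-|\Lambda|}\sum_\sigma \sigma_u\sigma_v\prod_b e^{\beta J_b\sigma_x\sigma_y}$. The case analysis now refines in three ways: a frustrated $\omega$ still contributes nothing; for a frustration-free $\omega$ with $u,v$ in distinct clusters the free reference sign of either cluster makes the spin sum vanish; and for $u\stackrel{\omega}{\longleftrightarrow}v$ the value $\sigma_u\sigma_v$ equals the path parity $(-1)^{\#\{\text{negative edges}\}}$, common to all $2^{N(\omega)}$ admissible assignments. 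These three outcomes are exactly what \eqref{sgn_omega} encodes, so
\[
\sum_{\sigma:\V\to\{-1,1\}} \sigma_u\sigma_v\,\id^{\bf J}_\omega[\sigma] \;=\; 2^{N(\omega)}\,\sgn_{\mathcal J}(u,v;\omega)\,.
\]
Dividing by the partition-function sum, then using \eqref{FK_mod_Z} to replace $Z_{\Lambda,\beta}(|{\bf J}|)/Z_{\Lambda,\beta}(J,h)$ by $1/\mathbb P^{\text{FK}(2)}_{\Lambda,\beta}(\mathrm{FF})$, yields the first equality of \eqref{FK_mod_corr}; the second is immediate because $\sgn_{\mathcal J}$ is supported on the frustration-free event, so the ratio is the conditional expectation given FF.

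The one genuinely load-bearing point is the consistency dichotomy, namely that a coherent $\pm1$ labeling of each $\omega$-cluster compatible with all edge constraints exists \emph{iff} $\omega$ is frustration free, equivalently that the well-posedness of the relative-parity function \eqref{sgn_omega} coincides with the FF condition. This is the elementary $\Z_2$ statement that a signing of the edges is a vertex coboundary iff its product around every cycle is trivial; everything else is the bookkeeping of FK weights already in place from \eqref{eq:28}. A minor point to handle with care is the ghost-spin reduction for the field case, so that the representation with argument $(J,h)$ is genuinely covered by the pure-coupling computation.
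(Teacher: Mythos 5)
Your proposal is correct and follows essentially the same route as the paper: sum over the spins at fixed $\omega$, observe that the constraint $\id^{\bf J}_\omega[\sigma]$ admits exactly two configurations per $\omega$-cluster when $\omega$ is frustration free and none otherwise (yielding the factor $2^{N(\omega)}\,\id^{\bf J}_{FF}[\omega]$, and $2^{N(\omega)}\sgn_{\mathcal J}(u,v;\omega)$ with the source insertion), then normalize by $Z_{\Lambda,\beta}(|{\bf J}|)$ to read off the FK$(2)$ probability and conditional expectation. Your explicit statement of the consistency dichotomy and the ghost-spin reduction for the field term are welcome elaborations of what the paper leaves to "the considerations explained next to Definition~\ref{def:FF_FK}", but they do not constitute a different argument.
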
 
\begin{proof} 
Starting from \eqref{eq:18mod} we note that the  constraint \eqref{constraint} restricts the values of spins in a manner which cannot be satisfied unless $\omega$ is frustration free, and if it is then it allows exactly two spin configurations within each $\omega$-connected cluster.  Hence, the sum  over the spins yields
\be
Z_{\Lambda,\beta}(J,h) =  \frac{e^{\beta \sum_{b\in \mathcal E} |J_b|}} {2^{|\Lambda|} }  
  \sum_{ \omega \subset \mathcal E}\,\, 
\prod_{b\in \omega}  p_b \prod_{b\in \mathcal E\setminus \omega} (1-p_b) \, 
2^{N(\om)}  \cdot \id^{\bf J}_{FF}[\omega]   \,.
\ee 
Without the indicator function factor the sum equals the partition function of the modified ferromagnetic interaction.  Factoring that out the sum is naturally recast  in the form of \eqref{FK_mod_Z}.

The spin-spin correlation function is given by a ratio of the sum of partition function terms multiplied by $\sigma_u \sigma_v$, over the pure partition function.   The considerations explained next to Definition~\ref{def:FF_FK}, lead to 
\begin{multline} 
\langle \sigma_u \sigma_v \rangle_{\Lambda,\beta}^{\mathcal J}     Z_{\Lambda,\beta}(J,h) =  \\  
= \frac{e^{\beta \sum_{b\in \mathcal E} |J_b|}} {2^{|\Lambda|} }  
  \sum_{ \omega \subset \mathcal E}\,\, 
\prod_{b\in \omega}  p_b \prod_{b\in \mathcal E\setminus \omega} (1-p_b) \, 
2^{N(\om)}  \cdot \id^{\bf J}_{FF}[\omega]   \cdot  \sgn_{\mathcal J}(u, v ; \omega)  \\  
= 
Z_{\Lambda,\beta}(|{\bf J}|,h) \, \cdot \, \mathbb E_{\Lambda,\beta}^{\text{FK}(2)}
\left( \id^{\bf J}_{FF}[\omega]   \cdot 
\sgn_{\mathcal J}(u, v ; \omega) 
\right)
\, . 
\end{multline} 
Combined with \eqref{FK_mod_Z} this transforms into
\be 
\langle \sigma_u \sigma_v \rangle_{\Lambda,\beta}^{\mathcal J}     = 
\frac
{\mathbb E_{\Lambda,\beta}^{\text{FK}(2)}
\left(\sgn_{\mathcal J}(u, v ; \omega) \,\, \id^{\bf J}_{FF}[\omega]  \right)}
{\mathbb E_{\Lambda,\beta}^{\text{FK}(2)}
\left( \id^{\bf J}_{FF}[\omega] 
\right)}
\ee
The claimed \eqref{FK_mod_corr} is this relation recast in terms of conditional probability. 
\end{proof}

An application of this representation is presented below.  First however let us present the RCR's adaptation to Hamiltonians with frustration.

\subsection{Frustration-adjusted RCR} \mbox{ }  \label{sec:F_RCR}

In the presence of frustration, the partition function's RCR straightforward  expansion yields terms of mixed signs.  Still,  also in that case the random current representation is of relevance.  As a first step: it allows to identify exact cancellations upon which the partition function regains its status as a sum of positive terms.     

In what follows the terms \emph{probability} and \emph{expectation value} are reserved for averages over non-negative weight.   To keep track of the  signs, we denote
\be \label{|w|}
   w^{\mathcal J} (\n) :=  w^{\mathcal {|J|}}(\n)    =\prod_{\{x,y\}\subset \E}\frac{\displaystyle(\beta |J_{x,y}|)^{n( (x,y))n( (x,y))}}{n( (x,y))!}  \quad  ( \geq 0)\,,   
\ee
and write the extension of \eqref{w(n)} to general couplings  as  
\be 
\prod_{\{x,y\}\subset \E}\frac{\displaystyle(\beta J_{x,y})^{n( (x,y))}}{n( (x,y))!}  =  w^{\mathcal J}(\n) \,  (-1)^{\mathcal F ^{\mathcal J} (\n) }   
\ee 
 $F^{\mathcal J} (n)$  being the total flux of $\n$ over ``negative edges'' (i.e. those with  $J_{x,y}<0$).

In this notation, \eqref{eq:8} extends beyond ferromagnetic systems as:
\be\label{eq:80}
Z_{\Lambda,\beta}(J)
=
 \sum_{\substack{\n:\,\E\to \Z_+ \\ \partial\n=\emptyset}} w(\n) \, (-1)^{\mathcal F^{\mathcal J} (\n)  }\,.
\ee

In parallel to Def.~\ref{def:FF_FK}, for RCR we employ the following terminology.

\begin{definition} 
For systems with a Hamiltonian of the form \eqref{H_only_J} a current configuration  
 $\m: \mathcal {E} \to \Z_+$ is said to be \emph{frustration-free} (FF) if 
 closed loops supported on 
 edges with $m_b \neq 0$ include  only even numbers of negative couplings.  
The indicator function of this event is denoted by $\id_{FF}[\m]$.  
\end{definition}
 
In analogy to \eqref{sgn_omega}  we denote
\be  \label{sgn_RCR}
 \sgn_{\mathcal J}(u, v ; \m) = 
 \begin{cases}  
 \mbox{the parity  of $\m$-connection}  & \mbox{if $\m$ is FF and $u \stackrel{\m}{\longleftrightarrow} v $}\\[1ex]  
     0   & \mbox{otherwise}    
     \end{cases}   \,. 
\ee  

Following is a relevant observation.
 
\begin{lemma}   For any set of couplings $\mathcal J$ and an edge function $\m: \mathcal {E} \to \Z_+$
\begin{eqnarray}   \label{F_sign}
(-1)^{\mathcal F^{\mathcal J} (\m) }  \,  \id_{FF}[\m]    
&=& \id_{FF}[\m]  \cdot 
 \begin{cases} 1 & \mbox{if $\partial \m=\emptyset$} \\  
 sgn_J(u, v ; \m) & \mbox{if $\partial \m=\{u,v\} $}  \,.   
 \end{cases} 
\end{eqnarray}  
\end{lemma}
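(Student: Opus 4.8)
The plan is to reduce everything to a sign–gauge (two-coloring) argument on the support of $\m$, after which both cases follow from a single telescoping parity computation. If $\id_{FF}[\m]=0$ then both sides vanish and there is nothing to prove, so throughout I assume $\m$ is frustration-free and work on the support multigraph $\supp\m := \{\, b\in\mathcal E : m_b\neq 0 \,\}$.

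First I would establish the structural fact that frustration-freeness is equivalent to the existence of a consistent sign gauge. Concretely, on each connected component of $\supp\m$ I would construct a function $\eta:\mathcal V\to\{-1,+1\}$ with the property that $\eta_x\eta_y=\sgn(J_b)$ for every edge $b=(x,y)$ with $m_b\neq 0$. One fixes $\eta$ arbitrarily at a root of the component and propagates it along a spanning tree by the rule $\eta_y=\sgn(J_{(x,y)})\,\eta_x$; the FF condition --- that every cycle in $\supp\m$ carries an even number of negative couplings --- is exactly what guarantees that this assignment is path-independent, hence well defined. (This is the standard statement that a signed graph is balanced iff it has no frustrated cycle.) I expect this gauge construction to be the conceptual crux of the argument; the remaining steps are bookkeeping.

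Next I would compute $(-1)^{\mathcal F^{\mathcal J}(\m)}$ using the gauge. Since $\mathcal F^{\mathcal J}(\m)=\sum_{b:\,J_b<0}m_b$, the parity $(-1)^{\mathcal F^{\mathcal J}(\m)}$ is a product of edge factors $(-1)^{m_b}$ over the negative edges; because $\sgn(J_b)=-1$ precisely on those edges, the gauge identity gives $(\eta_x\eta_y)^{m_b}=(-1)^{m_b}$ on negative edges and $=1$ on positive edges, so $(\eta_x\eta_y)^{m_b}$ reproduces exactly the corresponding factor of $(-1)^{\mathcal F^{\mathcal J}(\m)}$ edge by edge. Taking the product over all edges of $\supp\m$ and regrouping by vertex, the exponent of $\eta_v$ becomes the total incident flux $d_v:=\sum_{y}m_{(v,y)}$, whence
\[
(-1)^{\mathcal F^{\mathcal J}(\m)} \;=\; \prod_{v\in\mathcal V}\eta_v^{\,d_v} \;=\; \prod_{v\in\partial\m}\eta_v ,
\]
because $\eta_v^{\,d_v}$ equals $\eta_v$ precisely when $d_v$ is odd, i.e.\ when $v\in\partial\m$, and equals $1$ otherwise.

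Finally I would read off the two cases. If $\partial\m=\emptyset$ the product is empty and equals $1$, giving the first line. If $\partial\m=\{u,v\}$, a parity remark --- each connected component $C$ of $\supp\m$ contains an even number of odd-flux vertices, since $\sum_{v\in C}d_v=2\sum_{b\in C}m_b$ is even --- forces $u$ and $v$ to lie in the same component, so $u\stackrel{\m}{\longleftrightarrow}v$ and $\sgn_{\mathcal J}(u,v;\m)$ is the quantity defined in \eqref{sgn_RCR}. Choosing any $\m$-path $P$ from $u$ to $v$, the product $\prod_{(x,y)\in P}\eta_x\eta_y$ telescopes to $\eta_u\eta_v$, while on the other hand it equals $\prod_{b\in P}\sgn(J_b)=(-1)^{\#\{b\in P:\,J_b<0\}}$, which is exactly the parity of negative couplings along $P$, namely $\sgn_{\mathcal J}(u,v;\m)$. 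Hence $(-1)^{\mathcal F^{\mathcal J}(\m)}=\eta_u\eta_v=\sgn_{\mathcal J}(u,v;\m)$, the second line. The only point needing a word of care is that the telescoped value is independent of the chosen path $P$, but this is guaranteed by frustration-freeness, consistent with the well-definedness of $\sgn_{\mathcal J}$ built into \eqref{sgn_RCR}.
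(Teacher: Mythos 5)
Your proof is correct, but it takes a genuinely different route from the paper's. The paper argues through the loop decomposition of the current: $\m$ is presented as a superposition of (possibly overlapping) loops plus, in the two-source case, a simple path from $u$ to $v$; the flux $\mathcal F^{\mathcal J}(\m)$ is additive under this decomposition, each loop contributes an even amount to it by the FF condition, and the path contributes exactly the parity that defines $\sgn_{\mathcal J}(u,v;\m)$. You instead invoke the balance characterization of frustration-free signed graphs to construct a sign gauge $\eta$ on the support of $\m$, rewrite $(-1)^{\mathcal F^{\mathcal J}(\m)}$ as the vertex product $\prod_v \eta_v^{d_v}=\prod_{v\in\partial\m}\eta_v$, and read off both cases from there. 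The two arguments are essentially dual --- your gauge $\eta$ is precisely the global certificate that every loop in the support is even --- but yours has the merit of not presupposing the existence of a loops-plus-path decomposition of $\m$, which the paper asserts without proof; the spanning-tree construction and the telescoping along a connecting path are fully explicit, and your parity remark that $u$ and $v$ must lie in the same component of the support is a detail the paper leaves implicit. What the paper's version buys in exchange is brevity and consistency with the stochastic-geometric picture (backbone paths superposed on loops) that it uses throughout.
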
 
\begin{proof}  
The flux function  $\m$ can be presented as the sum of contributions of possibly overlapping loops plus  a simple path in the second case.  Under such a decomposition $\mathcal F^J(\m)$ is additive, and for frustration free  currents $\m$ it  gets zero contribution from the loops, and  $ sgn_J(u, v ; \m) $ from the path.  
\end{proof} 

As we learned from the switching lemma, it helps to duplicate. This remains true also for systems with frustration.     
\begin{theorem}[Frustration-adjusted RCR] \label{RSR_FF}
 For any finite Ising system with a Hamiltonian of the form \eqref{H_only_J} (not necessarily ferromagnetic):
\begin{eqnarray}\label{F_dup}
\frac{Z_{\Lambda,\beta}(\mathcal J)}{ Z_{\Lambda,\beta}(\mathcal{|J|})} &=&  \!\!\!\!\!
    \sum_{\substack{\n_1:\,\E\to \Z_+ \\ \partial \n_1=\emptyset}} 
   \sum_{\substack{\n_2:\,\E\to \Z_+ \\ \partial\n_2=\emptyset}} \frac{w(\n_1)}{{ Z_{\Lambda,\beta}(|J|)} } \,  
   \frac{w(\n_2)}{{ Z_{\Lambda,\beta}(|J|)} } 
    \id_{FF}[\n_1+\n_2]      \notag  \\   \\   \notag 
    &=:&  \bbE^{\emptyset, \emptyset}_{\Lambda,\beta} \left( \id_{FF}[\n_1+\n_2] \right)   \equiv  \bbP_{\Lambda,\beta}^{\emptyset, \emptyset} \left( \mbox{$\n_1+\n_2$ is FF} \right)   \,   
\end{eqnarray}
and   
\begin{eqnarray}\label{corr_dup}
\langle \sigma_u \sigma_v \rangle_{\Lambda,\beta}^{\mathcal J}  \cdot 
\langle \sigma_u \sigma_v \rangle_{\Lambda,\beta}^{\mathcal {|J|}}  
=  
\bbE^{\emptyset, \emptyset}_{\Lambda, \beta} \left ( sgn_J(u, v ; \n_1+\n_2) \, \big \vert \, \mbox{$\n_1+\n_2$ is FF} \right) \,, 
\end{eqnarray}
where $ \bbP_{\Lambda,\beta}^{\emptyset, \emptyset} $ and $ \bbE^{\emptyset, \emptyset}_{\Lambda, \beta} $ denote the probability and expectation value  for a pair random currents, of  sources indicated in the subscript, in the state corresponding to the  ferromagnetic coupling  $\bf J = \{|J_{uv}|\}$.  
\end{theorem}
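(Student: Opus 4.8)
The plan is to collapse both identities onto a single combinatorial fact about a fixed sourceless combined current $\m=\n_1+\n_2$, obtained by re-summing one auxiliary spin replica to enforce the parity (source) constraint. Throughout I write $w:=w^{|\mathbf J|}\ge 0$ for the ferromagnetic weight \eqref{|w|}, so that the true signed weight is $w(\n)(-1)^{\mathcal F^{\mathcal J}(\n)}$. For the partition-function identity \eqref{F_dup} I would first expand $Z_{\Lambda,\beta}(\mathcal J)$ by \eqref{eq:80} and $Z_{\Lambda,\beta}(|\mathbf J|)$ by the ferromagnetic \eqref{eq:8}, giving
\[
Z_{\Lambda,\beta}(\mathcal J)\,Z_{\Lambda,\beta}(|\mathbf J|)=\sum_{\partial\n_1=\emptyset}\sum_{\partial\n_2=\emptyset} w(\n_1)w(\n_2)\,(-1)^{\mathcal F^{\mathcal J}(\n_1)}.
\]
Since $\partial(\n_1+\n_2)=\partial\n_1\,\Delta\,\partial\n_2=\emptyset$, I would group by $\m=\n_1+\n_2$ and use the binomial identity \eqref{eq_combs}, $w(\n_1)w(\n_2)=w(\m)\binom{\m}{\n_1}$, to reduce this to $\sum_{\partial\m=\emptyset}w(\m)\,C(\m)$ with the fiber sum $C(\m):=\sum_{\n_1\le\m,\,\partial\n_1=\emptyset}\binom{\m}{\n_1}(-1)^{\mathcal F^{\mathcal J}(\n_1)}$.

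The crux is the evaluation of $C(\m)$. I would enforce $\partial\n_1=\emptyset$ through the character identity $\id[\partial\n_1=\emptyset]=2^{-|\V|}\sum_{\eta\in\{\pm1\}^{\V}}\prod_{b=(x,y)}(\eta_x\eta_y)^{n_1(b)}$, substitute $(-1)^{\mathcal F^{\mathcal J}(\n_1)}=\prod_b \sgn(J_b)^{n_1(b)}$, and exchange the order of summation so that the sum over $\n_1\le\m$ factorizes over edges:
\[
C(\m)=\frac{1}{2^{|\V|}}\sum_{\eta\in\{\pm1\}^{\V}}\ \prod_{b=(x,y)}\big(1+\sgn(J_b)\,\eta_x\eta_y\big)^{m_b}.
\]
Each factor equals $2^{m_b}$ when $\eta_x\eta_y=\sgn(J_b)$ and vanishes otherwise, so $\eta$ contributes precisely when it balances every occupied edge of $\m$; such $\eta$ exist if and only if $\supp(\m)$ carries no frustrated loop, i.e. iff $\id_{FF}[\m]=1$. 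This gives $C(\m)=\id_{FF}[\m]\,C_0(\m)$, where $C_0(\m):=\sum_{\n_1\le\m,\,\partial\n_1=\emptyset}\binom{\m}{\n_1}$ is the same sum with all signs set to $+1$. (On the frustration-free event one could instead read $(-1)^{\mathcal F^{\mathcal J}(\n_1)}=1$ straight from the lemma underlying \eqref{F_sign}, since every sourceless $\n_1\le\m$ is itself FF; the real content of the character computation is the cancellation of the non-FF terms.) Substituting back and dividing by $Z_{\Lambda,\beta}(|\mathbf J|)^2$ yields \eqref{F_dup}.

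For the correlation identity \eqref{corr_dup} I would repeat the computation with sources. The Taylor expansion that produced \eqref{eq:80}, now with an inserted $\sigma_u\sigma_v$ as in \eqref{eq:erg}, gives $\langle\sigma_u\sigma_v\rangle^{\mathcal J}_{\Lambda,\beta}=Z_{\Lambda,\beta}(\mathcal J)^{-1}\sum_{\partial\n=\{u,v\}}w(\n)(-1)^{\mathcal F^{\mathcal J}(\n)}$, while $\langle\sigma_u\sigma_v\rangle^{|\mathbf J|}_{\Lambda,\beta}$ is the ferromagnetic ratio with the same sources and no sign. Multiplying and grouping by $\m=\n_1+\n_2$ (again sourceless, as $\{u,v\}\,\Delta\,\{u,v\}=\emptyset$) reduces the numerator to $\sum_{\partial\m=\emptyset}w(\m)\widetilde C(\m)$ with $\widetilde C(\m):=\sum_{\n_1\le\m,\,\partial\n_1=\{u,v\}}\binom{\m}{\n_1}(-1)^{\mathcal F^{\mathcal J}(\n_1)}$. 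The only change in the character step is that enforcing $\partial\n_1=\{u,v\}$ inserts a factor $\eta_u\eta_v$, so $\widetilde C(\m)=2^{-|\V|}\sum_\eta \eta_u\eta_v\prod_b(1+\sgn(J_b)\eta_x\eta_y)^{m_b}$. Among the balancing configurations (which again require $\id_{FF}[\m]=1$) the quantity $\eta_u\eta_v$ is constant and equals the parity of $\mathcal J$-connection when $u\stackrel{\m}{\leftrightarrow}v$, while the balancing $\eta$ split evenly in sign otherwise; in both cases $\sum_{\text{balancing}}\eta_u\eta_v=\sgn_{\mathcal J}(u,v;\m)\cdot(\#\text{balancing})$, so $\widetilde C(\m)=\sgn_{\mathcal J}(u,v;\m)\,C_0(\m)$.

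Finally I would reassemble. Using $w(\m)C_0(\m)=\sum_{\n_1+\n_2=\m,\,\partial\n_i=\emptyset}w(\n_1)w(\n_2)$, the numerator becomes $\sum_{\partial\n_1=\partial\n_2=\emptyset}w(\n_1)w(\n_2)\,\sgn_{\mathcal J}(u,v;\n_1+\n_2)=Z_{\Lambda,\beta}(|\mathbf J|)^2\,\bbE^{\emptyset,\emptyset}_{\Lambda,\beta}(\sgn_{\mathcal J}(u,v;\n_1+\n_2))$, while the denominator is $Z_{\Lambda,\beta}(\mathcal J)Z_{\Lambda,\beta}(|\mathbf J|)=Z_{\Lambda,\beta}(|\mathbf J|)^2\,\bbP^{\emptyset,\emptyset}_{\Lambda,\beta}(\text{FF})$ by \eqref{F_dup}; the common factor cancels and, since $\sgn_{\mathcal J}$ already vanishes off the FF event, the ratio is exactly the conditional expectation \eqref{corr_dup}. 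The one genuinely non-formal step, and where I expect the main difficulty, is the fiber lemma $C(\m)=\id_{FF}[\m]\,C_0(\m)$ together with its sourced analogue: the lemma behind \eqref{F_sign} only fixes the sign on FF configurations, so the substance is the cancellation of the frustrated terms. The character (ghost-spin) resummation delivers this cleanly; I would avoid an explicit sign-reversing involution, since toggling a single edge along a frustrated loop generically breaks the constraint $\partial\n_1=\emptyset$ and the bookkeeping becomes awkward.
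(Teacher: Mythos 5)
Your proposal is correct, and its overall skeleton (tensoring with a ferromagnetic replica, grouping the pair by $\m=\n_1+\n_2$ via \eqref{eq_combs}, and reducing everything to a statement about the fiber over a fixed sourceless $\m$) coincides with the paper's. Where you genuinely diverge is in the key cancellation step. The paper disposes of the non-FF terms by a switching involution: it selects (measurably in $\m$) a loop in the multigraph $\G(\m)$ carrying odd flux over negative edges and exchanges the edges of that loop between $\G(\n_1)$ and $\G(\n_2)$; since a loop has empty boundary this preserves both source constraints $\partial\n_1$ and $\partial\n_2$ while flipping $(-1)^{\mathcal F^{\mathcal J}(\n_1)}$, so the fiber sum vanishes — your stated worry about breaking $\partial\n_1=\emptyset$ applies to single-edge toggles, not to the loop switch the paper actually uses. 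You instead evaluate the fiber sum outright by resumming the source constraint through the Ising characters $\eta\in\{\pm1\}^{\V}$, which reduces $C(\m)$ and $\widetilde C(\m)$ to counting spin assignments consistent with $\sgn(J_b)$ on $\supp(\m)$; such assignments exist iff $\m$ is FF, and the inserted $\eta_u\eta_v$ delivers $\sgn_{\mathcal J}(u,v;\m)$ (with the even split of balancing $\eta$ handling the disconnected case). Both arguments are sound. Yours buys a uniform treatment of the FF indicator, the sign, and the connected/disconnected dichotomy in one computation, with no need for a measurable loop selection; the paper's stays entirely within the switching-lemma formalism, which is the point of the section and is the form that extends to the conditional-switching applications later in the paper. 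Note also that your character resummation is essentially an inversion of the RCR expansion — it recovers the constrained spin sum underlying $\m$ — so it is closer in spirit to the frustration-adjusted FK argument the paper gives just before this theorem.
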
 
    Here, and below, the $\bbE(\, -\, | \,Y\,)$ denotes conditional expectation, conditioned on the event $Y$.

\begin{proof}  We start by adding a reference system and expanding: 
\begin{eqnarray}\label{F_dup2}
\frac{Z_{\Lambda,\beta}(J)}{ Z_{\Lambda,\beta}(|J|)} &=&  \frac{Z_{\Lambda,\beta}(J) \cdot Z_{\Lambda,\beta}(|J|) }{ Z_{\Lambda,\beta}(|J|) \cdot Z_{\Lambda,\beta}(|J|)} \notag \\   
&=&
    \sum_{\substack{\n_1:\,\E\to \Z_+ \\ \partial\n_1=\emptyset}} 
   \sum_{\substack{\n_2:\,\E\to \Z_+ \\ \partial\n_2=\emptyset}} \frac{w(\n_1)}{{ Z_{\Lambda,\beta}(|J|)} } \,  
   \frac{w(\n_2)}{{ Z_{\Lambda,\beta}(|J|)} } 
    (-1)^{\mathcal F^{\mathcal J}(\n_1) } 
\end{eqnarray}
Next, split the pairs of configurations according to whether $\m:=\n_1+\n_2$ is frustration free or not.  
Equivalently stated, the distinction depends on whether the multigraph representation of $\m$ supports a loop with odd flux through edges of $J_{x,y}<0$.   
The contribution from pairs of the first class vanishes since 
$ (-1)^{\mathcal F(\n_1) } $ changes sign under the  switch between $\n_1$ and $\n_2$ along such an odd loop.  Hence the above double sum equals the contribution to it of the terms for which $\n_1+\n_2$ is frustration free.  In that case also $\n_1$ is frustration free and thus, by the above lemma $(-1)^{\mathbb F(\n_1)} =1$.   Eq. \eqref{F_dup} readily follows. 

Adapted to  the correlation function, the above switching argument yields
\begin{eqnarray} \label{eq:2pnt}
 \langle \sigma_u \sigma_v \rangle_{\Lambda,\beta}^{\mathcal J} \!\!\!\!\! \!&& \!\!\!\!\!\!\!
  \langle \sigma_u \sigma_v \rangle_{\Lambda,\beta}^{\mathcal |J|} = 
   \notag  \\ 
 &=& \!\!\!\!\!
    \sum_{\substack{\n_1:\,\E\to \Z_+ \\ \partial \n_1= \{u,v\} }}
   \sum_{\substack{\n_2:\,\E\to \Z_+ \\ \partial\n_2=\{u,v\}}} \frac{w(\n_1)}{{ Z_{\Lambda,\beta}(J)} } \,  
   \frac{w(\n_2)}{{ Z_{\Lambda,\beta}(|J|)} }
   (-1)^{\mathcal F^{\mathcal J} (\n_1)} \id_{FF}[\n_1+\n_2]   \notag \\[1ex]  \notag
&=& \!\!\!\!\!
    \sum_{\substack{\m:\,\E\to \Z_+ \\ \partial \m= \emptyset  }}
   \sum_{\substack{\n \leq \m  \\  \partial\n=\emptyset}} 
   \frac{w(\n)}{{ Z_{\Lambda,\beta}(J)} } \,  
   \frac{w(\m-\n)}{{ Z_{\Lambda,\beta}(|J|)} }\,  
     sgn_J(u, v ; \m) \, \id_{FF}[\m]  \,  \notag \\    \\ \notag
 & = &  \frac 
 {\bbE^{\emptyset, \emptyset} \left( \id_{FF}[\m] \cdot  sgn_J(u, v ; \m) \right) } 
  {\bbE^{\emptyset, \emptyset} \big( \id_{FF}[\m]   \  \big) }  
  = \bbE^{\emptyset, \emptyset} \left ( sgn_J(u, v ; \m) \big \vert \mbox{$\m$ is FF} \right)   
\end{eqnarray} 
where $\m$ stands for $\n_1+\n_2$.  
\end{proof}

\subsection{Gibbs states with mixed boundary conditions}  \label{sec:bc} \mbox{} 

One is often interested in Gibbs states under the boundary conditions which amount to fixing the spins at  parts of the boundary, i.e.  as:
\be \label{mixed_boudary}
\sigma_u \stackrel{\bf \pm b.c.}{  :=} 
\begin{cases}  +1 & u\in \partial \Lambda_+\\  
-1 & u\in \partial \Lambda_-
\end{cases}
 \quad \mbox{or, for comparison,} \quad \sigma_u \stackrel{\bf +\, b.c. }{   := }    
\begin{cases}  +1 & u\in \partial \Lambda_+\\  
+1 & u\in \partial \Lambda_-
\end{cases}
\, .
\ee
To present how RCR can be adapted to such boundary conditions   
let us start with the more elementary case of $+$ boundary conditions, i.e. $\sigma_u=+1$ along all of $\partial \Lambda= \partial \Lambda_-\sqcup \Lambda_+$.  

In that case the main change  in the  RCR expansion \eqref{eq:8} is that the source constraint  on $\n$ does not apply along $\partial \Lambda$, since 
its spins are not averaged over.    Consequently, one gets the following modification of \eqref{eq:8}. 
\begin{multline}\label{eq:8++}
Z^{+}_{\Lambda,\beta}(J,h) := 
 \frac{1}{2^{|\Lambda|} }\sum_{\sigma: \Lambda \to \{-1,1\}}  \, \times \, \\ 
 \times \, \prod_{(x,y)\in \E_0} \exp(\beta J_{x,y}\sigma_x\sigma_y) 
\prod_{\substack{x \in \Lambda_0 \\  u\in \partial \Lambda} }\exp(\beta J_{x,u}\sigma_x\sigma_u) \prod_{u\in \partial \Lambda} \id[\sigma_u=+1]
 \\ 
=   \frac{1}{2^{|\partial \Lambda|} }\sum_{\substack{\n:\,\E\to \Z_+ \\ \partial \n \cap \Lambda_0 = \emptyset }} w(\n) \,, 
\hspace{5.7cm} 
\end{multline}
where $\Lambda_0=\Lambda \setminus \partial \Lambda$ is the set of unconstrained spins,  $\mathcal E_0$ the set of edges within that set, and  the usual  RCR constraint $\partial \n = \emptyset$ is replaced by:
\be\label{+dn}
\partial \n \cap \Lambda_0 = \emptyset \,.
\ee  

Turning now to the mixed boundary conditions, let us note that the Gibbs state in $\Lambda_0$ under the Hamiltonian \eqref{H_no_h} at the  $(\pm0$ boundary conditions  can equivalently be presented as under  $(+)$ boundary conditions, but  at modified couplings with the  signs of $J_{x,u}$  flipped at sites $u\in \Lambda_-$. 
  Handling that as were handled the negative couplings in   Section~\ref{sec:F_RCR} we get 
the following expression for the mixed boundary conditions: 

\be\label{eq:8+-}
Z^{\pm}_{\Lambda,\beta}(J,h) 
=   \frac{1}{2^{|\partial \Lambda|} }\sum_{\substack{\n:\,\E\to \Z_+ \\ \partial\n\subset \partial \Lambda }} w(\n) (-1)^{\mathcal F_{\partial \Lambda_-}(\n)} \,, 
\ee
where $\mathcal F_{\partial \Lambda_-}(\n) := \sum_{x\in \Lambda_0} \sum_{u\in \partial \Lambda_-} n( (x,u))$.  \\

Thus, mixed boundary conditions  induce the possibility of frustration.  However in this case the test is rather simple:  a configuration $\m$ carries frustration if and only if it supports a path linking $\partial \Lambda_+$ with $\partial \Lambda_-$     
Applying Theorem~\ref{RSR_FF} to the above setup, we get:

\begin{theorem}  \label{thm:FF}
For any finite Ising system with the ferromagnetic Hmiltonian \eqref{H_no_h} and the pair of boundary conditions spelled in \eqref{mixed_boudary}: \
\be \label{ZZ_RCR}
\frac{Z_{\Lambda,\beta}^{\pm}}{Z_{\Lambda,\beta}^{+}} = \bbP_{\Lambda,\beta}^{\emptyset,\emptyset;+} \left ( \partial \Lambda_-\stackrel{\n_1+\n_2}{ \, \not \! \! \!\longleftrightarrow}  \partial \Lambda_+
\right)  
\ee
and
\begin{multline} \label{eq:1pnt0}
 \langle \sigma_x   \rangle_{\Lambda,\beta}^{\pm} \,\, 
  \langle \sigma_x  \rangle_{\Lambda,\beta}^{+} =  
  \bbP^{\emptyset, \emptyset;+}_{\Lambda, \beta}\left ( u \stackrel{\n_1+\n_2}{\longleftrightarrow}
 \partial \Lambda_+ \big \vert \, FF \right)  - 
   \bbP^{\emptyset,\emptyset;+}_{\Lambda, \beta}\left ( u \stackrel{\n_1+\n_2}{\longleftrightarrow}
 \partial \Lambda_- \big \vert \, FF \right) 
  \end{multline}
  where the superscripts on $\langle -\rangle$ indicate the boundary conditions,  those on $\bbP$  indicates 
  $\partial \n_j \cap \Lambda_0 = \emptyset$ for $j=1,2$, and $FF$   denotes the event $\left(\partial \Lambda_-\stackrel{\n_1+\n_2}{ \, \not \! \! \!\longleftrightarrow}  \partial \Lambda_+ \right) $.
\end{theorem}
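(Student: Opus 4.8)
The plan is to read off both identities as specializations of the frustration-adjusted RCR of Theorem~\ref{RSR_FF}, after noting that \eqref{eq:8+-} puts the $\pm$ system in exactly the frustrated setting treated there. The reference ferromagnetic system is the one with $+$ boundary conditions, so that $Z^{+}_{\Lambda,\beta}$ plays the role of $Z_{\Lambda,\beta}(|\mathcal J|)$, while the sign factor $(-1)^{\mathcal F_{\partial\Lambda_-}(\n)}$ in \eqref{eq:8+-} identifies the ``negative edges'' as precisely those joining an interior site $x\in\Lambda_0$ to a site $u\in\partial\Lambda_-$, so that $Z^{\pm}_{\Lambda,\beta}$ plays the role of $Z_{\Lambda,\beta}(\mathcal J)$. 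The only point of departure from the statement of Theorem~\ref{RSR_FF} is that the sources are required to vanish not everywhere but only in the interior, $\partial\n\cap\Lambda_0=\emptyset$; this is the content of the superscript $+$ on $\bbP$ and $\bbE$, and after the contraction described next it becomes a genuine sourceless constraint, so that \eqref{F_dup} and \eqref{corr_dup} apply verbatim.

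The crux is to translate the frustration-free event for $\m=\n_1+\n_2$ into the geometric statement of \eqref{ZZ_RCR}. First I would contract the whole of $\partial\Lambda$ to a single vertex $\partial$; this is legitimate because the boundary spins are held fixed and are equal in the reference system, and because the admissible sources, lying in $\partial\Lambda$, collapse onto the single vertex $\partial$, where the even parity of $|\partial\n|$ forces sourcelessness. On the contracted multigraph each edge incident to $\partial$ is labelled positive or negative according to whether it lands in $\partial\Lambda_+$ or $\partial\Lambda_-$. A closed loop then carries an odd number of negative edges if and only if it enters $\partial$ through a positive edge and leaves through a negative one, i.e.\ if and only if $\m$ supports a path from $\partial\Lambda_+$ to $\partial\Lambda_-$. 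Hence $\id_{FF}[\m]$ equals the indicator of the event $\big(\partial\Lambda_-\nocon\partial\Lambda_+\big)$, and \eqref{ZZ_RCR} is exactly \eqref{F_dup} rewritten under this identification.

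For the one-point identity \eqref{eq:1pnt0} I would use the ghost-spin reading of boundary conditions: under either boundary condition $\langle\sigma_x\rangle=\langle\sigma_x\sigma_\partial\rangle$, so the single-spin expectation is the two-point function between $x$ and the contracted boundary vertex $\partial$. Applying \eqref{corr_dup} with $u=x$ and $v=\partial$ turns the left side into $\bbE^{\emptyset,\emptyset;+}_{\Lambda,\beta}\!\big(\sgn_{\mathcal J}(x,\partial;\m)\,\big|\,\mathrm{FF}\big)$. On the event $\mathrm{FF}$ the vertex $x$ is $\m$-connected to at most one of $\partial\Lambda_+,\partial\Lambda_-$, and $\sgn_{\mathcal J}(x,\partial;\m)$ equals $+1$ when $x\stackrel{\m}{\longleftrightarrow}\partial\Lambda_+$, equals $-1$ when $x\stackrel{\m}{\longleftrightarrow}\partial\Lambda_-$, and vanishes when $x$ reaches neither. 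Splitting the conditional expectation along these three alternatives yields the difference of conditional connection probabilities in \eqref{eq:1pnt0}.

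The step I expect to be the main obstacle is the clean justification of the contraction together with the frustration-free-versus-connectivity dictionary: one must check that collapsing the boundary into a single vertex is compatible both with the source constraint $\partial\n\subset\partial\Lambda$ and with the loop-parity definition of frustration, and that $\sgn_{\mathcal J}(x,\partial;\m)$ is genuinely path-independent on the event $\mathrm{FF}$, so that the three-way split above is well defined. Everything else is a direct specialization of Theorem~\ref{RSR_FF}.
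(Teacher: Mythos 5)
Your proposal is correct and follows essentially the same route as the paper, which likewise obtains Theorem~\ref{thm:FF} by specializing the frustration-adjusted RCR of Theorem~\ref{RSR_FF} to the expansion \eqref{eq:8+-}, using the observation that $\m=\n_1+\n_2$ is frustrated exactly when it supports a path linking $\partial\Lambda_+$ to $\partial\Lambda_-$. Your ghost-vertex contraction and the three-way split of $\sgn_{\mathcal J}(x,\partial;\m)$ on the FF event are just an explicit filling-in of the details the paper leaves implicit.
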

Analogous reasoning can be carried within the FK representation, for which one gets the following modified version of \eqref{eq:28}.
\be  
Z^+_{\Lambda,\beta}(J,h)  
=  \frac{e^{\beta \sum_{b\in \mathcal E} J_b}}{2^{|\Lambda|} }  
  \sum_{ \omega \subset \mathcal E}\,\, 
\prod_{b\in \omega}  p_b 
\prod_{b\in \mathcal E \setminus \omega} (1-p_b)  \cdot 
q^{N_0(\om)}
\ee  
where $N_0(\om)$ is the number of $\omega$-connected clusters which do not reach the boundary $\partial \Lambda$.

A similar expansion holds for the mixed boundary conditions, except that in this case  the consistency condition $\id_\omega[\sigma]$  fails if $\omega$ supports a path connecting $\partial \Lambda_-$ with $\partial \Lambda_+$.   Following this line of reasoning one  gets the following FK versions of \eqref{ZZ_RCR} and  \eqref{eq:1pnt0}
\be \label{ZZ_RCR_2}
\frac{Z_{\Lambda,\beta}^{\pm}}{Z_{\Lambda,\beta}^{+}} = \bbP_{\Lambda,\beta}^{\text{FK}(2),+} \left ( \partial \Lambda_-\stackrel{\omega}{ \, \not \! \! \!\longleftrightarrow}  \partial \Lambda_+
\right)  \,.
\ee
\begin{multline} \label{eq:1pnt_FK}
 \langle \sigma_x   \rangle_{\Lambda,\beta}^{\pm} \,\,  =  
  \bbP^{\text{FK}(2),+}_{\Lambda, \beta}\left ( u \stackrel{\omega}{\longleftrightarrow}
 \partial \Lambda_+ \big \vert \, FF \right)  - 
   \bbP^{\text{FK}(2),+}_{\Lambda, \beta}\left ( u \stackrel{\omega}{\longleftrightarrow}
 \partial \Lambda_- \big \vert \, FF \right) 
  \end{multline}
  where  $FF$   denotes the event $\left(\partial \Lambda_-\stackrel{\omega}{ \, \not \! \! \!\longleftrightarrow}  \partial \Lambda_+ \right) $.

  For completeness, and comparison,  let us add the following (simpler) formulae for the magnetization under the  (+) b.c.
\be  \label{eq:magnetization}
 \langle \sigma_x   \rangle_{\Lambda,\beta}^{+} \,\,  =  
  \bbP^{\emptyset,\emptyset;+}_{\Lambda, \beta}\left ( u \stackrel{\n_1+\n_2}{\longleftrightarrow}  \partial \Lambda  \right)  
  =  \bbP^{\text{FK}(2),+}_{\Lambda, \beta}\left ( u \stackrel{\omega}{\longleftrightarrow} \partial \Lambda  \right)  
\ee

\subsection{Potential applications} \mbox{}

The ferromagnetic and frustration free analog of \eqref{eq:2pnt} let to the observation that in that case Ising model's long range order is equivalent to percolation of the double current loop soup~\cite{Aiz82}, and that has given a direction towards various useful RCR-based relations.    Implications of the extension presented here to spin glass models have not yet been fully explored, or at least found  useful.  
However cancellation arguments similar to those presented above are of relevance in the analysis of:
\begin{itemize} 
\item expectation values of Kadanoff's  disorder operators~\cite{KadCev71} (Section \ref{sec:duality})
\item Wilson loop expectation values in the $\Z_2$ lattice gauge model (Section \ref{sec:LGM})
\item an RCR proof of the Ding-Song-Sun inequality~\cite{DSS23} (Appendix~\ref{sec:DSS}).
\end{itemize}  

\section{Graph duality and disorder variables}  \label{sec:duality}

In planar Ising models disorder operators are naturally associated with paths $\S$ on the dual graph.    
Corresponding to each is the operator $T_\mathcal S$ which transforms the  couplings 
$\{J_b\}_b\in \mathcal E$ by the following rule:\footnote{As was noted in~\cite{KadCev71}, the collection of disorder operators associated with  individual  edges offers one of the ways to realize the 2D Ising model's Kramers-Wannier duality~\cite{KraWan41}.}   
\be \label{T_S}
(T_\mathcal S J)_b = \begin{cases} 
- J_b & \mbox{if the edge $b$ crosses  $\mathcal S$} \\ 
 J_b & \mbox{otherwise}
\end{cases}
\ee 
In the extension of this notion to higher dimensions, graph duality continues to play a role.

Duality beyond the planar case  is  exemplified by  $\Z^d$, or more explicitly the corresponding cubical complex whose  $k$-cells,  at $k=0,1,2,..d$, are the graph's vertices, edges, plaquettes, etc... .  
In general, a  duality relation is expressed in a pairing between the $k$ cells of a given graph and the $d-k$ cells of its dual, which are consistent with the complex structure.

Such a duality relation is easily seen to exists between the naturally embedded $\Z^d$ and its diagonally shifted version  $(\Z+1/2)^d$, with  each paired duo intersecting  transversally at the mid-point.  (E.g., in the case of $d=3$ and $k=1$, the dual of a edge of $\Z^3$ is the plaquette of $(\Z+1/2)^d$ which it pierces.)  

Thus $\Z^d$ is self dual, in the sense of graph isomorphism.  Nevertheless,  in discussing the pair's simultaneous  realizations we  shall keep the distinction,  referring to one as  $\Z^d$ and to the other as   $(\Z^d)^*$. 

The dual pairing exists also between any finite subgraph of the form  $(-L,L)^d\cap \Z^d$ and a suitable subgraph of $(\Z^d)^*$. However in that case self duality is lost (at the boundary).  \\

\noindent {\bf Remark:}  We refrain here from presenting  the graph duality relation in more general homological terms.  These are essential for the  discussion of gauge models of other symmetry groups (cf.~\cite{DunSch25, ForFre25}),  and usually also accompanied by the orientation of edges).  Both are avoidable for  $\Z_2$ whose two group elements  satisfy $\sigma^{-1} \equiv \sigma$.    

In line with that, we  refer here to the boundary of a collection ($S$) of plaquettes, as the set of edges which belong to odd number of its elements, i.e. 
\be 
\partial S := \left\{ b\in \mathcal E \,  :\,   \prod_{p\in S} (-1)^{\id\left[ b\in \partial p \right] }    = -1 ]\right\}
\ee
One may note that this choice of terms  is  aligned also with the definition of $\partial \n$ in \eqref{d_n}.

We now turn more specifically to the Ising model on $\Z^3$ with the nearest neighbor interaction.   In that case, disorder operators 
are naturally associated with collections of plaquettes 
of  $(\Z^3)^*$, with $T_\mathcal S$ defined by the corresponding  higher dimensional version of \eqref{T_S} and its ``expectation value'' defined as 
\be \label{meanTS}
\langle T_{\S} \rangle_{\Lambda,\beta} := 
\frac 
{ Z_{\Lambda, \beta}( T_S( J) )}
{ Z_{\Lambda, \beta}(  J)} \, .
\ee

\begin{figure}[h]               \centering
       \includegraphics[width=0.25\textwidth=6]{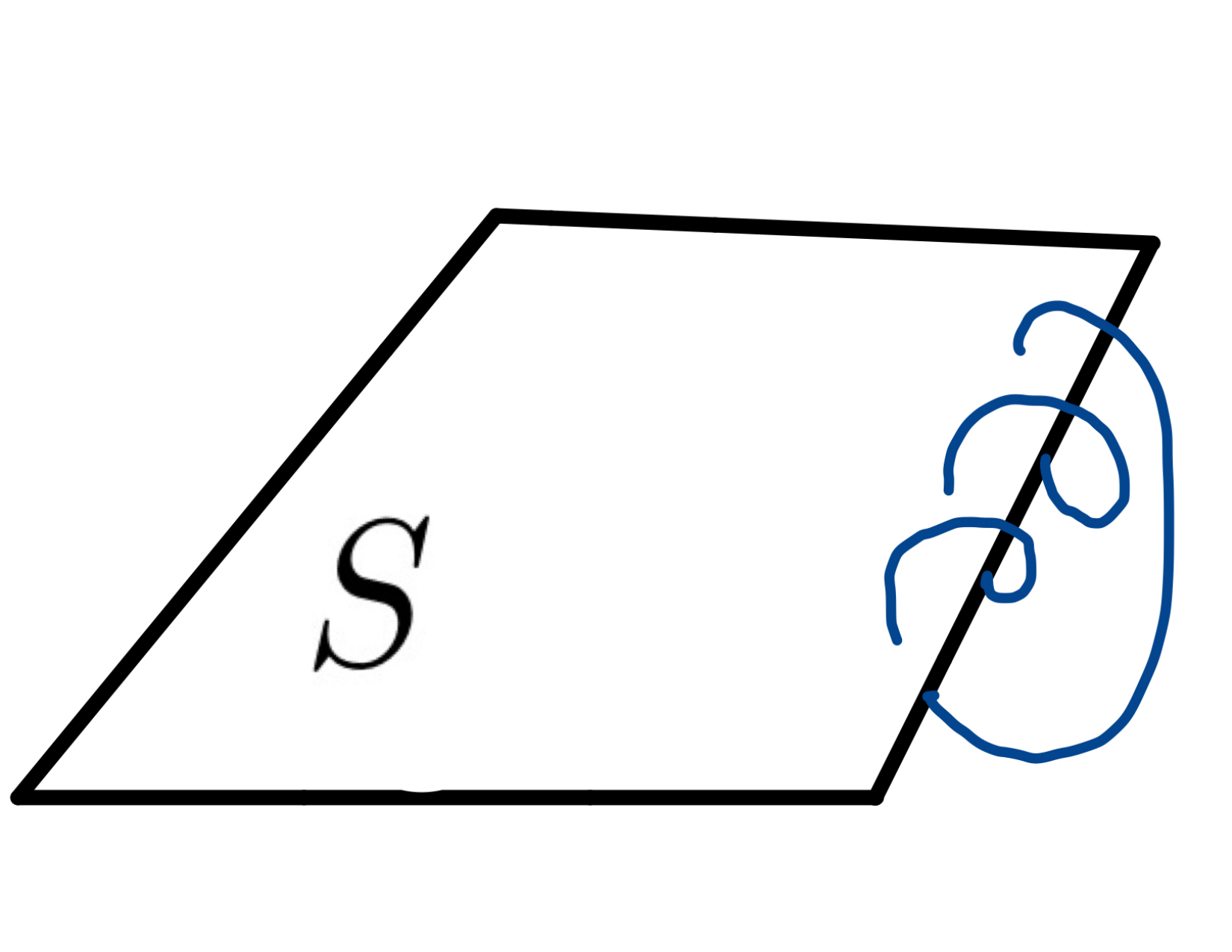} 
             \caption{ A  loop of odd winding through the surface $S$. 
             In \eqref{meanTS1} RCR configurations for  which  $\n_1+\n_2$  supports such a loop  make no contribution to  $\langle T_{S_\gamma}\rangle$.  A similar statement holds in  the model's FK representation.    \label{fig:XYZ}}
\end{figure}

From  the discussion in Section~\ref{sec:frustration} it readily follows that the disorder operator's expectation value \eqref{meanTS} admits the following probabilistic representation.

\begin{theorem} 
 For the Ising model on a finite domain $\Lambda \subset \Z^3$,  and any collection $S$ of dual  plaquettes of  $(\Z^3)^*$:
 \begin{eqnarray}  \label{meanTS1}
\langle T_{\S} \rangle_{\Lambda,\beta} 
&=& \bbP_{\Lambda,\beta}^{\emptyset, \emptyset} \left( 
\begin{array}{c}
\mbox{no  loop along  the support of  $(\n_1+\n_2)$} \\[1ex]   
\mbox{crosses $\S$ an odd number of times }
\end{array}  
\right)  \notag  \\  \\  \notag  
&=& \bbP_{\Lambda,\beta}^{\text{FK}(2)} \left( 
\begin{array}{c}
\mbox{no  loop along  the random edge set $\omega$ } \\[1ex]   
\mbox{ crosses $\S$ an odd number of times }
\end{array}  
\right) 
\end{eqnarray}  
\end{theorem}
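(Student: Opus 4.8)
The plan is to recognize that the disorder operator $T_{\S}$ acts on the partition function exactly as a sign-flip of the couplings on those edges $b\in\mathcal E$ that cross $\S$, and that this is precisely the setup of the frustration-adjusted RCR developed in Section~\ref{sec:frustration}. By definition \eqref{meanTS}, $\langle T_{\S}\rangle_{\Lambda,\beta}=Z_{\Lambda,\beta}(T_{\S}(J))/Z_{\Lambda,\beta}(J)$. Writing $\mathcal J' := T_{\S}(J)$, the key observation is that $|\mathcal J'|=J$ (the flip only changes signs, not magnitudes), and the ``negative edges'' of $\mathcal J'$ are exactly the edges crossing $\S$. Thus $\langle T_{\S}\rangle_{\Lambda,\beta}=Z_{\Lambda,\beta}(\mathcal J')/Z_{\Lambda,\beta}(|\mathcal J'|)$, which is the left-hand side of \eqref{F_dup} in Theorem~\ref{RSR_FF}, applied with this choice of couplings.

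Next I would apply Theorem~\ref{RSR_FF} directly. Its conclusion \eqref{F_dup} gives
\be
\frac{Z_{\Lambda,\beta}(\mathcal J')}{Z_{\Lambda,\beta}(|\mathcal J'|)}
= \bbP_{\Lambda,\beta}^{\emptyset,\emptyset}\left(\mbox{$\n_1+\n_2$ is FF}\right),
\ee
where the probability refers to the doubled sourceless random current for the ferromagnetic coupling $|\mathcal J'|=J$. The remaining work is purely a translation of the frustration-free condition into the geometric language of loops crossing $\S$. By the definition of $\mathbf J$-frustration-freeness for currents, $\m:=\n_1+\n_2$ is FF precisely when every closed loop supported on the edges of positive flux contains an \emph{even} number of negative couplings of $\mathcal J'$, i.e. an even number of edges crossing $\S$. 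Since crossing $\S$ an even number of times is the complement of crossing it an odd number of times, the event ``$\m$ is FF'' coincides with ``no loop along the support of $\m$ crosses $\S$ an odd number of times.'' This yields the first equality in \eqref{meanTS1}.

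For the second equality I would invoke the frustration-adjusted FK representation \eqref{FK_mod_Z} of the earlier theorem, applied verbatim to the same coupling $\mathcal J'$. That theorem gives
\be
\frac{Z_{\Lambda,\beta}(\mathcal J')}{Z_{\Lambda,\beta}(|\mathcal J'|)}
= \mathbb P_{\Lambda,\beta}^{\text{FK}(2)}\left(\mbox{$\omega$ is $\mathcal J'$-frustration free}\right),
\ee
and by Definition~\ref{def:FF_FK} an edge configuration $\omega$ is $\mathcal J'$-FF exactly when every closed loop on $\omega$ contains an even number of negative edges, i.e. an even number of edges crossing $\S$. This is the FK analogue of the loop condition, giving the second line of \eqref{meanTS1}.

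**The main obstacle** I anticipate is not analytical but bookkeeping: one must verify cleanly that the higher-dimensional disorder operator $T_{\S}$ associated to a plaquette surface $S\subset(\Z^3)^*$ flips exactly the couplings on the edges piercing $S$, and that the resulting notion of a current-loop ``crossing $\S$'' (counted mod $2$) agrees with the abstract parity $\mathcal F^{\mathcal J'}$ of flux through negative edges. This requires care with the dual pairing between edges of $\Z^3$ and plaquettes of $(\Z^3)^*$ described in Section~\ref{sec:duality}, and with the boundary convention $\partial S$; the topological content is that a loop crosses the surface $S$ an even number of times if and only if it is null-homologous relative to $\partial\S$, which is what makes the parity well-defined on loops. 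Once this geometric dictionary is pinned down, both equalities in \eqref{meanTS1} follow as immediate specializations of the two frustration-adjusted representations already proved.
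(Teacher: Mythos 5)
Your proposal is correct and follows exactly the route the paper intends: the paper gives no separate proof of this theorem, stating only that it ``readily follows'' from Section~\ref{sec:frustration}, and your argument — applying Theorem~\ref{RSR_FF} (eq.~\eqref{F_dup}) and the frustration-adjusted FK identity \eqref{FK_mod_Z} to the flipped couplings $\mathcal J'=T_{\S}(J)$ with $|\mathcal J'|=J$, then translating the FF condition into the even-crossing condition for loops — is precisely the intended specialization. Your closing remark about verifying the dictionary between ``negative edges of $\mathcal J'$'' and ``edges piercing $S$'' is the right point to be careful about, and it is handled by the dual pairing set up in Section~\ref{sec:duality}.
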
 
\noindent Where the support of  $\n_1+\n_2$ is the set of edges with $\n_1(b)+\n_2(b) \neq 0$, and the 
probabilities are presented  in the notation of Section~\ref{sec:frustration}: \\ 
 $\bullet$  $\bbP_{\Lambda,\beta}^{\emptyset, \emptyset}  $ refers to a duplicated set of  random currents with $\partial \n_1=\partial \n_2 = \emptyset$  \\ 
 $\bullet$  $\bbP_{\Lambda,\beta}^{\text{FK}(2)}$ refers to 
the Fortuin-Kasteleyn random cluster model, at $q=2$, \\ 
and the relevant condition  is depicted in  Fig. ~\ref{fig:XYZ}.

An example of interest is  the nearest neighbor Ising model  in 
\be \notag
\Lambda \equiv \Lambda_{L,K} = (-L,L]^2\times (-K,K]
\ee
 under the $+$ boundary conditions, and   $S$ the dual surface splitting $\Lambda_{L,K}$ at the level $x_1=1/2$.  In that case,  by a simple gauge transformation (flipping the spins over the lower part of $\Lambda_L$) one has
 \be \label{eq:ZZ}
 \langle T_{\S} \rangle_{\Lambda,\beta}^+  = \frac{ Z^{\pm }_{\Lambda.\beta}}  { Z^{+ }_{\Lambda.\beta}}  
  \ee  
  where $Z^{+}_{\Lambda.\beta}$ and $Z^{\pm}_{\Lambda.\beta}$ are the partition functions of the Ising model in $\Lambda_{L,K}$ with separately controlled boundary conditions on the upper and lower halves of $\Lambda_L$.

The above considerations shed some light on two related topics:  \\ 
1)   Equation~\eqref{meanTS1} provides a useful handle on the surface tension, and the related string tension in case the latter vanishes (Section~\ref{sec:ST}). \\ 
2) In view of the duality relation between the three dimensional Ising model and a $Z_2$  lattice gauge model,  \eqref{meanTS}  opens a path towards useful estimates on the Wilson loop variable in the latter (Section~\ref{sec:LGM}).  

Since Ising model's surface tension will  be invoked  in the discussion which follows, let us briefly recall here some  of the known results on this subject.

\subsection{Ising model's surface tension}   \label{sec:ST}  \mbox{}

The ferromagnetic Ising model on $\Z^d$ with the nearest neighbor interaction $J_{x,y}= J \delta_{\|x-y\|, 1}$ at low temperatures has two 
extremal translation invariant states.  These can be induced through the two extremal choices of the boundary conditions, one being  $\sigma = +1$ other $\sigma=-1$.    The surface tension between these two states  is defined
as~\footnote{The limit in \eqref{tau} exists irrespectively of the order in which $L,K \to \infty$, as can be seen through standard monotonicity arguments that are enabled by Griffith's inequalities.}:
\be \label{tau}
\tau(\beta) = \lim_{L\to \infty}  \lim_{K\to \infty} \frac{-1}{ (2L)^{d-1}}
\ln \left[
\frac{ Z^{\pm }_{\Lambda_{L,K}, \beta}}  { Z^{+ }_{\Lambda_{L,K}, \beta}}  \right]
\ee

 Existence of the limit defining $\tau$ was first proved in \cite{AGM73}.    For $\beta$ at which it is strictly positive:
 \be 
\frac{ Z^{\pm }_{\Lambda}(\beta)}  { Z^{+ }_{\Lambda}(\beta)}  \approx e^{- \tau(\beta)\, \text{area}(\Lambda) } 
\ee
(in a suitable sense) with $\text{area}(\Lambda) = (2L)^2$.  

A key result of  Lebowitz-Pfister~\cite{LebPfi81} (proved through Lebowitz inequalities~\cite{Leb74, Leb77} and other auxiliary statements cited there)   is that $\tau(\beta)>0$ holds over precisely the model's low temperature regime, as defined by symmetry breaking, or equivalently stated  the coexistence of two different translation invariant states.    
More explicitly : 
\begin{theorem}[~\cite{LebPfi81,BLP80}]  \label{thm:ST}
In the ferromagnetic n.n. Ising model on  $\Z^d$: 
 \be
   \tau(\beta) \leq  2 J  \beta M(\beta)^2 \quad \mbox{and}  \quad d\, \tau / d\, \beta  \geq  2 J M(\beta)^2  \, , 
\ee 
with $M(\beta) = \langle \sigma_0 \rangle^+_\beta$  the model's spontaneous magnetization and the derivative to be interpreted in the weak sense.  And by implication 
\be 
\tau(\beta) >0 \quad  \Longleftrightarrow \quad \beta>\beta_c \,\, \,. 
\ee 
\end{theorem}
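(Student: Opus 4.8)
The plan is to reduce both quantitative bounds to a single differential identity for the surface free energy and then read off the dichotomy from them. Since at $\beta=0$ the boundary conditions decouple from the bulk and $Z^{\pm}_{\Lambda,\beta}=Z^{+}_{\Lambda,\beta}$, one has $\tau(0)=0$; moreover \eqref{ZZ_RCR} exhibits $Z^{\pm}/Z^{+}$ as a probability, so $Z^{\pm}\le Z^{+}$ and hence $\tau(\beta)\ge 0$ throughout. The engine is the elementary identity obtained by differentiating $\ln Z$ in $\beta$, namely $\partial_\beta \ln Z_{\Lambda,\beta}=\sum_{\{x,y\}}J_{x,y}\,\langle\sigma_x\sigma_y\rangle_{\Lambda,\beta}$. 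Applied to the two boundary conditions entering \eqref{tau}, for the nearest-neighbour coupling $J_{x,y}=J\,\delta_{\|x-y\|,1}$ this gives
\[
\frac{\partial}{\partial\beta}\Big(-\ln\frac{Z^{\pm}_{\Lambda,\beta}}{Z^{+}_{\Lambda,\beta}}\Big)\;=\;J\sum_{\langle x,y\rangle}\big[\langle\sigma_x\sigma_y\rangle^{+}_{\Lambda,\beta}-\langle\sigma_x\sigma_y\rangle^{\pm}_{\Lambda,\beta}\big].
\]
Griffiths' inequality, applied after the gauge flip of Section~\ref{sec:bc} that turns the $\pm$ state into a $+$ state with the interface couplings reversed, shows that each summand is non-negative and that the differences decay away from the separating plane. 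Dividing by the normalisation and taking the limits of \eqref{tau} then makes $\tau$ differentiable in the weak sense, with $d\tau/d\beta$ equal to the excess interface energy per unit area.

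For the lower bound I would keep only the single layer of bonds $b$ crossing the plane $x_1=\tfrac12$, whose cardinality matches the normalisation $(2L)^{d-1}$ of \eqref{tau}; all remaining summands are non-negative and may be discarded. On such a bond the $+$-state endpoints lie deep in the bulk, so by the FKG/Griffiths lower bound $\langle\sigma_{x_b}\sigma_{y_b}\rangle^{+}\ge\langle\sigma_{x_b}\rangle^{+}\langle\sigma_{y_b}\rangle^{+}\to M(\beta)^2$, while the reflected ($-$) half of the $\pm$ state contributes with the opposite sign; matching the two halves through the one–point representations \eqref{eq:1pnt0} and \eqref{eq:magnetization} is what I expect to produce the factor $2$ and the bound $d\tau/d\beta\ge 2JM(\beta)^2$. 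For the upper bound I would instead use the gauge picture to write $-\ln(Z^{\pm}/Z^{+})=\beta\int_{-J}^{J}\sum_{b\in I}\langle\sigma_{x_b}\sigma_{y_b}\rangle_{s}\,ds$, where $I$ is the crossing layer and $\langle\,\cdot\,\rangle_s$ is the $+$ state with all interface couplings set to the common value $s$. The claimed $\tau\le 2J\beta M(\beta)^2$ then amounts to the statement that the average of $\langle\sigma_{x_b}\sigma_{y_b}\rangle_s$ over $s\in[-J,J]$ and over $b\in I$ is at most $M(\beta)^2$, which I would extract from the Lebowitz inequalities together with the cancellation between the $s$ and $-s$ contributions.

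The dichotomy is then immediate from the two bounds. For $\beta<\beta_c$ one has $M(\beta)=0$ by Theorem~\ref{thm:sharpness}, so the upper bound forces $\tau(\beta)\le 2J\beta M(\beta)^2=0$, and with $\tau\ge0$ this gives $\tau(\beta)=0$; the same applies at $\beta=\beta_c$ since $M(\beta_c)=0$. For $\beta>\beta_c$, integrating the lower bound from $\beta_c$ yields $\tau(\beta)\ge\int_{\beta_c}^{\beta}2J\,M(s)^2\,ds>0$, because $M(s)>0$ for $s>\beta_c$. The main obstacle is pinning the precise constant: both inequalities hinge on controlling the excess interface energy per unit area, above and below, by exactly $M(\beta)^2$, and it is here that one needs the Lebowitz inequalities and careful sign bookkeeping in the passage from the $\pm$ boundary condition to a ferromagnetic system with reversed interface couplings — the crude bound $\langle\sigma_x\sigma_y\rangle\le1$ alone would only deliver $\tau\le 2J\beta$ and a vacuous lower bound.
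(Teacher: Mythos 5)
The paper does not actually prove Theorem~\ref{thm:ST}: it quotes it from \cite{LebPfi81,BLP80}, remarking only that the proof there goes ``through Lebowitz inequalities~\cite{Leb74, Leb77} and other auxiliary statements cited there.'' So there is no in-paper argument to compare against; your proposal has to be judged against the strategy of those references, which it does correctly identify in outline (differentiate $\ln Z$ in $\beta$, interpolate the interface couplings from $+J$ to $-J$, reduce everything to correlation inequalities concentrated on the bonds crossing the separating hyperplane, and deduce the dichotomy by combining the upper bound with the integrated derivative bound --- that last deduction is fine as you wrote it).

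The genuine gap is at the quantitative core of both bounds. You justify the positivity of the summands $\langle\sigma_x\sigma_y\rangle^{+}-\langle\sigma_x\sigma_y\rangle^{\pm}$, and the per-bond estimate that yields the constant $2JM(\beta)^2$, by ``Griffiths' inequality, applied after the gauge flip'' together with FKG. Neither tool applies here: after the gauge flip the $\pm$ system has antiferromagnetic couplings across the interface, so Griffiths' second inequality is unavailable, and $\sigma_x\sigma_y$ is not a monotone function of the spins, so FKG stochastic domination of the $\pm$ state by the $+$ state says nothing about two-point functions. The statements you need --- that every summand is nonnegative, that the off-interface contributions can be discarded, and that each crossing bond contributes at least (for the lower bound) and at most (for the upper bound, after the $s\leftrightarrow -s$ pairing) $2M(\beta)^2$ --- are exactly the content of the duplicate-variable inequalities of Lebowitz, which you name in your final paragraph but never state or apply; as you yourself concede, without them the argument only delivers the vacuous $\tau\le 2J\beta$ and no lower bound at all. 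A secondary unaddressed point: passing the finite-volume derivative identity to the weak derivative of the limit $\tau(\beta)$ requires a monotonicity or convexity argument (the difference of the two convex functions $\ln Z^{+}$ and $\ln Z^{\pm}$ is not itself convex), which is part of the ``auxiliary statements'' the paper alludes to. In short: right skeleton, but the load-bearing inequalities are mis-attributed and missing, so the two displayed bounds are not established.
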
 

\section{The $\Z_2$  lattice gauge model} \label{sec:LGM}

\subsection{The pure gauge model and its phases} \mbox{} 

Lattice $\Z_2$ gauge models on $\Z^d$   have been of interest for a number of distinct reasons.    Balian-Drouffe-Itzykson~\cite{BDI75} pointed out that it provide a relatively simple setup for the study of  the mechanism which K. Wilson~\cite{Wil74} outlined towards an explanation of quark confinement.  It is also a special case of the more general class of binary spin systems with many-body interactions, which were discussed  earlier by F. Wegner~\cite{Weg71} in his  study  of duality  relations and phase transitions in generalized Ising models.   Some twenty years later the model's four-body interaction  terms reappeared as stabilizer operator in A. Kitaev's  quantum toric code~\cite{Kit03}, which provides an interesting example of  topological order and a mechanism for quantum error correction~\cite{CRSW25}.

Lattice gauge models (LGM) have recently drawn renewed attention, including the case of $\Z_2$ and other discrete symmetry groups~\cite{Cao20,Cha21,CaoCha23,DunSch25}. While this may not be the place for a thorough discussion of the physics and mathematics of gauge theories, we recall here that random current representation provides a useful tool  also for lattice gauge  models.

More explicitly, we complete here the proof that the $\Z_2$ lattice gauge model on $\Z^3$ undergoes a sharp transition, at which the expectation value of its Wilson loop variable transitions from area-law (fast) decay, to perimeter (slow) decay rate.   

The area law decay has already been established in the afore mentioned work of Lebowitz-Pfister~\cite{LebPfi81} to occur for (exactly) $\beta < \beta_c$, coinciding with the region over which the dual Ising model exhibits long range order.     
This is supplemented here with proof of a perimeter-law lower bound throughout the complementary phase,  in which the dual Ising model's spin-spin correlation's decay exponentially fast.    (The duality relation is recapitulated below.)  

It should be noted that sharpness of the analogous phase transition in the four dimensional self dual model was recently presented in \cite{DunSch25}.  In their analysis P. Duncan and B. Schweinhart extend the random surface analysis of \cite{ACCFR83} to $\Z_q$ gauge models in four dimensions,  paying close attention to the topological issues which arise there.  
These do not yet show in full in the three dimensional case discussed here.   
  
In the setup and the notation of~\cite{BDI75}, the fundamental variables of the $\Z_2$ gauge model on $\Z^d$ are associated with the graph's edges, and denoted here $A_{i,j} $ or $A_b$ (with $(i,j)$ or $b$ ranging over $\mathcal E$), with values   in $\Z_2$ (realized as $\{-1,1\}$).

In the pure gauge model the interaction terms are associated with the two-dimensional plaquettes of  $\Z^d$, to which we generically refer by the symbol $p$.   The model's Hamiltonian is:
\be \label{H_LGM}
H= - \sum_{p\in \mathcal P}  A_{i,j} A_{j,k}A_{k,l}A_{l,i }   \equiv - \sum_p \prod_{b\in \partial p} A_b  \,, 
\ee
where $\mathcal P$ is the collection of plaquettes of  $\Z^d$, and $\partial p$ denotes the boundary of $p$, which consists of four edges 
($b\in \mathcal E$).

The model's infinite volume version on $\Z^d$ is naturally 
constructed as the $L\to \infty$ limit of LGM formulated over $(-L,L]^d \cap \Z^d$.    
Convergence of the joint distribution of  $\{A_b\}_{b\in \mathcal {E}}$ can be deduced through monotonicity arguments based on the Griffith's second  inequality, which applies also under the multi-spin interaction \eqref{H_LGM}.   

The system's Hamiltonian and its  a-priori counting measure are invariant under the gauge transformations, which are mapping of the form  
\be 
A_{i,j} \mapsto     \tau_i A_{i,j}  \tau_j \,. 
\ee 
at an arbitrary site function $\tau: \mathcal V(\Z^d) \to \Z_2$.

Gauge invariant quantities of interest include the $\Z_2$ Wilson loop operators  
 \be  \label{A_gamma}
 A_\gamma  = \prod_{b\in \gamma} \, A_b\,,
 \ee
and their expectation values, the product being taken over the cyclic sequence of edges forming the loop ($\gamma$).

As a representative example, in the discussion which follows we shall focus on the $\ell\times \ell$  loops $\gamma_\ell$, each forming the  boundary  of the restriction of the  square $S_\ell = (1,L]^2\times^{d-2}\{0\}$ to the $\{x_3=0\}$ plane.   

In $d=2$ dimensions in the model's  Gibbs state equilibrium state the variables $\{A_p\}_{p\in \mathcal P}$ are  independently distributed 
and consequently the Wilson loop's expectation value is given by: 
\be 
\big\langle  \prod_{b\in \partial \gamma_\ell} A_b  \big\rangle_{\Z^2,\beta} = (\tanh \beta)^{\ell^2} 
\ee
with no sign of  phase transition in $\beta$.  

That however changes at higher dimensions.  In particular, for the model in three dimensions the following is true, and provable.  n(p)
 
\begin{theorem} \label{thm:LGM_sharp} The lattice gauge model on $\Z^3$, exhibits the following sharp transition from an area-law to a perimeter-law:
\begin{eqnarray}    
0 \leq \beta <\beta_c  &\Longrightarrow&   \big\langle  \prod_{b\in \partial \gamma_\ell} A_b  \big\rangle_{\Z^3,\beta}  \, \leq \, e^{- \alpha(\beta) \ell^2}  \quad  \mbox{ at $\alpha(\beta) >0$}
\label{conf}
   \\      
   \label{eq:deconf}
\beta > \beta_c  &\Longrightarrow&   \big\langle  \prod_{b\in \partial \gamma_\ell} A_b  \big\rangle_{\Z^3,\beta}\,  \geq \,  e^{-  m(\beta) \ell }    \qquad  \mbox{ at $m(\beta) < \infty$}
\end{eqnarray} 
The transition point $\beta_c $ is related by duality to the  Ising model's critical inverse temperature through  
\be \label{beta_duality} 
 \coth(\beta_c^{(\text{LGM})})  =  \exp \{2\beta^{(\text{Ising})}_c \}    \,.  
\ee
and the  asymptotic value of $\alpha(\beta)$ (for $\ell \to \infty$) coincides with the dual Ising model's surface tension (defined in \eqref{tau}). 
\end{theorem}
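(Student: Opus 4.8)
The plan is to pass to the dual Ising model and to read off both bounds from the positive disorder-operator representation developed in Section~\ref{sec:frustration}. First I would invoke the Wegner / Kramers--Wannier duality between the $\Z_2$ gauge model on $\Z^3$ and the nearest-neighbour Ising model on $(\Z^3)^*$, with inverse temperatures tied by the duality map underlying \eqref{beta_duality}: under it the Wilson loop expectation $\langle A_{\gamma_\ell}\rangle_{\Z^3,\beta}$ equals the disorder-operator expectation $\langle T_{\S_{\gamma_\ell}}\rangle$ of the dual Ising model, where $\S_{\gamma_\ell}$ is any dual surface spanning $\gamma_\ell$. The value is independent of the spanning surface, since two such surfaces differ by a closed surface whose insertion acts as a gauge transformation. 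This reduces the theorem to two asymptotic statements about $\langle T_{\S_{\gamma_\ell}}\rangle$, for which I would use the probabilistic representation \eqref{meanTS1}.

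For the area law \eqref{conf}, the regime $\beta<\beta_c$ corresponds through this map to the \emph{ordered} phase $\beta^{\text{Ising}}>\beta_c^{\text{Ising}}$ of the dual model, where by Theorem~\ref{thm:ST} the surface tension satisfies $\tau(\beta^{\text{Ising}})>0$. For a spanning surface the disorder operator is exactly the ratio $Z^{\pm}/Z^{+}$ of \eqref{eq:ZZ}, so the definition \eqref{tau} of $\tau$ gives, for a square loop of side $\ell$ spanning an area-$\ell^2$ surface, the bound $\langle A_{\gamma_\ell}\rangle\le e^{-\alpha(\beta)\ell^2}$ with $\alpha(\beta)\to\tau(\beta^{\text{Ising}})>0$ as $\ell\to\infty$. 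This recovers the Lebowitz--Pfister confinement estimate and at the same time identifies the area-law rate with the dual surface tension, as claimed.

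The new content is the perimeter-law lower bound \eqref{eq:deconf}, which I would establish in the complementary regime, where duality places the dual Ising model at $\beta^{\text{Ising}}<\beta_c^{\text{Ising}}$; there $\tau=0$ and, crucially, the double-current connectivities decay exponentially, by \eqref{eq:XtoY} together with the subcritical bound \eqref{eq:offcriticality}. By \eqref{meanTS1}, $\langle T_{\S_{\gamma_\ell}}\rangle$ is the probability $\bbP^{\emptyset,\emptyset}$ that no loop of $\n_1+\n_2$ crosses $\S_{\gamma_\ell}$ an odd number of times, and since such an odd crossing is equivalent to the loop linking the boundary $\gamma_\ell$ oddly, the target lower bound $e^{-m\ell}$ should emerge from the following mechanism. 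Any odd-linking loop must thread through $\gamma_\ell$ and hence pass within a bounded distance of it; in the subcritical phase the weight of a loop passing near a fixed edge decays exponentially in its length, so summing over the $O(\ell)$ edges of $\gamma_\ell$ shows that the expected number of odd-linking loops is $O(\ell)$. To convert this first-moment control into a lower bound on the probability of seeing \emph{no} such loop, I would pass to the FK representation given in the second line of \eqref{meanTS1}, in which the event that no $\omega$-loop crosses $\S_{\gamma_\ell}$ oddly is decreasing in $\omega$; the FKG inequality then makes this event positively associated, so it factorizes below into $O(\ell)$ localized decreasing events, each of probability bounded below by a constant uniform in $\ell$. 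The product of these factors yields $\langle A_{\gamma_\ell}\rangle\ge e^{-m\ell}$ with $m<\infty$.

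The hard part will be this final factorization. The first-moment bound is soft, but the conversion into a product lower bound must account for the topologically nontrivial, long-range crossings that a naive localization ignores, since a single large loop winding around $\gamma_\ell$ is rare yet still contributes to the bad event. The delicate point is to choose the local decreasing events so that their intersection genuinely forbids \emph{every} odd crossing --- including the rare long ones --- while keeping each factor bounded below; this is exactly where the subcritical exponential decay of the $\omega$-connectivities is indispensable, as it renders the long-range contributions summably small and lets them be absorbed into a uniform per-segment constant. As a fallback I would keep in reserve a direct Peierls-type sum over minimal odd-linking contours anchored along $\gamma_\ell$, whose near-independence again rests on the same exponential decay.
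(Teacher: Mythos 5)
Your reduction to the dual Ising model, the identification of the area-law rate with the surface tension via \eqref{eq:ZZ} and \eqref{tau}, and the choice of regime (deconfinement $\Leftrightarrow$ subcritical dual Ising with exponentially decaying correlations) all match the paper. The gap is in the one step you yourself flag as ``the hard part'': you never produce the factorization that converts the no-odd-crossing event into a product of $O(\ell)$ uniformly bounded factors, and your proposed localization around the edges of $\gamma_\ell$ runs into exactly the obstruction you name --- loops that cross $S_{\gamma_\ell}$ far from its boundary and return around the outside need not pass near any fixed edge of $\gamma_\ell$, so ``per-segment'' events anchored on $\gamma_\ell$ do not forbid them. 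A first-moment bound on the number of odd-linking loops also cannot, by itself, yield a \emph{lower} bound on the probability that none occur.

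The paper's resolution is different in a way that dissolves this difficulty rather than confronting it. One introduces the cylinder $\Gamma_\ell = \Lambda^*\cap[S_\ell\times(\R+1/2)]$ over the spanned square and observes that any odd-crossing loop must $\omega$-connect the upper half $(\partial\Gamma_\ell)_1$ of the cylinder's boundary to the lower half $(\partial\Gamma_\ell)_2$; hence $\langle T_{S_\gamma}\rangle \ge \bbP^{\text{FK}(2)}\bigl((\partial\Gamma_\ell)_1 \not\leftrightarrow (\partial\Gamma_\ell)_2\bigr)$, as in \eqref{eq:string2}. This global non-connection event is then written as an intersection over \emph{pairs} $(u,v)$ of boundary vertices, one from each half, of the decreasing events $\{u\not\leftrightarrow v\}$; FKG factorizes it into $\prod_{u,v}\bigl[1-\langle\sigma_u\sigma_v\rangle\bigr]$, and Lemma~\ref{lem:conv} converts the product into $\exp\bigl[-\lambda\sum_{u,v}\langle\sigma_u\sigma_v\rangle\bigr]$. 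The long-range crossings you worry about are then handled automatically: they correspond to distant pairs $(u,v)$, whose contribution to the double sum is summably small by the subcritical exponential decay, leaving a total of order $\mathrm{per}(\gamma_\ell)$ dominated by pairs within a correlation length of the seam where the two halves of $\partial\Gamma_\ell$ meet. So the missing idea is not FKG or exponential decay (you have both) but the specific choice of decreasing events --- pairwise non-connections across the cylinder boundary rather than events localized along $\gamma_\ell$ --- together with the identity $\bbP^{\text{FK}(2)}(u\not\leftrightarrow v)=1-\langle\sigma_u\sigma_v\rangle$ that makes each factor computable. Your Peierls-type fallback is closer in spirit to a direct contour expansion and would require independent control of its convergence; it is not what the paper does.
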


The  statements pertaining to the phase of area-low \eqref{conf} were proved by  Lebowitz-Pfister~\cite{LebPfi81} in the course of their study of the Ising model's surface tension (quoted above in Theorem~\ref{thm:ST}). 
Their analysis, aided by the other works which are cited there, relies crucially on Ising model's Lebowitz correlation inequalities~\cite{Leb77} and,  in transferring the results from the Ising model to the $\Z_2$ LGM, on the known duality relation~\cite{Weg71,BDI75} which is explained next.

Proof of the deconfinment lower bound \eqref{eq:deconf} is presented below, in Section~\ref{sec:deconf}.

 \subsection{Duality in three dimensions} \mbox{}

A key tool for the lattice gauge model's study is  \emph{duality}.  
In graph theoretic sense  (cf. Section~\ref{sec:duality}) plaquettes' dual objects in three dimensions are edges, and in four dimensions plaquettes of the dual lattice.  So it is natural to expect  the gauge model over $\Z^d$ to be dual to the Ising model in case of $d=3$ and self dual in the case of $d=4$.  

As is shown next the $\Z_2$ gauge model's random current representation provides a particularly convenient tool for the presentation of  this relation in three dimensions.    
 
\begin{theorem} 
For the $\Z_2$ lattice gauge model (LGM) on $\Z^3$: \\[1ex]   
\indent 1)  the  LGM's partition function in rectangular  subgraphs $\Lambda \subset \Z^3$ 
at inverse temperature $\beta$
 is simply related that of the Ising model on  dual $\Lambda^*$  at inverse temperature $\beta^*$ which is linked to $\beta$ by 
 \be   
e^{2\beta^*} = [\coth(\beta)]   \,.
\ee

\indent 2)  Under the above  correspondence, for each simple surface $\mathcal S$ of plaquettes of $\Z^3$ (within $\Lambda$) 
\be \label{Sdual}n(p)
\big\langle \prod_{b\in \partial \mathcal S} A_b \big\rangle^{(\text{LGT})}_{\Lambda,\beta} = \langle T_S \rangle^{(\text{Ising})}_{\Lambda^*, \beta^*}
\ee
where $\Lambda^*$ is the graph dual of $\Lambda$ and $T_S$ is as in \eqref{meanTS}, 
\end{theorem}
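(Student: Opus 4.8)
The plan is to expand both partition functions as weighted sums over one and the same family of surfaces built from plaquettes of $\Z^3$, and to match the two sums through the three--dimensional cell duality of Section~\ref{sec:duality}, which pairs each plaquette of $\Z^3$ with an edge of $(\Z^3)^{*}$. I would begin on the gauge side with its random current representation. Expanding every plaquette factor $\exp(\beta\prod_{b\in\partial p}A_b)$ into its Taylor series introduces a plaquette current $\mathbf n:\mathcal P\to\Z_+$, and the ensuing independent average over the edge variables $\{A_b\}$ enforces, at each edge $b$, an even total incident flux $\sum_{p\ni b}n(p)$. Denoting by $\partial\mathbf n$ the set of edges of odd incident flux, this produces
\be
Z_{\Lambda,\beta}^{(\mathrm{LGT})}=\sum_{\partial\mathbf n=\emptyset}w(\mathbf n),\qquad w(\mathbf n)=\prod_{p}\frac{\beta^{n(p)}}{n(p)!},
\ee
the exact analogue of \eqref{eq:8} with the source constraint now carried by edges instead of vertices. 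Inserting the Wilson loop $A_\gamma$ of \eqref{A_gamma}, with $\gamma=\partial\mathcal{S}$, shifts this constraint to $\partial\mathbf n=\gamma$, so that $\langle A_\gamma\rangle^{(\mathrm{LGT})}_{\Lambda,\beta}$ equals the ratio of the $\partial\mathbf n=\gamma$ sum to the $\partial\mathbf n=\emptyset$ sum. Reducing $\mathbf n$ modulo $2$ converts the source--free currents into closed plaquette--surfaces and the sourced ones into surfaces with boundary $\gamma$; equivalently, the high--temperature form of these sums assigns the weight $\tanh\beta$ to each plaquette of the surface.

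Next I would cross to the dual lattice. In the box $\Lambda$ every closed primal surface is the boundary $\partial_3 V$ of a set $V$ of elementary cubes, uniquely up to complementation; since cubes of $\Z^3$ are dual to vertices of $\Lambda^{*}$, such a $V$ is exactly a spin configuration $\sigma:\mathcal V^{*}\to\{\pm1\}$ read up to a global flip, and the area $|\partial_3 V|$ is the number of disagreeing bonds of $\sigma$. The Ising low--temperature (contour) expansion on $\Lambda^{*}$ writes $Z^{(\mathrm{Ising})}_{\Lambda^{*},\beta^{*}}$ as $\sum_{\sigma}(e^{-2\beta^{*}})^{\#\,\mathrm{disagreeing\ bonds}}$ up to the prefactor $e^{\beta^{*}|\mathcal E^{*}|}$. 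Matching its per--plaquette weight $e^{-2\beta^{*}}$ with the gauge weight $\tanh\beta$ forces $\tanh\beta=e^{-2\beta^{*}}$, i.e. $e^{2\beta^{*}}=\coth\beta$ as in \eqref{beta_duality}, and identifies the two partition functions up to explicit multiplicative constants ($(\cosh\beta)^{|\mathcal P|}$ on the gauge side, $e^{\beta^{*}|\mathcal E^{*}|}$ and a power of two on the Ising side, with $|\mathcal P|=|\mathcal E^{*}|$). This is part~1).

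For part~2) the only modification is the anchoring of the surfaces. On the gauge side the Wilson loop fixes $\partial\mathbf n=\gamma$; under the dictionary above this corresponds on the Ising side to admitting precisely those contours whose boundary is $\gamma$, which is exactly what the disorder operator $T_{S}$ of \eqref{meanTS} does: flipping $J_b\mapsto -J_b$ across the plaquettes of $\mathcal{S}$ (taking $S=\mathcal{S}$) moves the low--temperature domain walls from closed surfaces to surfaces pinned on the loop $\gamma=\partial\mathcal{S}$. Since the explicit prefactors cancel in each ratio, the two quotients coincide,
\be
\big\langle \textstyle\prod_{b\in\partial\mathcal{S}}A_b\big\rangle^{(\mathrm{LGT})}_{\Lambda,\beta}=\langle T_{S}\rangle^{(\mathrm{Ising})}_{\Lambda^{*},\beta^{*}},
\ee
and the probabilistic representation \eqref{meanTS1} then re--expresses the right--hand side as a doubled--current crossing probability, closing the circle with the earlier frustration analysis.

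The hard part is the boundary bookkeeping and the topology it encodes. The surface--to--spin dictionary presumes that every closed primal surface bounds (so that the relevant second homology of the region is trivial) and accounts for the two--to--one global--flip ambiguity; both hold for a contractible box but must be carried in tandem with a precise matching of the gauge boundary data on $\Lambda$ against the Ising boundary data on $\Lambda^{*}$, which is where self--duality fails for the pair $(\Lambda,\Lambda^{*})$. I would fix free (or wired) data on one side, verify that it transcribes to $+$ (respectively free) data on the other, check that the disorder insertion keeps the constrained surfaces in bijection with the domain walls carrying the single defect loop $\gamma$, and confirm that the unmatched multiplicative prefactors cancel identically in every ratio, so that neither the partition--function statement nor the Wilson--loop statement is affected by them.
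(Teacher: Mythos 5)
Your proposal is correct and follows essentially the same route as the paper: both expand the gauge partition function in plaquette currents subject to the edge-parity constraint, identify the resulting closed configurations two-to-one with dual spin configurations (using the trivial topology of the rectangular box), and derive $\tanh\beta = e^{-2\beta^*}$, with the Wilson-loop insertion shifting the constraint so as to reproduce exactly the coupling flips of the disorder operator $T_S$. The only cosmetic difference is that you reduce the currents mod $2$ and match a high-temperature $\tanh\beta$ surface expansion against the Ising low-temperature contour expansion, whereas the paper resums the full integer currents into $\cosh\beta$/$\sinh\beta$ weights per dual bond and reconstructs $\sigma$ by integrating $\sigma_u\sigma_v=(-1)^{n(p)}$ along paths; the duality relation and the cancellation of prefactors in the ratios come out identically.
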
 

\begin{proof}  
1)  Expanding the Gibbs factor as was done for the Ising model in \eqref{eq:8}  we are led  to an analogous representation of the gauge model's partition function, except that in this case the integer valued function $\n$ is defined over the set of plaquettes:
\begin{eqnarray}  \label{LGM_RCR}
Z^{(\text{LGM})}_{\Lambda}(\beta) &=& \frac{1}{ 2^{|\mathcal {E}|}} \sum_{A: \mathcal {E} \to \Z_2}  \prod_{p\in \mathcal {P}} e^{\beta A_{\partial p}} 
=  \frac{1}{ 2^{|\mathcal {E}|}}  \sum_{A: \mathcal {E} \to \Z_2}  \prod_{p\in \mathcal {P}} \left [ \sum_{n(p)=0}^\infty  
\frac{(\beta A_{\partial p})^{n(p)} }{n(p)!}  
 \right] 
\notag \\[2ex]  
 &=&  \sum_{\n:  \mathcal P  \to \Z_+} w(\n) \, \, \id\left[\partial \n = \emptyset\right]
\end{eqnarray}  
with weights
\be w(\n)  :=\prod_{p\in \mathcal P}\frac{ \beta^{\n(p)}}{\n(p) !}  \,.
\ee 
and 
\be \label{48}
\partial \n := \{b\in \mathcal {E}: (-1)^{\sum_{p: b\in \partial p} n(p)} =  \, -1\}\,.
\ee
That is: the boundary of a configuration $\n$ is defined as the set of edges $b$ for which the sum of the  fluxes of plaquettes whose boundary includes $b$ is odd.   
  
The path to duality is short: just notice that under the constraint \eqref{48},  for each $\n$ that contributes to the sum in \eqref{LGM_RCR}, and a $\pm 1$ value for $\sigma_{0^*}$ at a preselected dual site, there exists a unique spin configuration $\sigma: \mathcal{V} \to \{-1,1\}$ which is consistent with $\n$ is the sense that for any pair of neighboring dual sites $u,v\in \mathcal V^*$  and their separating plaquette $p$:
\be \label{eq:curl}
\sigma_u \sigma_v = (-1)^{n(p)}   \, .
\ee

To present the point in elementary term: the constraint \eqref{48} guarantees consistency of the condition \eqref{eq:curl} over any elementary loop of dual sites, which winds around an edge $b\in \mathcal E$.   Under the free boundary condition (though not under the periodic ones) any loop of the dual graph's can be presented as a symmetric difference (or $\Z_2$ product) of such elementary loops.  It follows that constructing a spin configuration through an iteration of \eqref{48} along an arbitrary path produces a globally consistent solution, which is easily seen to be unique modulo a global spin flip.   (This is not the case for periodic boundary conditions due to the presence there of non-contractable loops, i.e. the domain's non-trivial homotopy.)

Returning to the partition function, from where we left it at \eqref{LGM_RCR}: the partial sum of  $w(\n)$ over current configurations $\n$ which produce a \emph{specified spin configuration} $\{\sigma_x\}_{x\in \mathcal V^*}$ is the  product of 
$\cosh(\beta)$ for each nearest neighbor pair of equal spins ($n_{(u,v)^*}$  even) 
and  $\sinh(\beta)$ for pairs with opposite spins ($n((u,v)^*)$ odd).  Hence
\begin{multline}  \label{Z_coth}
Z^{(\text{LGM})}_{\Lambda}(\beta)  
= \frac 12 
 \sum_{\sigma: \mathcal {V}^* \to \{-1,+1\}} 
 \prod_{(u,v)\in {\mathcal E}^*} \left[\cosh(\beta)\right]^{(1 +\sigma_u \sigma_v)/2}
 \left[\sinh(\beta)\right]^{(1 -\sigma_u \sigma_v)/2}\\  
\\ 
= \frac 12  \big[ \cosh(\beta) \sinh(\beta)]^{|\mathcal {E}^*|/2}   
  \sum_{\sigma: \mathcal {V}^* \to \{-1,+1\}} \exp\left\{\sum_{(u,v)\in \mathcal {E}^*}  \beta^* \sigma_u \sigma_v \right\} 
\end{multline} 
with $\beta^*$ defined by the condition
\be
e^{\beta^*} = [\coth(\beta)]^{1/2}  \,.
\ee
Therefore, 
\be 
Z^{(\text{LGM})}_{\Lambda}(\beta)  = 
2^{| \mathcal {V}^* | -1}  \big[ \cosh(\beta) \sinh(\beta)]^{|\mathcal {E}^*|/2}   \,\, 
Z^{(\text{ISING})}_{\Lambda^*}(\beta^*)   
\ee 
from which the first statement follows.\\ 

2)  In expressing by means of \eqref{LGM_RCR}  the partition function modified through the insertion of  the Wilson loop variable $\prod_{p\in S} A_{\partial p}$,  the insertion's effect is equivalent to the flip of the  parity rule for  $n(p)$ of the plaquettes which lie in $S$, which translates into a flip  $\cosh(\beta) \Longleftrightarrow \sinh(\beta)$ over the edges which pierce $S$.  In terms of  \eqref{Z_coth} that is equivalent to the change of the sign in front of $\beta^*$ in the spin-spin interaction terms of dual edges which cross $S$.  Hence \eqref{Sdual}.  
\end{proof}

One may note here that the combination of  \eqref{Sdual}  and \eqref{eq:ZZ} links the gauge model's confinement parameter 
$\alpha(\beta)$ with the surface tension (defined in \eqref{tau}) of the dual Ising model.

\subsection{Proof of the deconfinement lower bound} \label{sec:deconf} \mbox{} 

At one step in the proof  it will be convenient to use the following elementary consequence of the convexity of the exponential function. 
\begin{lemma}  \label{lem:conv} For any $R \in (0,1)$ there exists a unique  $g>0$ at which 
$
1- R = e^{-g \cdot R}\,.
$ 
With $g(R)$ defined by that relation, for any $ r \in (0, R]$:  
\be 
1-r \geq e^{-g(R)\cdot r}
\ee 
\end{lemma}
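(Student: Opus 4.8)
The plan is to treat the two assertions separately: I would first settle existence and uniqueness of $g(R)$ by an explicit computation, and then obtain the inequality from convexity of the exponential, exactly as the statement's phrasing (``the convexity of the exponential function'') suggests.

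For the first part I would simply solve the defining equation. Taking logarithms in $1-R=e^{-gR}$ gives $g=-\ln(1-R)/R$. Since $R\in(0,1)$ we have $1-R\in(0,1)$, so $\ln(1-R)<0$ and this value is strictly positive. Uniqueness is immediate because $g\mapsto e^{-gR}$ is strictly decreasing (as $R>0$), so it meets the level $1-R$ at most once. This pins down $g(R)=-\ln(1-R)/R$.

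For the inequality, the key observation is that the affine function $r\mapsto 1-r$ is precisely the \emph{chord} of the convex function $\phi(r):=e^{-g(R)\,r}$ across the interval $[0,R]$. Indeed $\phi(0)=1$, and by the definition of $g(R)$ one has $\phi(R)=e^{-g(R)R}=1-R$; the unique affine function agreeing with $\phi$ at the endpoints $r=0$ and $r=R$ therefore has slope $((1-R)-1)/R=-1$ and value $1$ at $r=0$, that is, it is exactly $1-r$. Since $\phi$ is convex (its second derivative $g(R)^2 e^{-g(R)r}$ is positive), on $[0,R]$ it lies on or below its chord, and evaluating at any $r\in(0,R]$ gives $e^{-g(R)r}=\phi(r)\le 1-r$, the claimed bound.

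I do not anticipate a genuine obstacle: the whole content is the identification of $1-r$ with the chord, after which convexity finishes in one line. If one preferred to bypass the chord language, the same conclusion follows by setting $f(r):=\ln(1-r)+g(R)\,r$, noting $f(0)=f(R)=0$ and $f''(r)=-1/(1-r)^2<0$, so that concavity of $f$ forces $f\ge 0$ on $[0,R]$; this is just the logarithmic transcription of the convexity argument and serves as a routine cross-check.
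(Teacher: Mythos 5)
Your proof is correct and is exactly the argument the paper has in mind: the paper offers no written proof of this lemma, merely asserting it as ``an elementary consequence of the convexity of the exponential function,'' and your identification of $1-r$ as the chord of the convex function $r\mapsto e^{-g(R)r}$ over $[0,R]$ is precisely that argument, carried out carefully. The explicit formula $g(R)=-\ln(1-R)/R$ and the monotonicity argument for uniqueness are likewise the standard (and only reasonable) route.
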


\begin{proof}[Proof of Theorem~\ref{thm:LGM_sharp}:] \mbox{}   
  
Let $\gamma$ be a flat $\ell \times \ell$ loop on $\Z^3$as described below \eqref{A_gamma}, and  $S_\ell$  the square  spanned by it. 
By duality, for any  domain $\Lambda\subset \Z^3$  which is large enough to contain it,
\be
\big\langle  \prod_{b\in \partial \gamma_\ell} A_b  \big\rangle_{\Lambda,\beta}   = \big\langle T_{S_\gamma} \big \rangle_{\Lambda^*, \beta^*}
\ee  
where 
$\big\langle - \big \rangle_{\Lambda^*, \beta^*} $ is the expectation value over the Gibbs equilibrium state of the dual Ising model, formulated over $\Lambda^*$, at the inverse temperature $\beta^*$ (linked to $\beta$ by \eqref{beta_duality}).   

\begin{figure}[h]               \centering
       \includegraphics[width=0.3\textwidth=6]{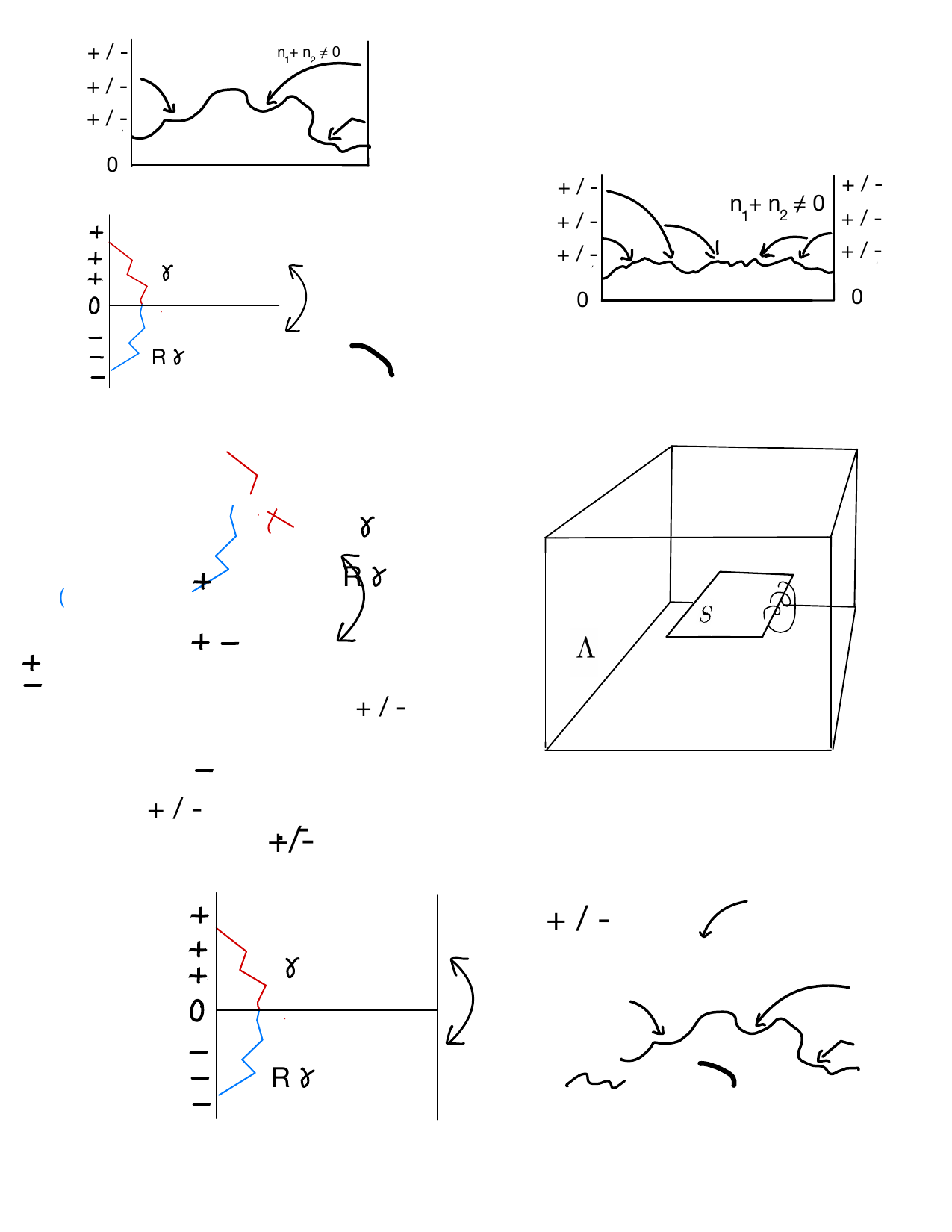} \qquad  \includegraphics[width=0.29\textwidth=6]{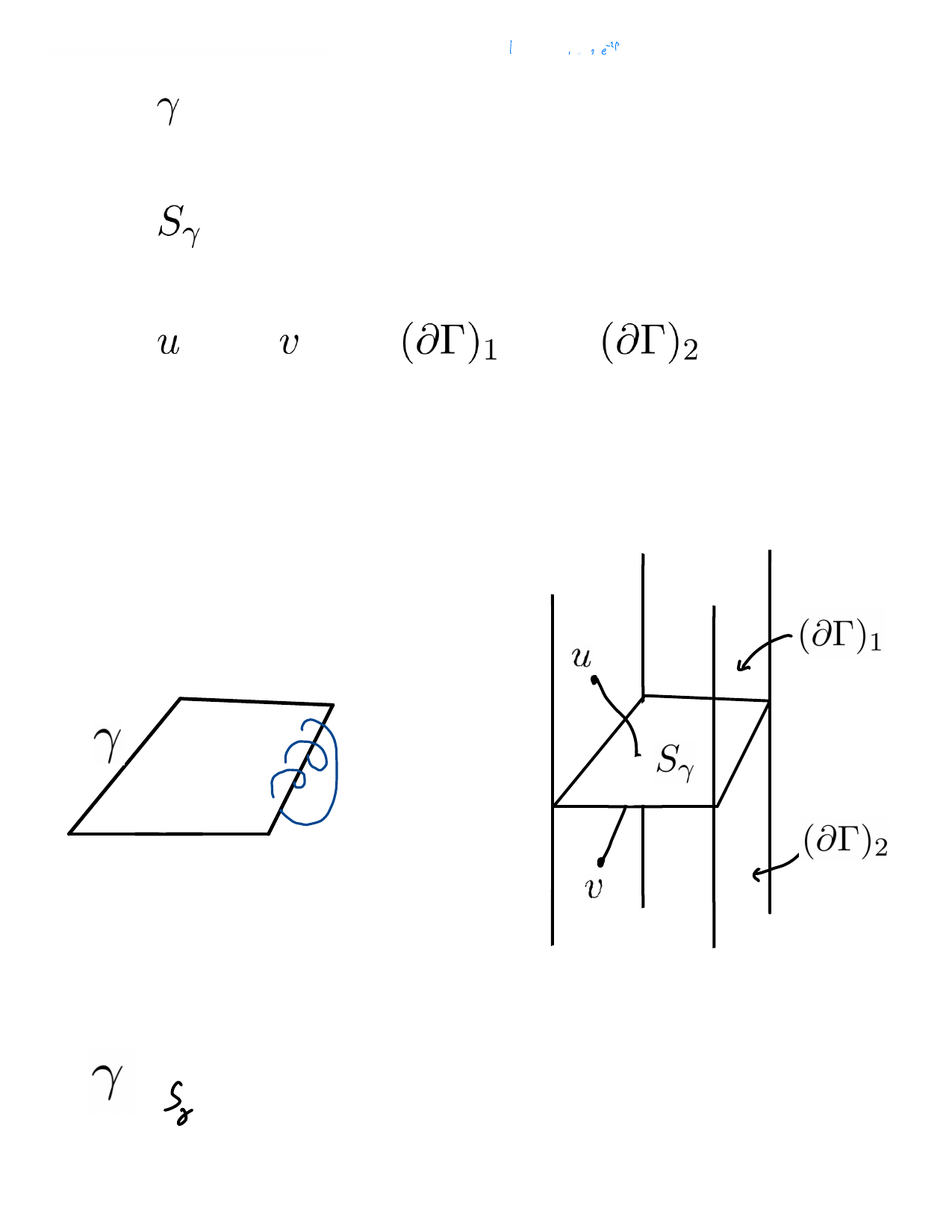}  
             \caption{The odd-winding event depicted on the left requires there being at least one  $u \leftrightarrow v$ connection between the upper and lower sides of the boundary of   the cylindrical set  $\Gamma$ constructed over $S$. Our lower bound on $\langle T_S\rangle $ starts from  an upper bound on the probabilities of such events.           \label{fig:uv}}
\end{figure}

Switching now to the Ising model's perspective, let us invoke \eqref{meanTS1} in its FK version.  
\begin{multline} \label{eq:string} 
\big\langle T_{S_\gamma} \big \rangle_{\Lambda^*, \beta^*} 
=  \\       
= \bbP_{\Lambda^*, \beta^*}^{\text{FK}(2)} \left( 
\begin{array}{c}
\mbox{each closed loop along  the random  } \\[1ex]   
\mbox{ edge set $\omega$ crosses $\S$ an even number of times }
\end{array}  
\right) 
\end{multline}  

For a convenient lower bound on the event stated above let us consider the cylindrical subset of $\Lambda^*$ 
which projects onto $S_\ell$, i.e. 
$\Gamma_\ell = \Lambda^* \cap \left[ S_\ell \times (\R+ 1/2) \right]$.   
The boundary of this set is cut by $S_\ell$ into two parts.  In line with the notation used above we denote 
the upper as $(\partial \Gamma_\ell)_1$ and the lower as $(\partial \Gamma_\ell)_2$.  

By the geometry of this setup (depicted in Fig.~\ref{fig:uv}), the parity condition in \eqref{eq:string} is certainly satisfied for any 
edge set $\omega$ for which the two parts of the cylindrical set's boundary are not $\omega$-connected.  Hence

\begin{eqnarray} \label{eq:string2} 
\big\langle T_{S_\gamma} \big \rangle_{\Lambda^*, \beta^*}  & \geq &  \bbP_{\Lambda^*,\beta^*}^{{\bf FK}(2)}  \left( 
(\partial \Gamma_\ell)_1  
\stackrel{\omega}{\,\, \, \not \! \! \!\longleftrightarrow}
(\partial \Gamma_\ell)_2 \right)  \notag \\  
 &=& 
\bbP_{\Lambda^*, \beta^*}^{{\bf FK}(2)} \left( 
\prod_{\substack {u\in   (\partial \Gamma_\ell)_1}}
\prod_{\substack {v\in   (\partial \Gamma_\ell)_2}}
\id \left[u    \stackrel{\omega}{\,\, \, \not \! \! \!\longleftrightarrow} v \right] \right) \,.  
 \end{eqnarray} 
The next (and key) step is to note  that by the FKG inequality (cf.~\cite{FKG71,Gri06}) under the FK random cluster measures (which have the FKG positive association property)
 the non-connection events, being non-increasing in $\omega$, are positively correlated.  

The probabilities of the individual non-intersection events are expressible in terms of the dual Ising model's two point function as:  
\begin{eqnarray} \label{eq:uv} 
\bbP_{\Lambda^*,\beta^*}^{{\bf FK}(2)} \left( 
u   \stackrel{\omega}{\,\, \, \not \! \! \!\longleftrightarrow} v \right)  &=&   1-\langle \sigma_{u} \sigma_{v} \rangle_{\Lambda^*, \beta^*}   
\notag \\ 
&\geq& 1- \langle \sigma_{u} \sigma_{v} \rangle_{(\Z^3)^*, \beta^*}
 \end{eqnarray} 

Combing the above observations  one learns that
\begin{eqnarray} \label{eq:string3} 
\big\langle T_{S_\gamma} \big \rangle_{\Lambda^*, \beta^*}  
 &\geq& 
\prod_{\substack {u\in   (\partial \Gamma_\ell)_1}}
\prod_{\substack {v\in   (\partial \Gamma_\ell)_2}}
\left[1-\langle \sigma_{u} \sigma_{v} \rangle_{(\Z^3)^*, \beta^*}  \right]
 \end{eqnarray}

At this point  it is convenient to invoke Lemma~\ref{lem:conv}.   
Denoting 
\be R(\beta^*) = \max_{u \neq v} \{  \langle \sigma_u \sigma_v \rangle _{(\Z^3)^*, \beta^*}\}  \quad \mbox{and \quad
$\lambda(\beta) = g(R(\beta^*))$}
\ee
we get
\begin{eqnarray} \label{eq:string4} 
\big\langle T_{S_\gamma} \big \rangle_{\Lambda^*, \beta^*}  
 &\geq&  
 \exp
 \left[- \lambda(\beta) \sum_{u\in   (\partial \Gamma_\ell)_1} \sum_{v\in   (\partial \Gamma_\ell)_2}
\langle \sigma_{u} \sigma_{v} \rangle_{(\Z^3)^*, \beta^*}  \right]
 \end{eqnarray}

The prefactor here is positive ($\lambda(\beta) >0$) for each $\beta >0$ since for each $\beta*<\infty$: $R(\beta^*)<1$, and hence $g(R(\beta^*) >0$.   So the relevant question is the magnitude of the double sum.   

By  the known sharpness of the Ising model's phase transition ~\cite{ABF87} we learn that for any $\beta^* > \beta^{Ising}_c$ the above two point function decays exponentially.  Under this condition a simple estimate shows that 
\be 
\sum_{u\in   (\partial \Gamma_\ell)_1} \sum_{v\in   (\partial \Gamma_\ell)_2}
\langle \sigma_{u} \sigma_{v} \rangle_{(\Z^3)^*, \beta^*}   \, \leq \,  \alpha(\beta) \, \text{per}(\gamma_\ell)
\ee
where the bulk of the sum is due to pairs of sites which are correlation length apart, on different sides of the curve $\gamma_\ell$, and 
$\text{per}(\gamma_\ell)=4\ell$ is the curve's perimeter.

This implies the deconfinement lower bound, that was claimed in \eqref{eq:deconf}.  References to the other statements made in Theorem~\ref{thm:LGM_sharp} were provided above. 
\end{proof}

Let us add here some remarks of technical nature. 
  
1)  While some of the key steps in the above proof was expressed in terms of the FK random cluster representation, they also could have been presented in terms of the RCR random current representation.   The main difference would be to replace the simple FKG argument in the justification of  \eqref{eq:string3} by a slightly more involved one which is available for RCR through subsequent conditioning and monotonicity arguments, of the type indicated here in Section~\ref{sec:path}.  

2)  The switch to RCR produces  a minor gain in that the probability of connection in RCR terms is the square of the corresponding one for FK (i.e. $\langle \sigma_u \sigma_v \rangle$ would be replaced by $\langle \sigma_u \sigma_v \rangle^2$).  This  is in line with the general thumb-rule  that was outline above:  when both RCR and FK techniques are applicable, which is not always the case,  the key  inequalities may be simpler to explain is the context of FK.  However tighter bound are provided by RCR.  In some cases it may be beneficial to combine the perspective which the representations  provide (see, eg. \cite{DGR20}).

3) At different points in the discussion we find that not only the percolation thresholds coincide but also probabilities of certain events, e.g. in \eqref{meanTS1},   are exactly the same for RCR and FK percolation.  Still may  other are not (e.g. the two point connectivity probability, as mentioned above).  
It could be of interest to clarify the relevant criterion.  More is said on the subject in the  recent study \cite{HJK25} where these and other graphical representations of a number of models are put on common footing.

\noindent And beyond  technicalities: 

4) The stochastic geometry underlying the  deconfinement transition have been laid open in the random plaquette version of this transition, which was presented and studied in \cite{ACCFR83,GriMar90}.  In it, the relevant stochastic-geometric  phenomenon is the spontaneous formation in a system of random plaquettes of surfaces which extend without boundary.  This can be viewed as a higher-dimensional analog of the 
the spontaneous formation, in the bond percolation model, of lines extending to infinity along open edges.    
In $3D$ the random surface transition is dual to percolation in the dual bond percolation model.  
Duality  relations extends further into more general $q$-state versions of the models, of which Ising and percolation  correspond to $q=2, 1$, and into higher dimensions where topology demands greater attention.   For interesting recent results on the deconfinement transition in this richer context  see~\cite{DunSch25}, and references therein.    

 \section{Extensions of RCR}   

The RCR  formulated above is for the spin-flip  symmetric Ising Hamiltonian \eqref{H_no_h} at zero magnetic field ($h=0$).  This formalism admits a number of extensions, some of which are by now well known, while some are less so:

\noindent \underline{i) Incorporation of external field.} 
 
 The external field terms 
$\{h_x \sigma_x\}$ are turned into binary spin-spin interaction  through the incorporation of a ``ghost spin'' $\sigma_{\Q}$  and replacement  $\sigma_x \rightarrow   \sigma_x \sigma_{\Q}$.  This stratagem was first employed  by R. Griffiths in his extension of  spin correlation inequalities to odd numbers of spins~\cite{Gri67}.  It's natural extension to random currents \cite{ABF87}  inspired the very useful addition of a ghost site in percolation models ~\cite{AizBar87}.

\noindent \underline{ii)  Other classes of variables.} 

Beyond Ising, to spin variables in the Griffiths-Simon class~\cite{Gri69,GriSim73}.  This enabled the extension of the RCR based results on the Ising spin model to the lattice $\phi^4_d$ field \cite{GriSim73, Aiz82}.  

\noindent \underline{iii)  RCR with local reflection symmetry}

The combinatorics underlying random current's switching symmetry require a pair of  partially matching RCR configurations.  In the presence of a reflection symmetry such a pair can be created by folding a domain onto itself, along a symmetry hyperplane.   Possible applications of this technique are  demonstrated here on two examples:

\begin{enumerate} 
\item The inequality of Messager and Miracle-Sole \cite{MesMir77} / Schrader~\cite{Sch77} on the monotonicity of the two point correlation function $S_2(0,\bf{x})$ in the nearest neighbor model on $\Z^d$.  

\item The van Beijeren inequality, which has the striking implication that the regular Ising model on $\Z^d$  exhibits translation symmetry breaking at all temperatures below $T_c(\Z^{d-1})$. 

 \end{enumerate}
Although the inequalities  are not new,  the conditional symmetry argument sheds additional light on the remainder.

\noindent \underline{iv) RCR for frustration-bearing interactions}

The adjustment of RCR to systems with frustration is discussed above in Section~\ref{sec:frustration}.   
As an example of application of the techniques presented there, in Appendix~\ref{sec:DSS} we show a simpler  proof of the recent (and previously unnoticed) inequality of Ding-Song-Sun\cite{DSS23}, and direct to yet  more recent applications of that inequality.  

Of the above, i) and ii) were presented and applied in \cite{Aiz82} and  iii) - iv) were within the original outline of  ``Part III''.

For completeness of the list let us mention in passing also the following more recent extensions of the random current method.

\noindent \underline{v) Infinite graphs.}  

A natural  extension of the RCR notions and basic properties to infinite graphs was developed and put to use in ~\cite{AizDumSid15}.  
That  enabled  direct applications of arguments invoking ergodicity (under space translations)  and uniqueness of the infinite cluster.  These played a role in the proof of the continuity of spontaneous magnetization at $T_c$ for  three dimensional version of the model.  \\[3ex] 

\noindent \underline{vi) Quantum Ising model in transversal field.}  

The quantum Ising model, over a graph $\mathcal G = (\mathcal V, \mathcal E)$ is a system of binary quantum spins, q-bits, attached to the sites of the vertex set $\mathcal V$, with the quantum Hamiltonian built of local Pauli spin triplets of operators $ (\sigma_u^x, \sigma_u^y, \sigma_u^z)$: 
\be
H = -\sum_{(u,v)\in \mathcal E} J_{u,v} \sigma^z_u \sigma^z_v - \eta \sum_{u\in \mathcal V} \sigma^x_u  
\ee 
(to which one may add the analog of the classical symmetry breaking term  term $(- h) \sum_{u\in \mathcal V} \sigma^z_u    $).  The above acts in the natural way on the tensor product Hilbert space $\mathbb X = \otimes_{u\in \mathcal V} \mathbb X_u$, with the local spaces isomorphic to $\C^2$.  
The extension of Random Current  to this model was developped in  works of Bj\"ornberg-Grimmett \cite{BjoGri09}, and Crawford-Ioffe \cite{CraIof10},  and used for the natural  extension to it of some of the fundamental results which were proven by RCR for the classical Ising model.      

\noindent \underline{vii) Random currents for XY models.} 

   In a new direction - a simple extension of the random current to two component classical spin systems $\underline {\sigma}_u = (\sigma^{(1)}_u, \sigma^{(2)}_u)$, with the $O(2)$ symmetric Hamiltonian 
\be
H = -\sum_{(u,v)\in \mathcal E} J_{u,v}\,\,  \underline {\sigma}_u \cdot \underline {\sigma}_v 
\ee 
was presented in \cite{EngLis23}.   It has not yet been shown to be as effective as it has in the discrete case.  But as there still are  some open challenges concerning the analysis of continuous spin models, we may still  hear more on this topic.   \\

\appendix 

\section*{Appendix: Unified derivation of some of the known correlation inequalities} 

In line with the originally outlined TOC, following is a RCR derivation of a number of Ising model's correlation Inequalities that, except for the first one, were originally derived by different means.  Where appropriate we start by explaining the general principle which is invoked in the proof.    

\section{Griffiths inequalities} 

\begin{theorem} [Griffiths I \& II] For any ferromagnetic Ising spin system which is invariant under spin flips, with a  Hamiltonian of the form 
\be 
H = - \sum_{\substack{B\subset \Lambda \\ |B| \mbox{even}} }J_B \prod_{u\in B} \sigma_u
\ee
at $J_B\geq 0$, 
\be
\label{eq:G1} 
\langle \sigma_A \rangle \geq 0   \\  
\ee
\be\label{eq:G2} 
\langle \sigma_{A_1} \sigma_{A_2} \rangle \geq \langle \sigma_{A_1}  \rangle  \, \langle \sigma_{A_2} \rangle 
\ee
\end{theorem}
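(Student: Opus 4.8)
The plan is to establish both Griffiths inequalities through the random current representation developed in Section~\ref{sec:RCR}, exploiting the key fact that for ferromagnetic couplings ($J_B \geq 0$) all the weights $w(\n)$ in \eqref{w(n)} are non-negative. I first observe that although the Hamiltonian here carries general even-body interactions $J_B \prod_{u\in B}\sigma_u$ rather than only pair interactions, the RCR construction generalizes verbatim: one expands each Gibbs factor $\exp(\beta J_B \prod_{u\in B}\sigma_u)$ into its Taylor series, introduces a current flux $n(B)$ on each interaction set $B$, and averages over the spins. The source set $\partial\n$ is then the collection of vertices contained in an odd number of occupied sets counted with flux, and the representation \eqref{eq:erg} for $\langle \sigma_A\rangle$ as a ratio of sums of weights $w(\n)$ over $\partial\n=A$ persists unchanged.

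For the first inequality \eqref{eq:G1}, the argument is immediate: $\langle \sigma_A\rangle$ equals $\big(\sum_{\partial\n=A} w(\n)\big)\big/\big(\sum_{\partial\n=\emptyset} w(\n)\big)$, and since every $J_B \geq 0$ forces $w(\n)\geq 0$ for all $\n$, both numerator and denominator are non-negative, whence $\langle\sigma_A\rangle\geq 0$. The only point to check is that the numerator is finite and the denominator strictly positive, which holds because the empty configuration contributes weight $1$ to the denominator.

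For the second inequality \eqref{eq:G2}, I would follow the switching-lemma route that is the centerpiece of the RCR method. Writing the difference in duplicated form, I have
\[
\langle\sigma_{A_1}\sigma_{A_2}\rangle \, Z^2 - \langle\sigma_{A_1}\rangle\langle\sigma_{A_2}\rangle \, Z^2
= \sum_{\substack{\partial\n_1=A_1\Delta A_2\\ \partial\n_2=\emptyset}} w(\n_1)w(\n_2)\Big(1 - \id_{A_1}[\n_1+\n_2]\Big),
\]
where the first term is rewritten using the switching lemma (Lemma~\ref{lem:switching}) applied with $\G_1=\G_2=\G$, $B=A_1$, to move the sources and introduce the indicator $\id_{A_1}[\n_1+\n_2]$, exactly as in the derivation of \eqref{pf:G2}. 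Since $0\leq \id_{A_1}[\n_1+\n_2]\leq 1$ and every weight is non-negative, each summand is non-negative, so the whole expression is $\geq 0$, which is \eqref{eq:G2}.

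The main obstacle I anticipate is purely bookkeeping rather than conceptual: one must verify that the switching lemma, stated in the excerpt for pair interactions with fluxes on edges, applies cleanly when fluxes live on general even-cardinality interaction sets $B$. The combinatorial identity \eqref{eq_combs} and the decomposition-counting argument in the lemma's proof only use that the source set is computed mod $2$ and that the weights factorize multiplicatively over the interaction sets with Poisson-type coefficients; both features survive verbatim for many-body interactions, provided the indicator $\id_{A_1}$ is interpreted via the existence of a sub-configuration $\mathbf{k}\leq \n_1+\n_2$ with $\partial\mathbf{k}=A_1$. I would state this generalization explicitly and note that the proof is identical, so that the two inequalities follow with no further work.
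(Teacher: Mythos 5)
Your proposal is correct and follows essentially the same route as the paper: Griffiths I from the non-negativity of the RCR weights in \eqref{eq:erg}, Griffiths II from the switching lemma identity underlying \eqref{pf:G2} (you write it as a difference with the factor $1-\id_{A_1}[\n_1+\n_2]\ge 0$, the paper as the equivalent ratio $\le 1$), and the same closing remark that the representation and the combinatorial switching argument carry over verbatim to currents $\{n_B\}$ indexed by general even interaction sets.
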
 
\begin{proof}   For $H$ with only pair interactions, on which we focused in Section~\ref{sec:RCR} Griffiths' first inequality hold by the positivity of \eqref{eq:erg}, which is how it was proved.  
The switching lemma proof of the second inequality,  which is not how it is often presented, 
is given in \eqref{pf:G2}.  As can be seen there, this short RCR  argument yields additional 
insights on the geometric nature of the spread of correlations. 

The proof carries verbatim also to the more general even Hamiltonians, in which case the RCR representation is expressed in terms of 
 $\n= \{n_B\}$, with $B$ ranging over   a larger collection of cells  (as is done here in the context of LGM,  Section~\ref{sec:LGM}).
 \end{proof}  

Also to be mentioned here   is the Griffiths-Hurst-Sherman GHS inequality~\cite{GHS70}, which  states that 
under ferromagnetic pair interactions:
\be \label{GHS}
\frac{-1}{2}\,  U_4(x_1,x_2,x_3,x_4) / 2 \, \geq \,    \langle 1, 2 \rangle\,   \langle 1, 3 \rangle 
\, \langle 1, 4 \rangle  \,. 
\ee  
It was actually in this context that the switching lemma made its appearance (at least as far as the author is aware).   However while for the mere positivity of $-U_4$ requires not more than that; a more refined analysis is required for the improved lower bound stated in \eqref{GHS}(cf.~\cite{Gra82})
  
A notable implication is the convexity in $h$ of the magnetization 
\be
\frac{\partial ^2 \langle \sigma_0 \rangle }{\partial h^2}  \leq 0, \quad \mbox{for all $h\geq0$}\, , 
\ee 
and that carries implications for the model's critical exponents~\cite{GHS70} (and provides and alternative proof of continuity of the liming function $M(\beta,h)$ in $h$, at $h\neq 0$. 

\section{Simon-Lieb inequalities} 
The following pair of inequalities conveys a geometric picture of the spread of spin correlations which has repeatedly been found to play a useful auxiliary  role in rigorous studies of critical phenomena.  The name refers to a pair of papers, the first by B. Simon~\cite{Sim80} and its follow-up with a useful improvement by E.H. Lieb~\cite{Lie80}.   In short time it was noted (cf. \cite{Lie80}) that the first inequality can be found within a larger collection of relations presented earlier by R. J. Boel and P.W. Kasteleyn~\cite{BoeKas78}.   

Analogous relations  hold for a variety of other such systems, including percolation~\cite{Ham57, AizNew84}, XY models~\cite{Riv80,AizSim80}, and  general stat mech systems of finite range interactions~\cite{DobPec83}, as well as random operators with finite range  hopping term, cf.~\cite{AizWar15}).   Of these, the RCR proof given below for the Ising case may be the simplest to present.  

\begin{theorem}[Simon~\cite{Sim80}, Lieb~\cite{Lie80}] \label{SL_ineq}
In the ferromagnetic Ising model over a finite graph $\G=(\mathcal V, \mathcal E)$, for any pairs of sites $x,y \in \mathcal V$ and a set of vertices such that any path from $x$ to $y$   has to include at least one site of $S$.  Then 
\be \label{eq_SL_V}
\langle \sigma_x \sigma_y \rangle \ \leq \ \sum_{u\in S} \langle \sigma_x \sigma_u \rangle_{\G_{S,x}}\, \langle \sigma_u \sigma_y \rangle
\ee
where 
$\G_{S,x}$ is the subgraph whose vertices include only the sites which are reachable from $x$ by paths which either do not visit $S$   or terminate on $S$ upon arrival, and whose edge set includes only the corresponding ``inner edges'', i.e. no edges between sites of $S$.

Likewise, given a pair $x,y\in \mathcal V$ and a set $B\subset \mathcal V$ which includes $x$ but not $y$
\be \label{eq_SL_E}
\langle \sigma_x \sigma_y \rangle \ \leq \ \sum_{\substack{u\in B \ v\in B^c} }
 \langle \sigma_x \sigma_u \rangle_B\, \tanh (\beta  J_{u,v} ) \,   \langle \sigma_v \sigma_y \rangle 
\ee
with $ \langle -- \rangle_B$ the Gibbs state for which the interactions are limited to edges in $B$.

\end{theorem}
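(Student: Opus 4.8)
The plan is to derive both bounds from the RCR backbone expansion by an optional-stopping argument, using properties i)--iii) of Theorem~\ref{thm:path_exp}. The one-path specialization of that expansion reads $\langle \sigma_x\sigma_y\rangle=\sum_{\gamma:\, x\mapsto y}\rho(\gamma)$, the sum running over backbone paths produced by the deterministic selection algorithm on the ambient graph, which I denote $\Lambda$; throughout I take the ranking of sites and edges on every subgraph to be the restriction of a single fixed ranking of $\Lambda$.

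For the vertex inequality \eqref{eq_SL_V} I would argue that, since every $x$--$y$ path meets $S$, each backbone $\gamma$ has a well-defined first visit to $S$, at some $u\in S$, splitting it as $\gamma=\gamma_1\circ\gamma_2$ with $\gamma_1:\,x\mapsto u$ avoiding $S$ (hence a path of $\G_{S,x}$) and $\gamma_2:\,u\mapsto y$. This decomposition is a bijection onto such pairs, so by the concatenation identity \eqref{RW_i} and then the partial-sum identity \eqref{RW_ii} applied to the tail,
\[
\langle\sigma_x\sigma_y\rangle=\sum_{u\in S}\ \sum_{\substack{\gamma_1:\, x\mapsto u\\ \text{avoiding } S}}\rho(\gamma_1)\,\langle\sigma_u\sigma_y\rangle_{(\widehat\gamma_1)^{c}}\ \le\ \sum_{u\in S}\langle\sigma_u\sigma_y\rangle\sum_{\substack{\gamma_1:\, x\mapsto u\\ \text{avoiding } S}}\rho_\Lambda(\gamma_1),
\]
the inequality being Griffiths' second inequality: restoring the couplings depleted on $\widehat\gamma_1$ only raises $\langle\sigma_u\sigma_y\rangle$. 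It then remains to bound the inner sum by $\langle\sigma_x\sigma_u\rangle_{\G_{S,x}}$, which I would obtain from property iii): since $\G_{S,x}\subseteq\Lambda$ and $\zeta$ is decreasing in the domain, $\rho_\Lambda(\gamma_1)\le\rho_{\G_{S,x}}(\gamma_1)$ edge-for-edge, and enlarging the sum to all backbones of $\G_{S,x}$ from $x$ to $u$ reproduces $\langle\sigma_x\sigma_u\rangle_{\G_{S,x}}$. This yields \eqref{eq_SL_V}.

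The edge inequality \eqref{eq_SL_E} follows from the same template, now stopping each backbone at the first edge $(u,v)$ by which it exits $B$ (so $u\in B$, $v\in B^c$) and writing $\gamma=\gamma_1\circ(u,v)\circ\gamma_2$ with $\gamma_1:\,x\mapsto u$ confined to $B$. Peeling the tail $\gamma_2$ from $v$ as before produces the factor $\langle\sigma_v\sigma_y\rangle$; the crossing edge contributes the scalar factor $\tanh(\beta J_{u,v})$ carried by any backbone edge; and the head sums to $\langle\sigma_x\sigma_u\rangle_B$ once one checks $\zeta_\Lambda(\gamma_1\circ(u,v))\le\zeta_B(\gamma_1)$. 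Summing over the boundary edges $(u,v)$ gives \eqref{eq_SL_E}.

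The step I expect to demand the most care is this last family of weight comparisons for the stopped head. Two monotonicities are at work and must be checked to point in the favorable direction: first, $\zeta_\Lambda(\gamma_1\circ(u,v))\le\zeta_\Lambda(\gamma_1)$, i.e. adjoining the crossing edge to the backbone skeleton lowers $\zeta$, which is again Griffiths' second inequality in the guise ``it is easier to weaken an already weakened interaction''; and second, $\zeta_\Lambda(\gamma_1)\le\zeta_B(\gamma_1)$ by the domain monotonicity of property iii). One must also verify that the selection-indicator $I$ behaves correctly under passing from $\Lambda$ to the subgraph: because removing edges only deletes constraints of the form ``a lower-ranked odd edge should have been taken,'' a path admissible as a stopped head in $\Lambda$ remains an admissible backbone of the smaller graph, so the indicator factor never spoils the inequality. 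Granting these comparisons, the optional-stopping bookkeeping is routine.
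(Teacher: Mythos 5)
Your proposal is correct and follows essentially the same route as the paper's own proof: the backbone (random-walk) expansion of Theorem~\ref{thm:path_exp}, decomposition of each path at its first visit to $S$ (respectively at the first exit edge from $B$), the partial-sum identity \eqref{RW_ii} for the tail, Griffiths' second inequality to restore the depleted couplings in the tail factor, and property (iii) to replace $\rho_\Lambda$ by $\rho_{\G_{S,x}}$ (respectively $\rho_B$) for the head. Your extra care with the $\zeta$-monotonicities and the selection indicator in the edge version only makes explicit what the paper leaves as ``a similar argument.''
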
 

\begin{proof}  To prove \eqref{eq_SL_V} we start with the random walk representation of Theorem~\ref{thm:path_exp}) 
by which $\langle \sigma_x \sigma_y\rangle$ is expressed in term of the random path starting at $x$ and terminating upon reaching $y$.   

Decomposing  each path according to the site  $u$ at which the path hits the vertex set $S$ we get:
\begin{eqnarray}  \label{eq_SL_Vpath}
\langle \sigma_x \sigma_y \rangle &=& \sum_{\gamma: \, x \to y} \rho(\gamma) \ = \ 
\sum_{u\in S} \, \sum_{\substack{\gamma_{x,u}: \, x \to u \\ 
\gamma_{u,y}: \, u \to y}}   \rho(\gamma_{x,u} \circ  \gamma_{u,y})  \notag \\[2ex]  
&=& \sum_{u\in S} \, \sum_{\substack{\gamma_{x,u}\\ \mbox{\tiny hitting $S$ at $u$}}} \rho(\gamma_{x,u})  \langle \sigma_x \sigma_u\rangle_{\widehat \gamma_{x,u}^c}  
 \\[2ex]  & \leq&   \sum_{u\in S}  \sum_{\substack{\gamma_{x,u}\\ \mbox{\tiny hitting $S$ at $u$}}} \!\rho_{_{\G_{S,x}}}(\gamma_{x,u}) \ \   \langle \sigma_x \sigma_u\rangle  
\ =\   \sum_{u\in S} \langle \sigma_x \sigma_u \rangle_{\G_{S,x}}\, \langle \sigma_u \sigma_y \rangle \notag 
\end{eqnarray}  
where: the transition to the second line is by the combination of the random walk properties (i) and (ii), and the inequality is by a combination of property (iii) (applied to $\rho(\gamma_{x,u})$ and the Griffiths monotonicity  applied to the two point correlation function. 

The proof of \eqref{eq_SL_E} is by a similar argument, in which the path is decomposed into 
$\gamma = \gamma_{x,u} \circ (u,v) \circ \gamma_{v,y}$ with $(u,v)$ the edge over which $\gamma$ exists $B$ for the first, but not necessarily the last time.   
\end{proof}

It may also be instructive to consider the following alternative RCR proof of \eqref{eq_SL_V} (the site version), which demonstrates a more rudimentary  use of switching lemma technique.

\begin{proof}[An alternative RCR proof of \eqref{eq_SL_V}]   \mbox{}\\[-2ex] 

We start from the RCR expression of the 
spin-spin correlation in terms of the  random currents $\n_1$ and $\n_2$ over the pair of  graphs $\G$ and $G_{S,x}$ correspondingly, where the sum over $\n_2$ being initially redundant, as it adds up to $1$)
  \begin{eqnarray} \label{eq:SL_E1}
  \displaystyle
\langle \sigma_x \sigma_y \rangle & = & 
\sum_{\substack{
\partial\n_1[\G]=\{x,y\} 
\\  
\partial\n_2[\G_{S,x}]=\emptyset}}  
\frac{w(\n_1)}{Z(\G)}\frac{w(\n_2)}{Z(\G_{S,x})} \, . 
\end{eqnarray} 
Under the stated source constraint  there has to be at least one simple path linking $x$ with $y$ along edges $(b)$ with $n_1(b)+n_2(b)>0$.  Each such paths hits $S$ at some  $u\in S$.  Summing the corresponding probabilities 
one gets:
\be
  \langle \sigma_x \sigma_y \rangle
\leq 
\!\!\sum_{u\in S} \sum_{\substack{
\partial\n_1=\{x,y\} 
\\  
\partial\n_2=\emptyset}}  \!\!\frac{w(\n_1)}{Z(\G)}\frac{w(\n_2)}{Z(\G_{S,x})} \,  
\ind\left[x \stackrel{\n_1+\n_2}{\longleftrightarrow}  y  \mbox{ in $\G_{S,x}$} \right] 
\ee
Applying to the above expression the switching lemma's relation \eqref{eq:switching} one gets, without further degradation of the bound, 
\be  
 \langle \sigma_x \sigma_y \rangle
\leq 
\sum_{u\in S} \sum_{\substack{
\partial\n_1=\{u,y\} 
\\  
\partial\n_2=\{x,u\}}}  \frac{w(\n_1)}{Z_1}\frac{w(\n_2)}{Z_2} \,  
\ = \  \sum_{u\in S} \langle \sigma_x \sigma_u \rangle_{\G_S} \, \langle \sigma_u \sigma_y  \rangle
\ee
which proves \eqref{eq_SL_V}. \end{proof}   

We omit here the  RCR proof  of the edge version of the above relation  \eqref{eq_SL_E} as in these terms it is less straightforward, though  doable.

\section{Ding-Song-Sun inequality} \label{sec:DSS}

J. Ding, J. Song, and R. Sun presented in ~\cite{DSS23} a new inequality, which has since found some interesting applications (listed below).    
The setup is that of the Ising spin model on a finite graph $\G = (\mathcal V, \mathcal E)$ under a magnetic field which is a sum of two term, one  non-negative and the other of general sign, i.e.
\be 
- H(\sigma) = \sum_{(u,v)\in \mathcal E} J_{u,v} \sigma_u \sigma_v + \sum_u (h_u + g_u) \sigma_u  \, ,
\ee  
with 
\begin{eqnarray}
g: \mathcal V \to \bbR\, ,  \qquad h: \mathcal V \to \bbR_+   
\quad \mbox{(i.e.  $h_u \geq 0 \quad \forall u\in \mathcal V $)}   \,. 
\end{eqnarray}

 The DSS inequality is:
\begin{theorem}[\cite{DSS23}] \label{thm:DSS}
In the above setup,  
\be \label{eq:DSS1}
\langle \sigma_x\rangle_{g+h}  - \langle \sigma_x\rangle_{g-h}  \leq 
\langle \sigma_x\rangle_{h}  - \langle \sigma_x\rangle_{-h} \, .
\ee  
And, by implication, 
\be  \label{eq:DSS2}
\langle \sigma_x ; \sigma_y \rangle_g  \leq \langle \sigma_x ; \sigma_y \rangle_g  \, ,
\ee 
where on the left  is the truncated correlation function
\be
\langle \sigma_x ; \sigma_y \rangle :=\langle \sigma_x  \sigma_y \rangle - \langle \sigma_x  \rangle \, \langle \sigma_y  \rangle \, . 
\ee \end{theorem}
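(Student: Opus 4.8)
The plan is to reduce the four magnetizations to two‑point functions in one enlarged, frustrated system and then to read the inequality off the sign structure exposed by the frustration‑adjusted RCR of Theorem~\ref{RSR_FF}. First I would apply the ghost‑spin trick, introducing a single ghost site $\mathfrak g$ and writing $\langle\sigma_x\rangle_\phi=\langle\sigma_x\sigma_{\mathfrak g}\rangle$ for a two‑body system in which each field value $\phi_u$ becomes the coupling of the ghost edge $(u,\mathfrak g)$. Next, using $e^{\beta(g_u+h_u)\sigma_u\sigma_{\mathfrak g}}=e^{\beta g_u\sigma_u\sigma_{\mathfrak g}}\,e^{\beta h_u\sigma_u\sigma_{\mathfrak g}}$ together with the convolution (Poisson superposition) property of the RCR weights \eqref{w(n)}, I would split each ghost edge into a ``$g$-edge'' of coupling $g_u$ and an ``$h$-edge'' of coupling $\pm h_u$. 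All four systems then live on one fixed multigraph and differ only in the signs carried by their ghost edges: the pair $g\pm h$ shares the common ferromagnetic reference $\mathcal A$ (regular couplings $J_{u,v}$, ghost magnitudes $|g_u|$ and $h_u$), while the pair $\pm h$ shares the pure‑field ferromagnet $\mathcal B$.

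With this in place I would invoke \eqref{corr_dup} to represent each magnetization, relative to its reference, as a conditional expectation of the connection parity $\sgn_{\mathcal J}(x,\mathfrak g;\n_1+\n_2)$ over a doubled sourceless current. The decisive structural observation is that a simple backbone from $x$ to $\mathfrak g$ uses exactly one ghost edge, its last, so the parity equals $\sgn(g_w)$ when the backbone reaches $\mathfrak g$ through a $g$-edge at $w$ and equals the sign $\pm1$ of that ghost edge when it reaches $\mathfrak g$ through an $h$-edge. Hence the parities for $g+h$ and $g-h$ coincide on $g$-edge terminations and are opposite on $h$-edge terminations, so the difference $\langle\sigma_x\rangle_{g+h}-\langle\sigma_x\rangle_{g-h}$ is carried entirely by configurations that connect $x$ to $\mathfrak g$ through an odd number of $h$-edges, exactly as in the pure‑field case where $\langle\sigma_x\rangle_h-\langle\sigma_x\rangle_{-h}=2\langle\sigma_x\rangle_h$ (the latter identity being immediate from the spin‑flip symmetry $\langle\sigma_x\rangle_{-h}=-\langle\sigma_x\rangle_h$). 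The target then becomes the statement that switching on the background $g$ cannot increase this net $h$-mediated signed connection to the ghost.

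The hard part will be the final comparison, because the frustration‑free constraint attached to \eqref{corr_dup} is not the same event for the two systems: $g+h$ is conditioned on loops being even in the negative $g$-edges alone, whereas $g-h$ is conditioned on loops being even in the negative $g$-edges together with all $h$-edges. A naive estimate that ignores this mismatch only yields the weaker bound $2\langle\sigma_x\rangle_{|g|+h}$, so the whole force of the inequality lies in the exact cancellations along loops carrying an odd $h$-flux; these are precisely the loops on which the signed weight reverses under the switch between the two replicas. I expect that a switching argument along such loops, of the type used in the proof of Theorem~\ref{RSR_FF}, combined with Griffiths monotonicity to pass from the reference $\mathcal A$ down to $\mathcal B$, is what collapses the estimate to $2\langle\sigma_x\rangle_h$ and thereby proves \eqref{eq:DSS1}. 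Finally, \eqref{eq:DSS2} would follow by specializing $h=t\,\ind_{\{y\}}$ and differentiating \eqref{eq:DSS1} at $t=0$, which turns the magnetization increments into the truncated correlations $\beta\langle\sigma_x;\sigma_y\rangle_g$ and $\beta\langle\sigma_x;\sigma_y\rangle_0$.
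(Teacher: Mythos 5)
Your construction is the same as the paper's: a single ghost site $\Q$, each field term split into two parallel ghost edges (one carrying $h_u\geq 0$, forming $\mathcal E_h$, and one carrying $g_u$, forming $\mathcal E_+\sqcup\mathcal E_-$), the frustration-adjusted doubled current of Theorem~\ref{RSR_FF}, and the observation that a simple connection from $x$ to $\Q$ commits to exactly one terminal ghost edge whose type fixes the sign of its contribution. Your bookkeeping of which terminations cancel is the mirror image of the paper's only because you keep the difference $\langle\sigma_x\rangle_{g+h}-\langle\sigma_x\rangle_{g-h}$, whereas the paper first flips spins to restate the claim as $\langle\sigma_x\rangle_{h+g}+\langle\sigma_x\rangle_{h-g}\leq 2\langle\sigma_x\rangle_h$ (its \eqref{eq:DSS3}), after which the $\mathcal E_\pm$-terminations cancel and the $\mathcal E_h$-terminations survive with sign $+1$; the two accounts are equivalent. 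Your passage from \eqref{eq:DSS1} to \eqref{eq:DSS2} by taking $h=t\,\ind_{\{y\}}$ and linearizing in $t$ also matches the paper's intent.

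The gap is that you do not carry out the final comparison: you only ``expect'' that a further switching argument along loops of odd $h$-flux, together with Griffiths monotonicity, collapses the surviving sum to $2\langle\sigma_x\rangle_h$. That step is the entire content of the inequality, and nothing in the proposal substantiates it. As you yourself observe, discarding the surviving constraints bounds the expression by $2\langle\sigma_x\rangle_{|g|+h}$, which by Griffiths' second inequality is $\geq 2\langle\sigma_x\rangle_h$ --- the crude estimate runs in the wrong direction, and monotonicity ``from $\mathcal A$ down to $\mathcal B$'' points the wrong way as well. What has to be shown is that the restricted, $\widetilde{Z}$-normalized double sum --- restricted to configurations in which every $(\n_1+\n_2)$-connection from $x$ to $\Q$ terminates with an edge of $\mathcal E_h$, and carrying the residual loop constraint --- is dominated by $\langle\sigma_x\sigma_{\Q}\rangle_h$, i.e.\ that the $|g|$-ghost edges, once barred from completing the connection to $\Q$, can be summed out of both numerator and the reference normalization without increasing the ratio. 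The paper concludes at the corresponding point by dropping the two indicators and identifying the remaining normalized sum with $2\langle\sigma_x\sigma_{\Q}\rangle_h$; whether one retains that route or another, your write-up must specify exactly which configurations are switched and why the resulting identity removes the $|g|$-couplings from the reference measure rather than merely from the connection event. As it stands, the proposal reduces \eqref{eq:DSS1} to an unproven assertion of essentially the same depth as the theorem itself.
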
 

The pivotal statement here is the first one.   The second can be deduced from it simply through the expansion to second order in $h$, around   $h=0$.  However, the proof of \eqref{eq:DSS1}  is considerably more involved.     Following is what may potentially be seen as a simpler proof (admittedly a relative term), based on the random current techniques.

 \begin{proof}[Proof of Theorem~\ref{thm:DSS}]
Flipping the spins in the second expression on both RHS and LHS, the claimed \eqref{eq:DSS1} can be rewritten as
 \be \label{eq:DSS3}
\langle \sigma_x\rangle_{h+g}  + \langle \sigma_x\rangle_{h-g}  \leq 
2 \langle \sigma_x\rangle_{h}   
\ee  
   
For a convenient representation of \eqref{eq:DSS3}, let us construct a multigraph  
$\widehat {\mathcal G}$ through the following steps:  \\ 
\begin{itemize}
\item[1)]  Split the vertex set $\mathcal V$ according the sign of $g(x)$ into    
\be 
\mathcal {V}_+  = \left \{ u\in \mathcal V \, :\,  g(u) \geq 0 \right\} \, ,   \quad \mbox{and}  \quad 
\mathcal {V}_-  = \left \{ u\in \mathcal V \, :\,  g(u) < 0 \right\} \,.
\ee 
\item[2)]  Add  a  ``ghost site'' $\Q$ (\`a la Griffiths), thus forming   $\widehat {\mathcal{V}} = \mathcal V \cup \{\Q\}$.

\item[3)] Link $\Q$  to each of the original sites by  two  edges, one with coupling $J = h(x)$, and the other with $J =g(x)$.  \\  
 The collection of the added edges  of the first kind will be referred to as $\mathcal E_h$, and that of the second kind split into $\mathcal E_+\sqcup  \mathcal E_-$ according to the sign of $g(x)$.  
 \end{itemize} 
\noindent The construction results in an enhanced graph  $\widehat {\mathcal G} = ( \widehat{\mathcal V},  \widehat{\mathcal E})$ with the edge set:
 \be
 \widehat{\mathcal E} =   \mathcal {E}\sqcup  \mathcal E_h \sqcup  \mathcal E_+ \sqcup  \mathcal E_- \, . 
 \ee

The claimed relation \eqref{eq:DSS3}, reformulated over  $\widehat {\mathcal G}$ reads as
\be \label{eq:DSS4}
\langle \sigma_x \sigma_{\Q} \rangle_{h+g}  + \langle \sigma_x \sigma_{\Q}\rangle_{h-g}  \leq 
2 \langle \sigma_x \sigma_{\Q}\rangle_{h}   
\ee 
where the subscripts refer  to the pair of Hamiltonians  
\be 
- H(\sigma)_{\stackrel{+}{(-)}} = \sum_{(u,v)\in \mathcal E} J_{u,v} \sigma_u \sigma_v + \sum_u (h_u  \mbox{}_{\substack{+\\ (-)}}  g_u) \, \sigma_u \sigma_{\Q}   \,.  
\ee  
 
Let now $Z_+$ and $Z_-$ denote the partition functions over $\widehat {\mathcal G}$ under 
$H_+$ and $H_-$ correspondingly, and $\widetilde {Z}$ the one under the Hamiltonian with $g(x)$ replaced by its absolute value $|g(x)|$.  

In the ``frustration-adapted'' RCR representation of Section~\ref{sec:frustration} we have the following variant of \eqref{eq:2pnt} 
\begin{eqnarray}
\langle \sigma_x \sigma_y \rangle_{h+g} &=&  
 \sum_{\substack{ \partial \n_1= \{x,\mathfrak{g}\} \\ \partial \n_2= \emptyset  }}
  \frac{w(\n_1)}{{ \widetilde{Z}} } \,  
   \frac{w(\n_2)}{{ \widetilde{Z}} }
   (-1)^{\mathcal F_{\mathcal E_-} (\n_1)} \id_{FF}[\n_1+\n_2]   \notag \\  \\   \notag
   &=&  
 \sum_{\substack{ \partial \n_1= \{x,\Q\} \\ \partial \n_2= \emptyset  }}
  \frac{w(\n_1)}{{ \widetilde{Z}} } \,  
   \frac{w(\n_2)}{{ \widetilde{Z}} }
   sgn_J(u, v ; \n_1) 
   \id_{FF}[\n_1+\n_2] 
\end{eqnarray}
where: \\ 
\indent i)  ${\mathcal F_{\mathcal E_-} (\n_1)}$ is the total flux of $\n_1$ over the edges $\mathcal E_-$, i.e. the  sum over $\n_1$ on the edges linking   $\Q$ with the vertex set $\mathcal V_-$.   \\ 
\indent 
\indent  i) The frustration free condition $\id_{FF}[\n_1+\n_2] $ is that $\n_1+\n_2$ has no  loops of odd flux.\\  
\indent iii) $ sgn_J(u, v ; \n_1) $ is the parity of the connection defined in \eqref{sgn_omega} (under the Frustration Free condition FF).    

The corresponding representation for $\langle \sigma_x \sigma_y \rangle_{h-g} $ (i.e. at flipped $g$) has a similar representation, with   flux over  $\mathcal {E}_-$  replaced by that over  $\mathcal {E}_+$.  

In both cases the source conditions imply $x \stackrel{\n_1+\n_2}{\longleftrightarrow}  \Q$.  At given $\m=\n_1+\n_2$ simple  paths, along $m\neq 0$,  from $x$ to $\Q$ can arrive with an edge of 
$\widehat {\mathcal E}$,  $\mathcal {E}_+$ or $\mathcal {E}_-$.  
However, crucially,  in each case under the relevant FF condition for specified $\m$ the last two options cannot be realized simultaneously. 

 Thus, due to the factor $sgn_J(u, v ; \n_1)$,  configurations $(\n_1, \n_2)$ in which there is a path from $x$ reaching $\Q$ through an edge in $\mathcal {E}_+ \cup \mathcal {E}_-$ make exactly opposite contributions to 
$\langle \sigma_x\rangle_{h\pm g}$.     

Taking the above into account one learns that: 
\begin{multline}  \langle \sigma_x \sigma_{\Q} \rangle_{h+g}  + \langle \sigma_x \sigma_{\Q} \rangle_{h-g} = 
2 \sum_{\substack{ \partial \n_1= \{x,\Q\} \\ \partial \n_2= \emptyset  }}
  \frac{w(\n_1)}{{ \widetilde{Z}} } \,  
   \frac{w(\n_2)}{{ \widetilde{Z}} }   \\ 
 \times \id[\mbox{all paths $x \stackrel{\n_1+\n_2}{\longleftrightarrow}  \Q$ terminate with an edge in  $\mathcal {E}_h$}] \times \\ 
 \times \id[\mbox{there is no  path $\Q \stackrel{\n_1+\n_2}{\longleftrightarrow}  \Q$  starting by  $\mathcal {E}_-$ and ending by $\mathcal {E}_+$}] 
\end{multline} 

\be
\langle \sigma_x \sigma_{\Q} \rangle_{h+g}  + \langle \sigma_x \sigma_{\Q} \rangle_{h-g} \leq  
2 \sum_{\substack{ \partial \n_1= \{x,\Q\} \\ \partial \n_2= \emptyset  }}
  \frac{w(\n_1)}{{ \widetilde{Z}} } \,  
   \frac{w(\n_2)}{{ \widetilde{Z}} }   = 2 \langle \sigma_x \sigma_{\Q} \rangle_{h}
\ee
as was claimed in \eqref{eq:DSS4}.
 \end{proof}

The DSS inequality was recently of help in two unrelated works:  \\ 
Bauerschmidt-Degallier~\cite{BauDeg24} on the Log-Sobolev inequality for near critical Ising models, and 
Aizenman-Warzel~\cite{AizWar25} on the magnitude of entanglement  in the ground states of the quantum Ising models with transverse field. 

\section{RCR and reflection symmetry} \label{sec:reflection}

 Reflection symmetry, and reflection positivity, enable many shortcuts in problems of probability, analysis, statistical mechanics and field theory, and for some even play a fundamental role (\cite{OS73,GRS75,FILS78}).   

In the two examples which are  discussed next  we encounter situation to which  the following  definition applies.

\begin{definition}[Markovian reflections]  Let $\mathcal R_\X$ (abbreviated below by $\mathcal R$) be a reflection of $\R^d$ with 
with respect a hyperplane $\mathbb X$ of co-dimension $1$, and $\G$ a graph embedded in $\R^d$.   The graph   is said to have $\mathcal R$ as  a Markovian reflection symmetry if
\begin{enumerate} 
\item   the graph is invariant under the action of $\mathcal R$ 
\item  $\X$ does not cut through any of the graph's edges (i.e. each edge of $\G$ may either not intersect with $\X$, or be fully contained in it).
\end{enumerate}
An Ising spin system with pair interaction, as in in  \eqref{H_w_h}, is said to have $\mathcal R$ as  a Markovian  reflection symmetry if the graph on  which it is formulated has this property and its couplings are reflection invariant.
\end{definition} 

For Ising systems for which  condition (2) fails, but only  due to the presence of edges which are cut by $\X$  at mid-points and with orthogonal intersections, the Markovian condition can be met upon the addition of mid-edge spin variables with couplings which mediate the interaction between the two at the edge's boundary,  as indicated in Fig.~\ref{fig:folded} (on the right).  
 
When present, Markovian reflection symmetry combines well with the random current representation.  Through the pairing of the current configuration $\n$ with it reflection $\mathcal R_\X (\n)$ emerges a conditional symmetry, which enables  many of the familiar benefits of RCR duplication.  In particular one can reap the benefits of the switching lemma also in this context.
An example of that is the following statement with is proved below.  

\begin{theorem}  \label{thm_folded}  Let $\G=(\Lambda,\E)$ and $H$ be a finite graph embedded in $\R^d$ and an Ising spin pair interaction of the form \eqref{H_no_h} with a Markovian symmetry $\mathcal R$.   
Than for any pair of sites  $x, y \in \Lambda$ lying on the same side of the reflection hyperplane  $\mathbb X$
\be\label{eq_fold_prob}
\langle \sigma_x \sigma_{\mathcal {R}(y)} \rangle_{\Lambda, \beta} = \langle \sigma_x \sigma_y \rangle_{\Lambda, \beta} \, \, 
\mathbb P_{\Lambda, \beta}^{\{x,y\}}
\!\!\left( 
y \stackrel{\n+\mathcal R_{\mathbb X}(\n)}{\longleftrightarrow}  \mathbb X  \cap \Lambda
\right) 
\ee
where $\mathbb P_{\Lambda, \beta}^{\{x,y\}}$ is  the normalized joint probability distribution of the random current $\n$ of weights $w(n)$  under the source  constraint $\partial \n=\{ x,y\}$, as  in \eqref{w(n)}.
\end{theorem}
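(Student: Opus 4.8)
The plan is to evaluate the left side directly in the random current representation and to manufacture the required pair of currents by folding $\G$ onto itself across $\mathbb X$. Write $\langle\sigma_x\sigma_{\mathcal R(y)}\rangle_{\Lambda,\beta}=Z^{-1}\sum_{\partial\n=\{x,\mathcal R(y)\}}w(\n)$, and split the vertices as $\Lambda=\Lambda_+\sqcup\Lambda_0\sqcup\Lambda_-$, where $\Lambda_0=\mathbb X\cap\Lambda$, $\Lambda_+$ is the side carrying $x$ and $y$, and $\Lambda_-=\mathcal R(\Lambda_+)$. Since $\mathbb X$ cuts through no edge, every edge lies strictly on one side, or has both endpoints in $\Lambda_+\cup\Lambda_0$, or in $\Lambda_-\cup\Lambda_0$, or inside $\mathbb X$; in particular no edge joins $\Lambda_+$ to $\Lambda_-$, so $\Lambda_0$ separates the two sides, and each edge contained in $\mathbb X$ is fixed by $\mathcal R$. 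Reflection invariance of the couplings gives $w(\mathcal R\m)=w(\m)$ and $\partial(\mathcal R\m)=\mathcal R(\partial\m)$ for every current $\m$.

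Next I fold. Keeping the part of $\n$ on the edges incident to $\Lambda_+$ together with the edges inside $\mathbb X$, and reflecting the part on the edges incident to $\Lambda_-$, produces a pair $(\n_{\mathrm{I}},\n_{\mathrm{II}})$ on the folded graph $\widehat\G=\G|_{\Lambda_+\cup\Lambda_0}$, where $\n_{\mathrm{II}}$ does not charge the edges interior to $\mathbb X$ (these lie only in $\n_{\mathrm{I}}$). Because $w$ factorizes over edges and is reflection invariant, this is a weight-preserving bijection with $w(\n)=w(\n_{\mathrm{I}})\,w(\n_{\mathrm{II}})$. The source bookkeeping is the crux: a vertex of $\Lambda_+$ is a source of $\n_{\mathrm{I}}$ or of $\n_{\mathrm{II}}$ exactly when it is a source of $\n$ on the matching side, whereas at a vertex $v\in\Lambda_0$ the parities of $\n_{\mathrm{I}}$ and $\n_{\mathrm{II}}$ add up to that of $\n$. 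Hence the source set $\{x,\mathcal R(y)\}$ folds to $\partial\n_{\mathrm{I}}\cap\Lambda_+=\{x\}$, $\partial\n_{\mathrm{II}}\cap\Lambda_+=\{y\}$, and a common trace $\partial\n_{\mathrm{I}}\cap\Lambda_0=\partial\n_{\mathrm{II}}\cap\Lambda_0=:C$; evenness of $|\partial\n_{\mathrm{I}}|$ forces $C\neq\emptyset$, so each of $x,y$ is automatically joined to $\Lambda_0$ through its own folded current.

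The pair $(\n_{\mathrm{I}},\n_{\mathrm{II}})$ now lives on two graphs whose common edge set consists of the edges incident to $\Lambda_+$, so the general switching lemma, Lemma~\ref{lem:switching}, applies with these as $\G_1,\G_2$. I will use it with a switching set $B$ that joins $y$ to a vertex of $\Lambda_0$ (its precise choice fixed by the measurable rule in the proof of Lemma~\ref{lem:switching}), which relocates the source at $y$ from the second folded copy to the first and carries the source data $(\{x\}\sqcup C,\{y\}\sqcup C)$ to the form $(\{x,y\}\sqcup C',C')$; as in the derivation of \eqref{xyz}, the indicator $\id_{B}$ produced by the lemma is then precisely the event that $y$ is joined to $\Lambda_0$ in $\n_{\mathrm{I}}+\n_{\mathrm{II}}$. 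Since the redistribution leaves $\n_{\mathrm{I}}+\n_{\mathrm{II}}$, and hence this connection event, unchanged, summing at fixed $\n_{\mathrm{I}}+\n_{\mathrm{II}}$ equates the two families of decompositions. Unfolding, the data $(\{x,y\}\sqcup C',C')$ is exactly the fold of a current $\n$ with $\partial\n=\{x,y\}$ paired with its reflection $\mathcal R(\n)$, the connection event becomes $y\stackrel{\n+\mathcal R_{\mathbb X}(\n)}{\longleftrightarrow}\mathbb X\cap\Lambda$, and dividing by $Z$ while recognizing $Z^{-1}\sum_{\partial\n=\{x,y\}}w(\n)=\langle\sigma_x\sigma_y\rangle_{\Lambda,\beta}$ together with the normalized law $\mathbb P^{\{x,y\}}_{\Lambda,\beta}$ yields \eqref{eq_fold_prob}.

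The main obstacle is the bookkeeping along $\mathbb X$: one must check that the fold is a genuine weight-preserving bijection even though the edges interior to $\mathbb X$ sit in only one folded copy, that these edges being fixed by $\mathcal R$ makes $\n+\mathcal R_{\mathbb X}(\n)$ restricted to the fold agree with $\n_{\mathrm{I}}+\n_{\mathrm{II}}$ on the relevant support, and, most delicately, that the switching set can be chosen so that the lemma's indicator reproduces the connection to the whole set $\Lambda_0=\mathbb X\cap\Lambda$ rather than to a single prescribed vertex. This last point is handled by the measurable-choice device already used in the proof of Lemma~\ref{lem:switching}. The remaining manipulations are routine uses of reflection invariance and of the weights \eqref{w(n)}.
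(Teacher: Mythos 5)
Your overall strategy coincides with the paper's: fold the current across $\mathbb X$ into a pair on the half-graph, use the automatic connection of $y$ (resp.\ $\mathcal R(y)$) to $\Lambda_0=\mathbb X\cap\Lambda$ forced by the source parity on the symmetry plane, switch the source from $\mathcal R(y)$ over to $y$ at fixed combined current, and read off the remaining indicator as the connection probability. Your bookkeeping of the fold (weight factorization, the common trace $C$ on $\Lambda_0$, $|C|$ odd hence nonempty, preservation of $\n_{\mathrm I}+\n_{\mathrm{II}}$ under the switch) is all correct and matches the paper's setup.

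The gap is at the central step, where you invoke Lemma~\ref{lem:switching} directly. That lemma requires the source sets $A_1,A_2$ and the switching set $B$ to be \emph{fixed in advance}, and its indicator $\id_B[\n_1+\n_2]$ is the existence of a sub-current with boundary exactly $B$; with $B=\{y,v\}$ this is the point-to-point event $y\leftrightarrow v$, which is \emph{not} automatically satisfied on your left-hand side (where $y$ is only guaranteed to reach \emph{some} point of $C\subset\Lambda_0$), and summing over $v\in\Lambda_0$ overcounts. Moreover your $A_1=\{x\}\sqcup C$, $A_2=\{y\}\sqcup C$ vary with the configuration through $C$, so the lemma cannot even be applied before decomposing the sum. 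The ``measurable-choice device'' in the proof of Lemma~\ref{lem:switching} only selects one sub-current with boundary equal to the \emph{prescribed} $B$; it does not allow $B$ to float with the configuration, and so it does not upgrade the point connection to the set connection $y\leftrightarrow\Lambda_0$. What is actually needed, and what the paper supplies as Lemma~\ref{lem:switch_reflect}, is a separate single-current reflection switching statement: conditioned on the event $y\stackrel{\m}{\longleftrightarrow}\Lambda_0$ for the fixed total current $\m=\n+\mathcal R(\n)$, one chooses a path from $y$ to $\Lambda_0$ as a measurable function of $\m$, concatenates it with its reflection into a path from $y$ to $\mathcal R(y)$, and switches along it; the endpoint on $\Lambda_0$ (your $v$, hence your $C'$) is then configuration-dependent, which is harmless because the target sum ranges over all $\n$ with $\partial\n=\{x,y\}$. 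To close your argument you would need to state and prove this variant; as written, the reduction to Lemma~\ref{lem:switching} does not go through.
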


\begin{figure}[h]              
\centering
       \includegraphics[width=0.25\textwidth]{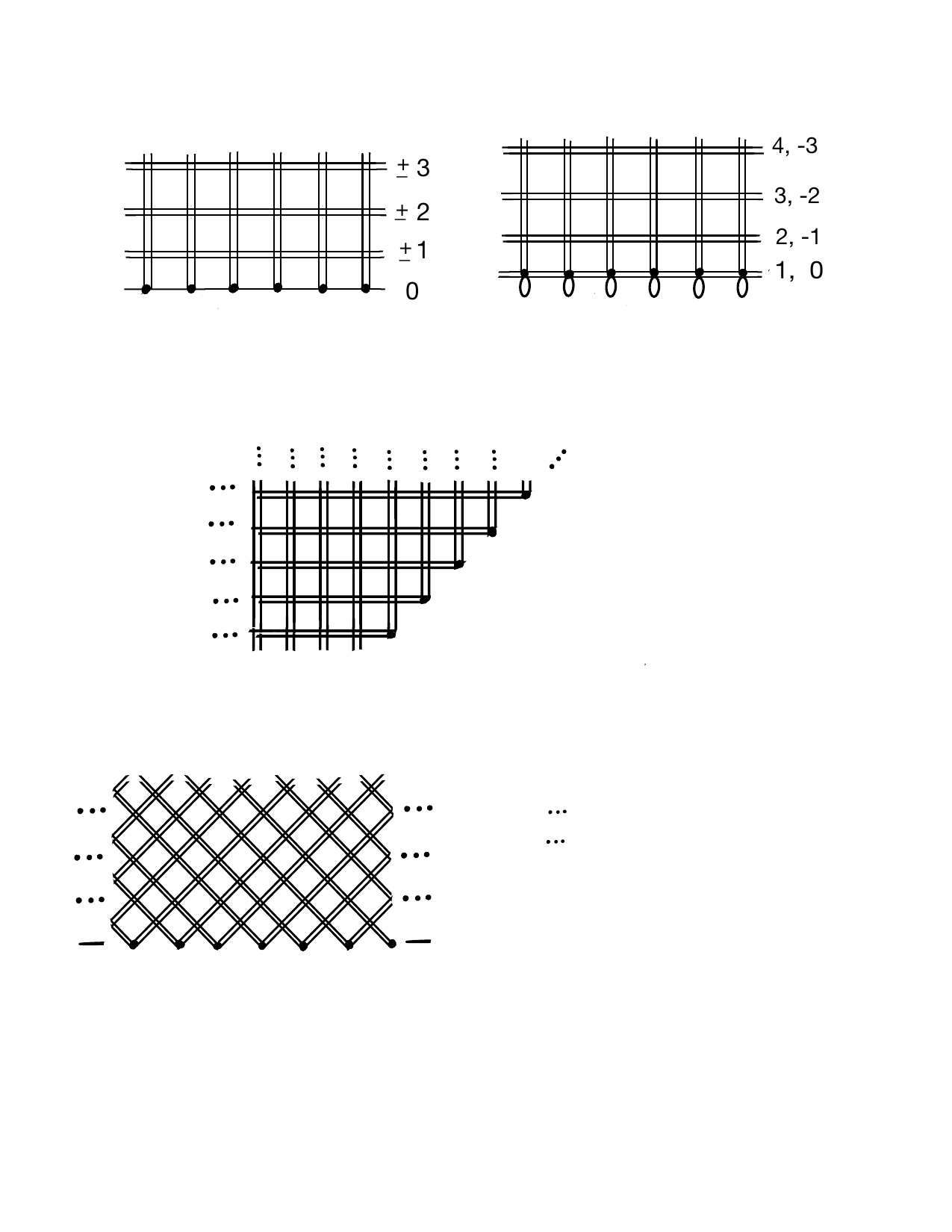} \qquad 
      \includegraphics[width=0.22\textwidth]{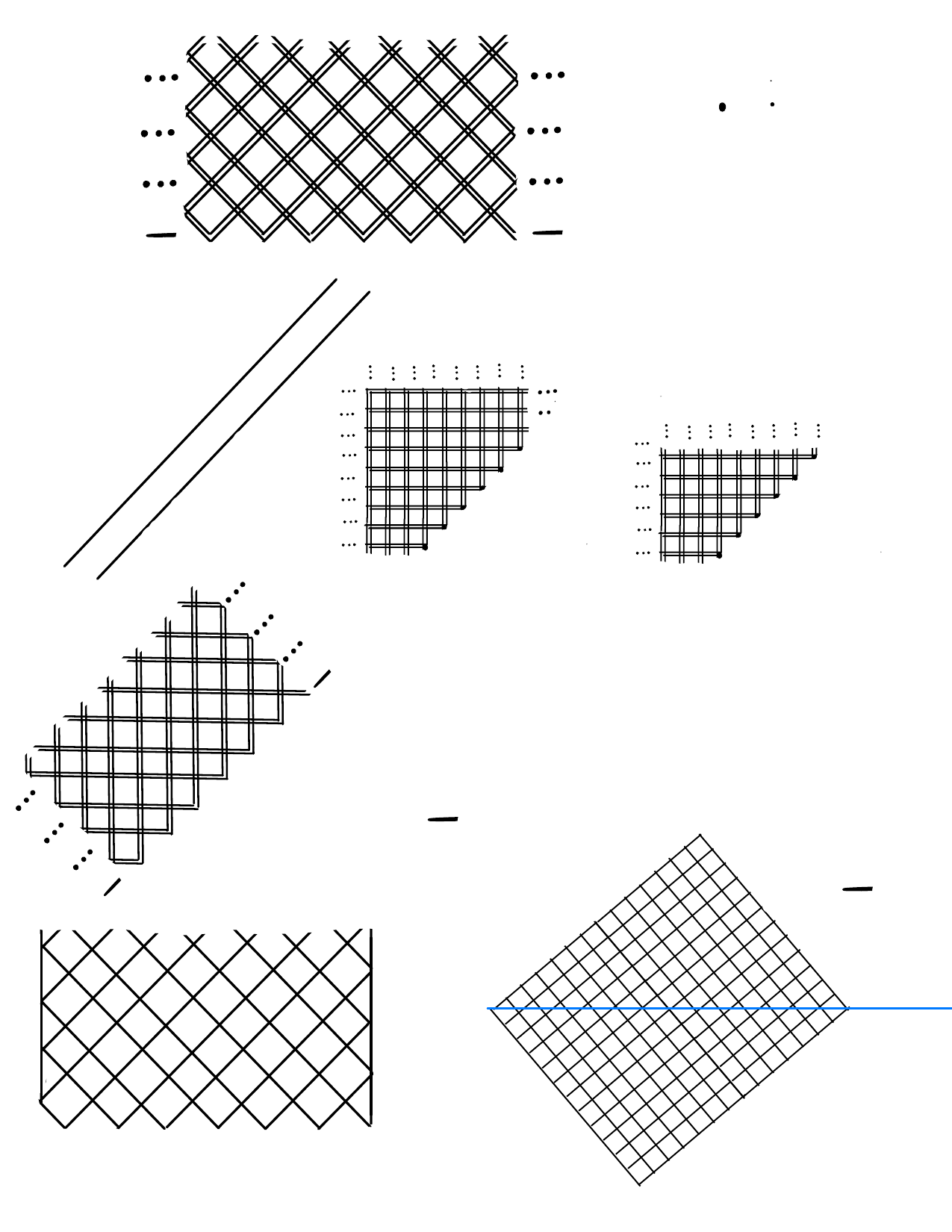} \qquad
            \includegraphics[width=0.25\textwidth]{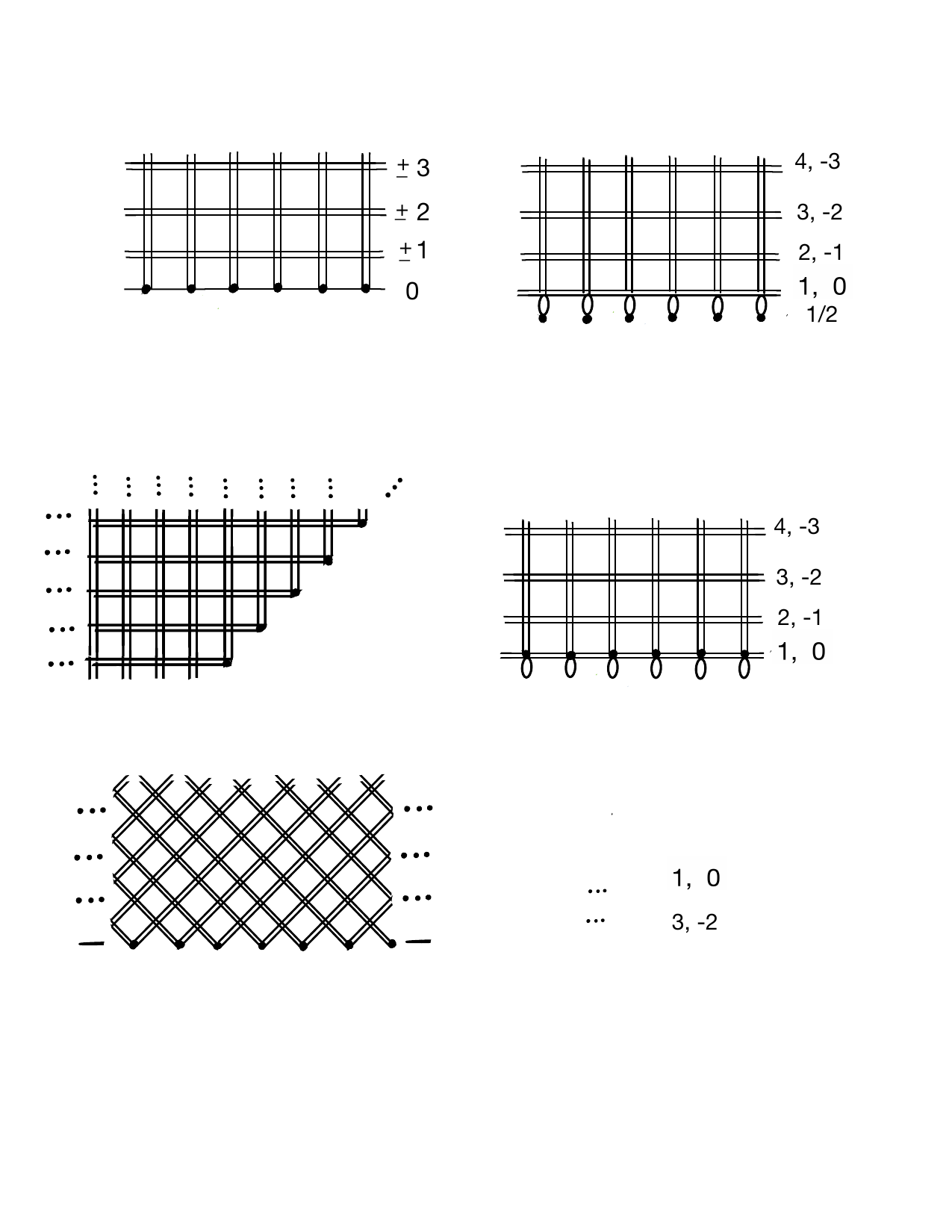} \qquad 
         \caption{Foldings of $\Z^2$ through reflection symmetries with respect to lines defined by: i) $x_2=0$, ii) $x_1+x_2=0$, and iii) $x_2=1/2$.  When the reflection plane cuts  through edges, we add at the cut sites intermediary local spin variables, as indicated on the right.   \label{fig:folded}  }
\end{figure}

In the proof we shall use the following notation and a reflection adapted version of the switching lemma.

For a given choice of the symmetry hyperplane, which (upon enhancement, if needed) cuts only through sites,  the graph's vertex set is presentable as a disjoint union:
\be \label{split}
\Lambda = \Lambda_0  \sqcup \Lambda_1 \sqcup \Lambda_2   \,,
\ee
with $\Lambda_0= \Lambda\cap \mathbb X$, and the other two sets 
containing the remaining sites split according to the side of $\mathbb X$ in which they lie.
Correspondingly we also split the edge set into: 
\be \label{D_edge}
\mathcal E = \mathcal E_0  \sqcup \mathcal E_1  \sqcup \mathcal E_2
\ee  
with $
 \mathcal E_0 := \{(u,v) : u,v \in \Lambda_0 \} 
$
and  for $j=1,2$: 
\be
 \mathcal E_j := \{(u,v) : u,v \in \Lambda_j \}  \cup \{(u,v) : u \in \Lambda_0, v\in \Lambda_j \}\,.
\ee  

The Markovian condition ensures that $\Lambda_0$ is a ``cut set'' for the connection $\Lambda_1 \longleftrightarrow \Lambda_2$, in the sense that any path along edges of $\mathcal E$, which starts at some $x\in \Lambda_1$ and ends at some  $y\in \Lambda_2\setminus \Lambda_0$ needs to pass through it.   


\begin{lemma}\label{lem:switch_reflect}   
Let $\G=(\Lambda,\E)$ be a graph with a Markovian reflection symmetry $\mathcal R_\X$, and let $A\subset \Lambda$, and $y \in \Lambda_1$. 
 Then  for any  Ising spin system on $\G$ with a symmetric  pair interaction of the form  \eqref{H_no_h}, and a flux configuration $\m: \E\to \Z_+$ such that  
$y \stackrel{\m}{ \longleftrightarrow} \Lambda_0$
 \begin{multline} \label{eq:switch_reflect}
\sum_{\substack{\n:\E\to \Z_+ \\ \partial\n=A}}
  w(\n)    \cdot  \id\big[\n+\mathcal R(\n) = \m\big]  \,  
 =  
 \!\! \!\!\sum_{\substack{\n:\E\to \Z_+ \\ \partial\n=A\Delta \{y, \mathcal {R}(y) \} }}
  \!\! \!\! \!\!
  w(\n)   \cdot  \id\big[\n+\mathcal R(\n) = \m\big]  \, . \qquad \mbox{} 
\end{multline}  
 \end{lemma}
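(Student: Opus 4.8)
The plan is to adapt the combinatorial argument behind the switching Lemma~\ref{lem:switching} to the single-configuration, reflection-paired setting, letting $\mathcal R(\n)$ play the role that the second independent current $\n_2$ played there. As in that proof I would fix the value of $\m$ and verify the identity termwise in $\m$; note that only reflection-symmetric $\m$ (those with $\mathcal R(\m)=\m$) can arise as $\n+\mathcal R(\n)$, so I may assume $\m$ is symmetric, and both sides of \eqref{eq:switch_reflect} carry the same $\m$.

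First I would record the analogue of the combinatorial identity \eqref{eq_combs}. Using the splits \eqref{split} and \eqref{D_edge} and the reflection-invariance of the couplings ($J_{\mathcal R(b)}=J_b$), the constraint $\n+\mathcal R(\n)=\m$ forces $\n(b)=\m(b)/2$ on every $b\in\mathcal E_0$, and pairs each $b\in\mathcal E_1$ with $\mathcal R(b)\in\mathcal E_2$ through $\n(\mathcal R(b))=\m(b)-\n(b)$. A direct computation then gives $w(\n)=C(\m)\prod_{b\in\mathcal E_1}\binom{\m(b)}{\n(b)}$, where $C(\m)$ collects the $\mathcal E_0$-contributions together with the factors $(\beta J_b)^{\m(b)}/\m(b)!$ over the $\mathcal E_1$--$\mathcal E_2$ orbits and depends only on $\m$. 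Consequently each side of \eqref{eq:switch_reflect} equals $C(\m)$ times the number of ways of labelling the $\mathcal E_1$-copies of the multigraph $\G(\m)$ as ``$\n$'' or ``$\mathcal R(\n)$'' so that the resulting $\n$ has the prescribed source set (the $\mathcal E_0$- and $\mathcal E_2$-values being then determined). The claim thus reduces to a purely combinatorial equality of two such counts.

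To prove that equality I would exhibit a count-preserving involution between the two families of labellings. Since $y\stackrel{\m}{\longleftrightarrow}\Lambda_0$ and $\Lambda_0$ is a cut set separating $\Lambda_1$ from $\Lambda_2$ (the Markovian property), I would select --- by a deterministic rule depending only on $\m$, exactly as in the proof of Lemma~\ref{lem:switching} --- a simple path $\mathbf p$ in $\supp(\m)$ from $y$ to its first contact $z\in\Lambda_0$; before reaching $z$ this path stays in $\Lambda_1$ and hence uses only $\mathcal E_1$-edges. The involution is then: toggle the $\n/\mathcal R(\n)$ label of each $\mathcal E_1$-copy along $\mathbf p$. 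Because $\n$ on $\mathcal E_2$ is slaved to $\n$ on $\mathcal E_1$ via $\n(\mathcal R(b))=\m(b)-\n(b)$, this toggle simultaneously flips the labels along $\mathcal R(\mathbf p)$, so it preserves $\mathcal R(\G(\n))=\G(\m)\setminus\G(\n)$ and returns a bona fide configuration with $\n+\mathcal R(\n)=\m$. A parity count shows the source set changes by $\partial\mathbf p=\{y,z\}$ on the $\Lambda_1$-side and by $\mathcal R(\partial\mathbf p)=\{\mathcal R(y),z\}$ on the $\Lambda_2$-side; the shared endpoint $z\in\Lambda_0$ flips twice and contributes nothing, leaving a net change of exactly $\{y,\mathcal R(y)\}$. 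As the map is its own inverse it is a bijection between the labellings with source $A$ and those with source $A\Delta\{y,\mathcal R(y)\}$, giving \eqref{eq:switch_reflect}.

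The main obstacle I anticipate is precisely the bookkeeping guaranteeing that the flip preserves $\n+\mathcal R(\n)=\m$ while altering the source set by $\{y,\mathcal R(y)\}$ and nothing else. The resolution hinges on two points: choosing $\mathbf p$ to meet $\Lambda_0$ only at its terminal vertex $z$ (so that $\mathbf p$ and $\mathcal R(\mathbf p)$ share no edge and the two flips at $z$ cancel), and making the choice of $\mathbf p$ a measurable function of $\m$ alone (not of $\n$), which is what makes the toggle a genuine involution rather than merely a surjection. Checking that reflection-symmetry of the toggled edge-set is exactly the condition preserving $\mathcal R(\G(\n))=\G(\m)\setminus\G(\n)$ is the technical heart; everything else is the same $\binom{\m}{\n}$ accounting already used in Lemma~\ref{lem:switching}.
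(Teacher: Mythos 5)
Your proposal is correct and follows essentially the same route as the paper's proof: fix a (necessarily reflection-symmetric) $\m$, use the reflection-adapted binomial identity descending from \eqref{eq_combs} to recast each side as $C(\m)$ times a count of labellings of the multigraph over $\E_1$, and switch along a deterministically chosen (depending on $\m$ and $y$ only) path in the support of $\m$ from $y$ to its first contact with $\Lambda_0$. The paper phrases the switch in terms of the concatenated path $\gamma=\mathbf{p}\circ\mathcal R(\mathbf{p})$ from $y$ to $\mathcal R(y)$, whereas you toggle only the $\E_1$ half and let the $\E_2$ side follow from the slaving $\n(\mathcal R(b))=\m(b)-\n(b)$; these are the same operation, and your parity bookkeeping at the contact point $z$ matches the paper's observation that $\gamma$ meets $\Lambda_0$ exactly once.
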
  
What should be noticed here is the difference in the two source condition in \eqref{eq:switch_reflect} and, importantly, the restriction of the statement to  $\m$ such that $y \stackrel{\m}{ \longleftrightarrow} \Lambda_0$.

\begin{proof} The stated relation holds trivially in case $y\in \Lambda_0$, so we shall  focus on the case $y\in \Lambda_1\setminus \Lambda_0$.   
Under the  assumption made on $\m$, there exists a simple path supported on edges with $\m(b) \neq 0$ which starts at $y$ and stops upon reaching $\Lambda_0$.  Concatenating it with its reflection, i.e. its image under  $\mathcal R$, we get a simple path $\gamma$ from $y$ to $\mathcal R_\X(y)$ such that: 
\begin{enumerate} 
\item[i)]   the path $\gamma$ visits $\Lambda_0$ exactly ones, arriving by an edge of $\E_1$ and departing by an edge of $\E_2$, 
\item[ii)]  it advances only over edges for which $\m + \mathcal R(\n) \neq 0$.
\end{enumerate}

Such a path, selected from the collections of paths with these properties as a measurable function of $\m$ and $y$,  can be used for a version of the familiar switching argument, this time within the pair  $\n$ and $\mathcal R(\n)$.   

To make that more explicit, each flux functions $\n, \m: \E \to \Z_+$ 
can be presented as a triple
\be \label{3n}
\n=(\n_0, \n_1, \n_2) 
\ee  with $\n_j$ the restriction of $\n$ to $\E_j$, and doing the same for $\m$.  In these terms, the claimed relation \eqref{eq:switch_reflect} takes the form:
\begin{multline} 
 \sum_{\n_0:\E_0\to \Z_+} w(\n_0) \delta_{\n_0, \m_0}  
     \!\!\! \sum_{\substack{\n_1:\E_1\to \Z_+ \\ \n_2:\E_1\to \Z_+ \\ \partial\n=A }}
w(\n_1)  w(\n_2)       \cdot  \id\big[\n_1+\mathcal R(\n_2) = \m_1\big]   \\  
 =  \sum_{\n_0:\E_0\to \Z_+} w(\n_0) \delta_{\n_0, \m_0}  
     \!\!\! \sum_{\substack{\n_1:\E_1\to \Z_+ \\ \n_2:\E_1\to \Z_+ \\ \partial\n=A\Delta \{y, \mathcal R(y) }}
 w(\n_1)  w(\n_2)       \cdot  \id\big[\n_1+\mathcal R(\n_2) = \m_1\big]  
\end{multline}
(where the only difference is in the source conditioin).
 The claimed \eqref{eq:switch_reflect}  follows by the combinatorial argument used above in the proof of Lemma~\ref{lem:switching}, applied here to the double sum at given $\n_0$.  

More explicitly:  one may start by restating the claimed relation in terms of  the multigraph $\G_1(\m)$, over the edge set $\E_1$ of edge multiplicity $\m_1$.
Simple combinatorics allow to read from \eqref{eq_combs}  that each of the above double sums of $w(\n_1) w(\n_2) $  
equals $w(\m)$ times the number of ways of to splitting the multigraph $\G_1(\m)$  into a pair of subgraphs  of  edge multiplicities   $\n_1$ and $\mathcal R(\n_2)$ that are consistent with the imposed source constraints.

 The claimed relation \eqref{eq:switch_reflect}  is implied by the existence of the $1-1$ correspondence between  the two collection of partitions of $\G_1(\m)$, one corresponding to the condition 
 $\partial \n=A$ and the other to $\partial \n=A\Delta \{y,\mathcal y\}$. The bijection is  
 established through the  switch between the two subgraphs along a preselected path over $\gamma(\m)$.   (The existence of such a path is  the essence of the stated assumption for \eqref{eq:switch_reflect}.)
Upon this exchange,  the sources of $\n_1$ and $\mathcal R(n_2)$ are transformed at the endpoints of $\gamma$.  The simultaneous  flips  induce the change
\be
\partial \n \mapsto \partial \n \, \Delta \, \{ y, \mathcal R(y) \}
\ee 
\end{proof}

We are now ready for the main result of this section. 

\begin{proof}[Proof of Theorem\eqref{thm_folded}]  

The RCR  expressions for the two correlation functions are 
\begin{eqnarray} \label{eq_xy}
\langle \sigma_x \sigma_y \rangle      
&=&  \sum_{\substack{\n:\,\E\to \Z_+ \\ \partial \n  = \{x,y\} }}
\frac{
w(\n) }{Z_{\Lambda,\beta}} \, 
\\[2ex]  \notag 
\langle \sigma_x \sigma_{\mathcal {R}(y)} \rangle      
&=& \sum_{\substack{\n:\,\E\to \Z_+ \\ \partial \n=\{x,\mathcal {R}(y)\} }}
\frac{
w(\n)}{Z_{\Lambda,\beta}} \, \, 
\times \id\left[ y \stackrel {\n+\mathcal R(\n) }{ \longleftrightarrow } 
\Lambda_0  \right] \qquad \mbox{} 
\end{eqnarray}
where  the last condition  
is added for free  since for $\n$ with  exactly two sources located at opposite sides of the symmetry plane $\Lambda_0$ the condition  $\mathcal R(y) \stackrel {\n+\mathcal R(\n) }{ \longleftrightarrow } \Lambda_0$ can be taken for granted, and due to the reflection symmetry of $\n +\mathcal R(\n)$  it is equivalent the stated one.

Transforming the last expression  by Lemma~\ref{lem:switch_reflect} one gets:
\be \label{eq_xry}
\langle \sigma_x \sigma_{\mathcal {R}(y)} \rangle      
= 
\sum_{\substack{\n:\,\E\to \Z_+ \\ \partial \n  = \{x,y\} }}
\frac{
w(\n) }{Z_{\Lambda,\beta}}    
\times \id\left[  y \stackrel {\n_1+\mathcal R(\n_2) }{ \longleftrightarrow } 
\Lambda_0  \right]  \,.
\ee
In this form $\langle \sigma_x \sigma_{\mathcal {R}(y)} \rangle $ is given by a partial sum of terms which contribute to   $\langle \sigma_x \sigma_y \rangle $.  
The claimed \eqref{eq_fold_prob} casts the ratio of the sum in \eqref{eq_xry} over \eqref{eq_xy}  in probabilistic terms.  
\end{proof}

\noindent{\bf Remark:}
While the argument was presented for finite domains, 
The infinite volume version of Theorem~\ref{thm_folded} can be deduced through the infinite volume limit of the finite volume states. 

Next are two examples of useful applications of the above considerations.    The added value there is not in the stated inequalities, which are not new, but in the  probabilistic perspective on the remainder terms which the above RCR representation allows to identify.

 \section{Schrader, Messager and Miracle-Sole  inequality}     

The following result was proved by  Schrader~\cite{Sch77}, and independently by Messager and Miracle-Sole~\cite{MesMir77}, in reference to the infinite volume state which results from either the free or the $(+)$ boundary conditions.  (Some additional reflection based inequalities were presented in  \cite{MesMir77, Heg77}).  

\begin{theorem}[The SMMS inequality] \label{thm:SMMS}   For the n.n. ferromagnetic Ising model on $\Z^d$ and any pair of sites $x, y\in \Z^d$
\be\label{eq:SMMS}
\langle \sigma_x \sigma_y\rangle \geq \langle \sigma_x \sigma_{\mathcal R(y)} \rangle \,
 \ee 
for any of the reflection symmetries  $\mathcal R$ of  $\Z^d$ with respect to a hyperplane which does not separate the two sites.
\end{theorem}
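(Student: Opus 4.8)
The plan is to deduce the SMMS inequality \eqref{eq:SMMS} directly from the probabilistic representation established in Theorem~\ref{thm_folded}. The key observation is that \eqref{eq_fold_prob} already expresses $\langle \sigma_x \sigma_{\mathcal R(y)}\rangle$ as $\langle \sigma_x \sigma_y\rangle$ multiplied by a probability, and any probability is bounded above by $1$. So the heart of the argument is simply to check that the hypotheses of Theorem~\ref{thm_folded} are met for the relevant reflection, and that the reflection hyperplane can be chosen so that the two sites $x$ and $y$ sit on the same side of it.

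First I would verify the Markovian condition for the nearest-neighbor model on $\Z^d$. For reflections through a hyperplane $\mathbb X$ of the form $\{x_i = n\}$ with $n\in\Z$ (a ``site'' hyperplane), every nearest-neighbor edge either lies entirely off $\mathbb X$ or, if it is an edge within $\mathbb X$, lies entirely in it; no edge is cut transversally through an interior point, so condition (2) holds. For hyperplanes of the ``mid-bond'' type $\{x_i = n + 1/2\}$, edges crossing orthogonally at their midpoints are cut, and here one invokes the enhancement described in the text (Fig.~\ref{fig:folded}, right): introduce mid-edge spin variables mediating the interaction, after which the Markovian condition is restored and the correlation functions are unchanged. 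The shift-invariant ferromagnetic couplings are manifestly reflection invariant, so $\mathcal R$ is a Markovian reflection symmetry of the model.

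Next, given a pair $x,y$ with $\mathbb X$ not separating them, I would place both on the same side, identify this side with $\Lambda_1$ in the notation of \eqref{split}, and apply Theorem~\ref{thm_folded} (in its infinite-volume form, justified by the Remark following that theorem as a limit of finite-volume states). This yields
\be
\langle \sigma_x \sigma_{\mathcal R(y)}\rangle
= \langle \sigma_x \sigma_y\rangle \,\,
\mathbb P^{\{x,y\}}\!\left(y \stackrel{\n+\mathcal R(\n)}{\longleftrightarrow} \mathbb X\cap\Lambda\right).
\ee
Since $\langle \sigma_x\sigma_y\rangle \geq 0$ by Griffiths' first inequality and the connection probability lies in $[0,1]$, the claimed bound \eqref{eq:SMMS} follows at once, together with the sharpened remainder: the deficit is exactly the probability, in the source-$\{x,y\}$ current ensemble, that the self-reflected configuration fails to connect $y$ to the mirror plane.

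The main obstacle I anticipate is not analytical but a matter of correctly matching geometry to hypotheses: one must ensure that for \emph{every} admissible reflection $\mathcal R$ of $\Z^d$ not separating $x$ from $y$ — including diagonal reflections such as $\{x_1+x_2=0\}$ and the mid-bond cases — the graph genuinely satisfies the Markovian condition (after the enhancement where needed) and that the enhancement does not alter the two-point functions being compared. The diagonal reflections cut no edges at all and are immediate; the mid-bond reflections require the mediating-spin construction, and the only real care is to confirm that integrating out the auxiliary mid-edge spins reproduces exactly the original nearest-neighbor interaction, so that \eqref{eq:SMMS} as stated for the unenhanced model is what one obtains.
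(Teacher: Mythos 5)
Your proposal is correct and follows exactly the paper's own route: the paper likewise deduces \eqref{eq:SMMS} directly from \eqref{eq_fold_prob} by bounding the connection probability by $1$, and records the same probabilistic identity for the deficit (the non-connection of the source pair to the mirror hyperplane under $\n+\mathcal R(\n)$). Your additional verification of the Markovian condition for site, mid-bond, and diagonal hyperplanes is a sensible elaboration of what the paper leaves implicit via Fig.~\ref{fig:folded}.
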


By implication, the correlation function $S_2(0,x)$ is decreasing in $|x_j|$ for each of the Cartesian coordinates of $x=\{x_1, ..., x_d\}$.  It is also decreasing in  $|x_1-x_2|$ at fixed values of $\{(x_1+x_2), x_3,..., x_d\}$ and exhibits similar monotonicity   with respect to other choices for coordinate systems with a reflection symmetry.  

Our \eqref{eq_fold_prob} provides an alternative proof of the  SMMS inequality  \eqref{eq:SMMS}.  In addition, it allows to cast the difference in the following form.
\be 
\frac{\langle \sigma_x \sigma_y\rangle - \langle \sigma_x \sigma_{\mathcal R(y)} \rangle }
{\langle \sigma_x \sigma_y\rangle}\,  =  \,  \mathbb P_{\Lambda, \beta}^{\{x,y\}}
\!\!\left( 
x \stackrel{\n+\mathcal R_{\mathbb X}(\n)}{
\not\!\!\!\longleftrightarrow}  \mathbb X  
\right) 
\ee 

One may recognize here the similarity   with a know reflection principle for the hitting probabilities of a simple random walk in $\R^d$ (with optional  absorbing boundary conditions at $\partial \Lambda$).

\section{van Beijeren inequality and the question of roughening transition}

The  possibility of translation symmetry breaking in the classical Ising model in dimensions  $d>2$, was established in Roland Dobrushin's landmark paper~\cite{Dob72}.  His analysis proves  that at low enough temperatures  the Ising model's magnetization  in 
$\Lambda_L =[-L,L]^d$ 
under the  boundary conditions  
\be \label{Dob_bc} 
\mbox{for $ x\in \partial \Lambda_L $}: \quad 
\sigma_x= \begin{cases}  
+1 &   x_d > 0  \\ 
-1 &    x_d < 0   \\
+ 1 &  x_d=0
\end{cases}
\ee 
have magnetization profile which does not vanish in the limit  $L\to \infty$.   

The range of temperatures for which that is proven to hold was extended by  H. van Beijeren~\cite{Bei75} who proved translation symmetry breaking under such conditions for all  $\beta >\beta_c(d-1)$.   His analysis, which is crisp and insightful, is carried  in terms of the Percus representation of duplicated Ising spin systems, and rests on  inequalities of  Lebowitz~\cite{Leb74}.

Among his interesting results is the following statement, in which $\mathcal R$ refers to the reflection with respect to the  hyperplane $\X$ of sites with $x_d=0$.  

\begin{theorem}[van Beijeren \cite{Bei75}]  \label{thm_vanB}
For any Ising spin system in an $\mathcal R$-symmetric domain $\Lambda\subset \Z^d$, 
with an $\mathcal R$-symmetric  nearest neighbor  ferromagnetic pair interaction, and the boundary conditions  \eqref{Dob_bc}:

1) The magnetization at sites in $\Lambda_0 = \Lambda \cap \X$  satisfies
\be  \label{eq_vanB}
\forall x\in \Lambda_0:   \qquad 
\langle \sigma_x \rangle_{\Lambda, \beta}^{\pm,+} \geq \langle \sigma_x \rangle_{\Lambda_0, \beta}^{+}  
\quad \left[ \geq \langle \sigma_0 \rangle_{\Z^{d-1}, \beta}^{+}  \right]
\ee 
where the superscript ${\pm,+}$ refers to the boundary conditions \eqref{Dob_bc}.

2) The following limit exists 
\be
\lim_{L\to \infty} \langle \sigma_x \rangle_{\Lambda_L}^{\pm,+}  =: M_\beta^{\pm; +}(x)  
\ee 
and the limiting magnetization which it defines is concave in $x_d$ for $x_d>0$.
\end{theorem}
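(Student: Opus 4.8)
The plan is to treat the Dobrushin conditions \eqref{Dob_bc} as a special case of the mixed $(\pm,+)$ boundary conditions whose random-current representation was set up in Section~\ref{sec:bc} (Theorem~\ref{thm:FF}), and then to extract the two assertions from the reflection symmetry of the associated ferromagnetic double-current measure, assisted by the folding machinery of Theorem~\ref{thm_folded} and the reflection-adapted switching of Lemma~\ref{lem:switch_reflect}. Throughout, $\mathcal R$ denotes the reflection about the interface hyperplane $\X=\{x_d=0\}$, and $\partial\Lambda_+,\partial\Lambda_-$ are the portions of $\partial\Lambda$ with $x_d\ge 0$ (the $+$-pinned sites, \emph{including} those on $\X$) and with $x_d<0$, respectively, so that \eqref{Dob_bc} is exactly the $(\pm,+)$ datum to which \eqref{eq:1pnt0} applies.

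For part~1, I would begin from \eqref{eq:1pnt0}, which writes $\langle\sigma_x\rangle^{\pm,+}_{\Lambda}\,\langle\sigma_x\rangle^{+}_{\Lambda}$ as the difference of the conditional connection probabilities $\bbP(x\leftrightarrow\partial\Lambda_+\mid FF)$ and $\bbP(x\leftrightarrow\partial\Lambda_-\mid FF)$ in the sourceless double-current measure $\bbP^{\emptyset,\emptyset;+}$. The point of passing through the frustration-adjusted representation is that this measure is built from the \emph{ferromagnetic} couplings, hence is invariant under $\mathcal R$ (its weights, the domain, and the no-interior-source constraint are all symmetric), and $\mathcal R$ fixes $x\in\Lambda\cap\X$ while exchanging the strictly-upper boundary with $\partial\Lambda_-$. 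The first step is to combine this $\mathcal R$-invariance with the switching lemma to recast the difference as a single \emph{nonnegative} probability that $x$ is joined, along $\n+\mathcal R(\n)$, to the $\mathcal R$-fixed interface boundary $\partial\Lambda\cap\X$. The second step is a comparison: restricting the currents to the in-plane subgraph $\Lambda\cap\X$ only lowers this connection probability, so by Griffiths' second inequality (monotonicity under deletion of the transverse couplings) the quantity dominates the purely in-plane connection probability, which by \eqref{eq:magnetization} equals $\langle\sigma_x\rangle^{+}_{\Lambda\cap\X}$. This yields the first inequality of \eqref{eq_vanB}; the bracketed bound follows because the $+$-boundary magnetization of the $(d-1)$-dimensional slab is decreasing in the volume, so the finite-volume value is at least its infinite-volume limit $\langle\sigma_0\rangle^{+}_{\Z^{d-1}}$.

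For part~2, existence of the limit is routine: under $(\pm,+)$ data the finite-volume magnetizations form a monotone net in $L$ via Griffiths' inequalities (the monotonicity already invoked for \eqref{tau}), so $M^{\pm;+}_\beta(x):=\lim_L\langle\sigma_x\rangle^{\pm,+}_{\Lambda_L}$ exists. For the concavity in $x_d$ I would establish, for each height $k\ge 1$, the second-difference inequality $2\,\langle\sigma_{(\mathbf 0,k)}\rangle\ge\langle\sigma_{(\mathbf 0,k-1)}\rangle+\langle\sigma_{(\mathbf 0,k+1)}\rangle$, by folding about the horizontal plane $\X_k=\{x_d=k\}$, a Markovian reflection of the bulk that exchanges the heights $k-1$ and $k+1$. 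Writing each magnetization as a two-point function $\langle\sigma_{(\mathbf 0,j)}\sigma_{\Q}\rangle$ through the ghost site $\Q$, the folding identity \eqref{eq_fold_prob} for $\mathcal R_k$ expresses the lower-height correlation as the higher-height one times a connection probability bounded by one; averaging the two path orderings and using Griffiths monotonicity to absorb the fixed boundary should then pin down the sign of the second difference.

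The principal obstacle, common to both parts, is that the Dobrushin data break the reflection–flip symmetry \emph{exactly at the interface}. In part~1 the set $\partial\Lambda\cap\X$ is grouped with the $+$ side, so the upper and lower contributions in \eqref{eq:1pnt0} do not cancel outright, and the real work is to show, through the switching argument, that the surviving remainder is the sign-definite in-plane connection event and not a quantity of indefinite sign. In part~2 the analogous difficulty is that the fixed boundary is \emph{not} symmetric about the interior plane $\X_k$, so the folding identity of Theorem~\ref{thm_folded}/Lemma~\ref{lem:switch_reflect} governs only the symmetric bulk, and its coupling to the asymmetric boundary must be controlled by a separate monotonicity step before the midpoint inequality can be read off; handling the placement of $\Q$ relative to $\X_k$ is where I expect the argument to be most delicate.
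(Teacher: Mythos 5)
Your proposal correctly identifies the relevant machinery (the mixed boundary-condition representation of Section~\ref{sec:bc}, the folding of Theorem~\ref{thm_folded}, the reflection-adapted switching of Lemma~\ref{lem:switch_reflect}), but the specific route you take for part~1 has a gap at exactly the step you flag as ``the real work''. Starting from the double-current identity \eqref{eq:1pnt0} and applying the reflection symmetry of the conditional measure, the difference of the two connection probabilities does not reduce to a plain connection event: since $\mathcal R$ sends $\partial\Lambda_-$ to the \emph{strictly} upper boundary while fixing $\partial\Lambda\cap\X$, what survives is the probability, conditioned on $FF$, that $x$ is joined to $\partial\Lambda\cap\X$ \emph{and not} joined to the strictly upper boundary. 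This event is the intersection of an increasing event with a decreasing one, further conditioned on the decreasing event $FF$, so your next step --- ``restricting the currents to the in-plane subgraph only lowers this connection probability, so by Griffiths' second inequality the quantity dominates the purely in-plane connection probability'' --- is not a legitimate application of Griffiths monotonicity; no argument is supplied for why switching off the transverse couplings decreases the probability of a non-monotone event under a non-trivial conditioning. The paper avoids this entirely by working with a \emph{single} current carrying the sign $(-1)^{\mathcal F_-(\n)}$: the folding cancellation of Lemma~\ref{lem:switch_reflect} yields the exact identity \eqref{eq:1pnt_Dob}, which expresses $\langle\sigma_x\rangle^{\pm;+}_{\Lambda}$ itself (with no extra factor of $\langle\sigma_x\rangle^{+}_{\Lambda}$) as a conditional expectation of the magnetization $\langle\sigma_x\rangle^{F;+}_{\Gamma(\n)}$ in a random domain $\Gamma(\n)\supseteq\Lambda_0$ with free boundary conditions off the symmetry plane and $(+)$ on it; each such term dominates $\langle\sigma_x\rangle^{+}_{\Lambda_0}$ by a clean, term-by-term Griffiths comparison, which is what makes the bound \eqref{eq_vanB} come out.

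For part~2, your folding about an interior plane $\X_k=\{x_d=k\}$ is blocked by the asymmetry of the Dobrushin data \eqref{Dob_bc} about that plane, as you acknowledge; the ``separate monotonicity step'' needed to control the coupling between the symmetric bulk and the asymmetric boundary is precisely the missing content, and no candidate for it is given. Note that the paper itself does not reprove concavity by RCR --- it is quoted from van Beijeren's original argument, which runs through Lebowitz inequalities for the duplicated spin system --- so if you want a self-contained proof of part~2 you should either follow that route or substantiate the midpoint inequality by a genuinely different mechanism.
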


Here the double superscripts,  such  $\pm;+$ and  $+;+$,
refer separately to the boundary conditions off and on the symmetry plane, i.e. at $x_d\neq 0$ and $x_d=0$,  listed in that order.

The existence of translation symmetry breaking invites a  number of interesting question   which are still open.  Perhaps the first, which has received some attention but is still open, is whether in $d=3$ dimensions the phenomenon persists  only up to a ``roughening transition'' at some $T_R <T_c$ beyond which $M_\beta^{\pm; +}(x)$ vanishes for all $x$.  
 Also of interest would be to identify what could be the relevant mechanism for such a transition.   

\subsection{RCR representation of the Dobrushin state} \mbox{ } 

Dobrushin state's salient features include the reflection symmetry of the underlying graph,  and the anti-symmetry of the boundary conditions, away from the symmetry plane, and of course the frustration that is induced by the mixed boundary conditions.
 
In Section~\ref{sec:bc} is presented a way to adjust RCR to  frustration through exact  
cancellations,  after which the remaining RCR sum is free of the ``sign-problem''.   The   remaining terms produce a state of the random current which is affected by the conditioning on a suitable  $FF$ (for ``frustration free'')-event.  A key tool for that is a switching argument that is enabled by the symmetry,  or anti-symmetry, found in a doubled system of overlapping currents.  
 
In Section~\ref{sec:reflection} the switching technique was extended to graphs with reflection symmetry, which enables the comparison to be made between two halves of a single current, folded onto itself along the symmetry hyperplane.   

The Dubrushin state may be approached through the natural combination of the  two techniques.

We start by splitting the set  $\Lambda$ and its boundary into the disjoint unions
\begin{eqnarray}
\Lambda &=& \Lambda_0  \sqcup \Lambda_+ \sqcup \Lambda_-  \notag\\[-1ex]  \\[-1ex]  \notag 
\partial \Lambda &=& (\partial\Lambda)_0  \sqcup (\partial\Lambda)_+ \sqcup (\partial\Lambda)_-  
\end{eqnarray} 
with the subscript $0, +,-$ indicating the original set's restriction  to sites  of the corresponding value or sign of $x_d$.

The  arguments used in the proof of  Thorem~\ref{thm:FF} combined with the folding technique of Theorem~\ref{thm_folded} yield the following expression for the ratio of the partition function under the boundary conditions \eqref{Dob_bc} and that under the pure (+) boundary conditions

\begin{theorem}  Under the conditions listed in Theorem~\ref{thm_vanB}:
\be \label{ZZ_Dob}
\frac {Z_{\Lambda,\beta}^{\pm;+}}
{Z_{\Lambda,\beta}^{+;+}}  = 
\mathbb P_{\Lambda,\beta}^{+;+}\left(
(\partial \Lambda)_-   \stackrel{\n+R(\n)}{ \not\!\!\!\longleftrightarrow }  \Lambda_0 \right) 
\ee
and for any $x\in \Lambda_0$:
\begin{eqnarray} \label{eq:1pnt_Dob}
 \langle \sigma_x   \rangle_{\Lambda,\beta}^{\pm; +}\,\, 
 &=&  \mathbb E_{\Lambda,\beta}^{+;+}\left(
 \langle \sigma_x   \rangle_{\Gamma(\n),\beta}^{F; +}
 \,\, \Big | \, \, 
(\partial \Lambda)_-   \stackrel{\n+R(\n)}{ \not\!\!\!\longleftrightarrow } \Lambda_0 \right) 
\\ 
&\geq & \langle \sigma_x   \rangle_{\Z^{d-1},\beta}^{\pm; +} \notag
\end{eqnarray}
where: 
\begin{enumerate} 
\item $\mathbb P_{\Lambda,\beta}^{+;+}$ and $\mathbb E_{\Lambda,\beta}^{+;+}$ denote  the probability and expectation value for the system under the uniformly $(+)$ boundary conditions 
\item $\Gamma(\n)$ is the complement in $\Lambda$ of the set of sites which are $(\n+\mathcal R(\n)) $-connected to $\partial \Lambda\setminus (\partial \Lambda)_0$
\item  the magnetization $\langle \sigma_x   \rangle_{\Gamma(\n),\beta}^{F; +}$ is with respect to boundary conditions which are $(+)$ in the symmetry plane and free above it. 
\end{enumerate} 
\end{theorem}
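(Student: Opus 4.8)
The plan is to combine the frustration cancellation behind Theorem~\ref{RSR_FF} and Theorem~\ref{thm:FF} with the reflection folding of Theorem~\ref{thm_folded}, using a \emph{single} current $\n$ together with its reflection $\mathcal R(\n)$ in place of the two independent currents $\n_1,\n_2$. By \eqref{eq:8+-} the Dobrushin $\pm;+$ conditions weight each current by the boundary sign $(-1)^{\mathcal F_-(\n)}$, where $\mathcal F_-(\n)$ is the total flux of $\n$ into $(\partial\Lambda)_-$ (the $(+)$ parts $(\partial\Lambda)_+$ and $(\partial\Lambda)_0$ contributing no sign). Dividing by $Z^{+;+}_{\Lambda,\beta}$ yields the single current identity $Z^{\pm;+}/Z^{+;+}=\mathbb E^{+;+}_{\Lambda,\beta}\!\big[(-1)^{\mathcal F_-(\n)}\big]$, and the whole task is to exhibit this signed average as the non--connection probability in \eqref{ZZ_Dob}.

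To prove \eqref{ZZ_Dob} I would split this expectation according to the event $E=\{(\partial\Lambda)_-\stackrel{\n+\mathcal R(\n)}{\longleftrightarrow}\Lambda_0\}$. On $E^{c}$ the lower boundary cannot reach the symmetry plane even in $\n\le\n+\mathcal R(\n)$; since $\Lambda_0$ is a cut set (the Markovian condition), every current path terminating on $(\partial\Lambda)_-$ also starts there, so $\mathcal F_-(\n)$ is even and the sign equals $+1$. On $E$ I would build a fixed point free, weight preserving, sign reversing involution: with $\n+\mathcal R(\n)=\m$ held fixed, Lemma~\ref{lem:switch_reflect} permits switching $\n$ against $\mathcal R(\n)$ along a canonically chosen path running from $(\partial\Lambda)_-$ through $\Lambda_0$ to $(\partial\Lambda)_+$; this toggles the flux at the single $(\partial\Lambda)_-$ endpoint, flips $(-1)^{\mathcal F_-(\n)}$, and leaves $\m$ (hence $E$) intact. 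The $E$ contribution therefore cancels and $Z^{\pm;+}/Z^{+;+}=\mathbb P^{+;+}_{\Lambda,\beta}(E^{c})$, which is \eqref{ZZ_Dob}.

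For the magnetisation \eqref{eq:1pnt_Dob} I would insert a source at $x\in\Lambda_0$; because $\mathcal R(x)=x$ the folded configuration $\m=\n+\mathcal R(\n)$ stays sourceless in the interior, so the same path involution (which alters only the boundary endpoints $y,\mathcal R(y)$, leaving $x$ as an interior vertex) applies verbatim and reduces both numerator (sources $\{x\}$) and denominator (sourceless) to their $E^{c}$ parts, each with sign $+1$. On $E^{c}$ one has $(\partial\Lambda)_0\subseteq\Lambda_0\subseteq\Gamma(\n)$ and no flux of $\m$ (hence of $\n$) crosses between $\Gamma(\n)$ and the charged region $D=\Lambda\setminus\Gamma(\n)$, so $w(\n)$ factorises over $\Gamma(\n)$ and $D$. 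The $D$ factor is identical in numerator and denominator and cancels, while the $\Gamma$ factor is exactly the single current ratio computing the Ising magnetisation on $\Gamma(\n)$ whose only $(+)$ boundary is the inherited $(\partial\Lambda)_0$ and whose interface with $D$ carries forced zero flux, i.e. free boundary --- that is, $\langle\sigma_x\rangle^{F;+}_{\Gamma(\n),\beta}$. Summing the common $D$ weight over admissible charged clusters then reconstitutes the conditional law, giving \eqref{eq:1pnt_Dob}. The stated lower bound follows termwise from Griffiths' second inequality \eqref{eq:G2}: deleting all edges of $\Gamma(\n)$ that leave the plane isolates $\Lambda_0$ as a $(d-1)$ dimensional $(+)$ state, whence $\langle\sigma_x\rangle^{F;+}_{\Gamma(\n),\beta}\ge\langle\sigma_0\rangle^{+}_{\Z^{d-1},\beta}$, a bound that survives the conditional average and recovers \eqref{eq_vanB}.

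The main obstacle I anticipate is the factorisation step for the magnetisation: one must verify that conditioning on the charged cluster $D$ genuinely forces zero flux on every $\Gamma$--$D$ edge --- so that these edges behave as a true free boundary for the $\Gamma$ current rather than merely happening to carry no flux --- that the only surviving portion of $\partial\Lambda$ inside $\Gamma(\n)$ is the $(+)$ set $(\partial\Lambda)_0$, and that the resulting single current ratio on $\Gamma(\n)$ reproduces the Ising magnetisation $\langle\sigma_x\rangle^{F;+}_{\Gamma(\n)}$ with precisely the advertised boundary conditions. A secondary delicacy is that the canonical switching path must be selected as a measurable function of $\m$ alone, independently of whether a source sits at $x$, so that the involution stays within the source class $\{x\}$ while still flipping the boundary flux.
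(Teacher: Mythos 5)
Your proposal is correct and follows essentially the same route as the paper: the signed single-current expansion \eqref{eq:8+-}, cancellation of the terms on the connection event via the reflection-switching involution of Lemma~\ref{lem:switch_reflect}, the observation that $\mathcal F_-(\n)$ is even on the complementary event, conditioning on the exterior cluster to extract $\langle\sigma_x\rangle^{F;+}_{\Gamma(\n),\beta}$, and Griffiths monotonicity for the final bound. The technical points you flag (measurable, source-independent path selection; zero flux forced on the $\Gamma$--$D$ interface) are exactly the ones the paper's proof relies on.
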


\begin{figure}[h]               \centering
        a)  \includegraphics[width=0.25\textwidth]{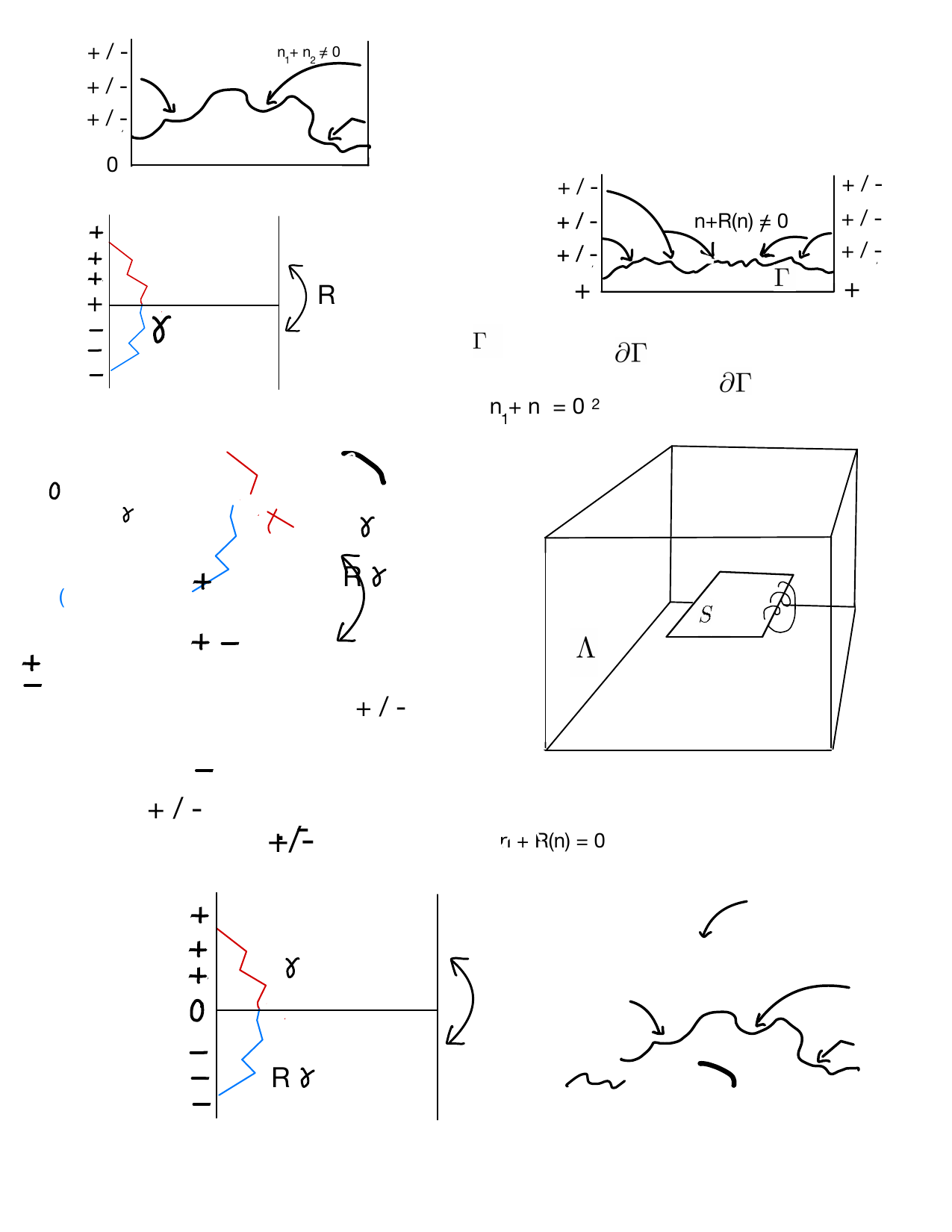}, \quad b)
                  \includegraphics[width=0.5\textwidth]{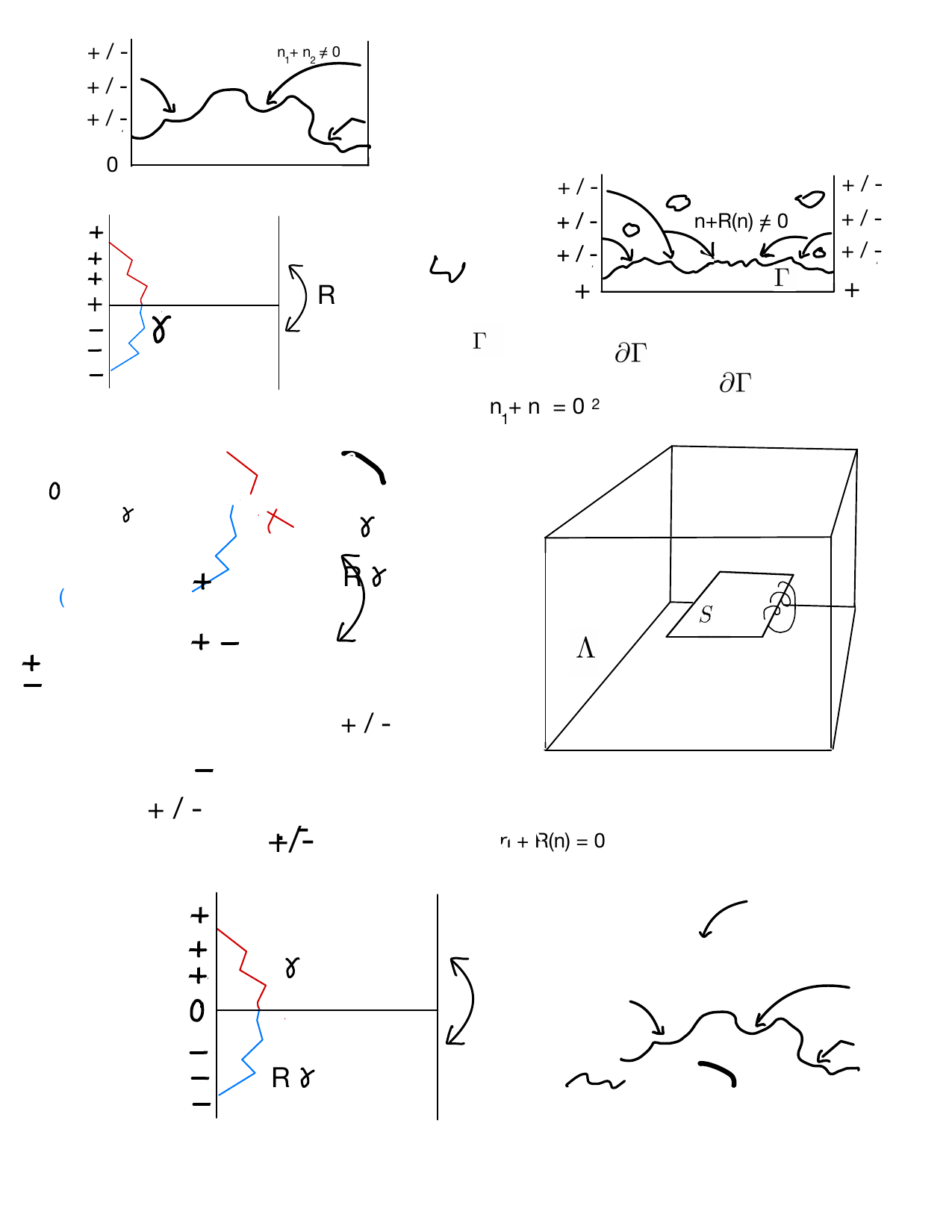}
              \caption{a) The folding correspondence.
              b) The random surface delineating the magnetization bias produced by the antisymmetric boundary conditions. \label{fig:roughening}}
\end{figure}

To help reading  \eqref{eq:1pnt_Dob} it may be pointed out that 
due to the reflection symmetry of $[\n + \mathcal R(\n)]$, for each current configuration the following two conditions are equivalent
\be
(\partial \Lambda)_-   \stackrel{\n+R(\n)}{ \not\!\!\!\longleftrightarrow } \Lambda_0  
\qquad  \Longleftrightarrow \qquad 
(\partial \Lambda)_-   \stackrel{\n+R(\n)}{ \not\!\!\!\longleftrightarrow } (\partial \Lambda)_+  \,.
\ee 
Under the above conditions $\Gamma(\n)$ includes  $\Lambda_0$.  Outside of the symmetry plane  $\Gamma(\n)$'s  boundary consists of 
 an ``outermost null surface'' in terms of the flux of  $(\n+\mathcal R(\n))$  and beyond it possibly also closed surfaces enclosing clusters connected to neither $\Lambda_0$ nor $\partial \Lambda$  (cf. Fig.~\ref{fig:roughening}).   
  
The last inequality, which recovers van Beijeren's bound \eqref{eq_vanB},  is based on the comparison  of   
the Gibbs state within $\Gamma(\n)$ under the stated boundary conditions with its $(d-1)$ dimensional subset $\Lambda_0$ with (+) boundary conditions within the symmetry hyperplane (using Griffiths' monotonicity arguments).   

\begin{proof} 
Starting from the \eqref{ZZ_RCR}  RCR expansion of the partition function under the $(\pm,+)$ boundary conditions, and denoting $\widetilde {\Lambda} :=  \Lambda \setminus \partial \Lambda$ we get the following sequence of transforming equalities
\begin{eqnarray}
Z_{\Lambda,\beta}^{\pm;+} &=& 
\sum_{\substack{\n:\E\to \Z_+ \\ \partial \n \setminus \partial \Lambda = \emptyset}}
w(\n)\,  (-1)^{\mathcal F_-(\n)} \notag \\ 
&=&  
\sum_{\substack{\n:\E\to \Z_+ \\ \partial \n \setminus \partial \Lambda = \emptyset}}
w(\n) (-1)^{\mathcal F_-(\n)}  
\times \id\left[  (\partial \Lambda)_-   \stackrel{\n+R(\n)}{ \not\!\!\!\longleftrightarrow }   \Lambda_0   \right] \notag \\
&=&  
\sum_{\substack{\n:\E\to \Z_+ \\ \partial \n \setminus \partial \Lambda = \emptyset}}
w(\n)   
\times \id\left[  (\partial \Lambda)_-   \stackrel{\n+R(\n)}{ \not\!\!\!\longleftrightarrow }  \Lambda_0   \right] \notag \\ 
&=&  
Z_{\Lambda,\beta}^{+;+} \times
 \mathbb P_{\Lambda,\beta}^{+;+}\left(
 (\partial \Lambda)_-   \stackrel{\n+R(\n)}{ \not\!\!\!\longleftrightarrow }  \Lambda_0 
  \right)
\end{eqnarray}
where the second equality is the pivotal step, in which the omitted terms cancel through a  flip symmetry similar to that discussed in detail within the proof of the single current switching lemma, Lem.~\eqref{lem:switch_reflect}. 
The next step is more  elementary: under the no-connection condition the flux $\mathcal F_(\n)$ is necessarily even.      The last equality is just a rewriting of the sum in the corresponding probabilistic terms.  The last expression is just the  claimed \eqref{ZZ_Dob}.
%

The above reasoning applies equally well to the RCR sum with an added  source within $\Lambda+0$ (this condition is significant since under the added $FF$ condition this source can only be paired with a source at $(\partial \Lambda)_0$, which is not the case otherwise).  
Thus, the above reasoning  also yields:
\begin{eqnarray} \label{eq:1pnt_Dob2}
 \langle \sigma_x   \rangle_{\Lambda,\beta}^{\pm; +}  \,  
 Z_{\Lambda,\beta}^{\pm;+}  &=&  
\sum_{\substack{\n:\E\to \Z_+ \\ \partial \n \setminus \partial \Lambda = \{x\}}}
w(\n)\,  (-1)^{\mathcal F_-(\n)} \\  
  &=&  \sum_{\substack{\n:\E\to \Z_+ \\ \partial \n \setminus \partial \Lambda = \{x\}}}
w(\n)   
\times \id\left[  (\partial \Lambda)_-   \stackrel{\n+R(\n)}{ \not\!\!\!\longleftrightarrow }  \Lambda_0   \right] \notag \\  
  &=&  \sum_{\substack{\n:\E\to \Z_+ \\ \partial \n \setminus \partial \Lambda = \emptyset }}
w(\n)   
\times \id\left[  (\partial \Lambda)_-   \stackrel{\n+R(\n)}{ \not\!\!\!\longleftrightarrow }  \Lambda_0   \right] \langle \sigma_x   \rangle_{\Gamma(\n),\beta}^{F; +}
\notag \\  
&=&  Z_{\Lambda,\beta}^{\pm;+} \, 
\mathbb E_{\Lambda,\beta}^{+;+}\left(
 \langle \sigma_x   \rangle_{\Gamma(\n),\beta}^{F; +}
 \,\, \Big | \, \, 
(\partial \Lambda)_-   \stackrel{\n+R(\n)}{ \not\!\!\!\longleftrightarrow } \Lambda_0 \right) 
\end{eqnarray}
 where in the third equation the effect of the source condition is calculated by conditioning on the values of $\n$ in the complement the above defined set $\Gamma(\n)$.  By the definition of that set, the ration between the conditional sum with versus without the source at $x$ is exactly the indicated mean value of $\sigma_x$ under the free boundary conditions at the boundary of $\Gamma$, except at the level of $\lambda_0$.
 
The last step, like in the previous calculation, is just the rephrasing  of the expression into probabilistic terms of the claimed \eqref{eq:1pnt_Dob}.  
\end{proof}

Discussing the implications of Theorem~\ref{ZZ_Dob}, Ross Graham~\cite{Gra84} established that the roughening transition in three dimensions, corresponds exactly to the temperature at which the boundary of the set $\Gamma$ floats to infinity, in the sense that the probability of it intersecting any given finite region tends to zero as $L\to \infty$.  The question of  existence of this transition strictly below $T_c$ still remains open, although progress has been made on a number or related issues, cf. \cite{She05, GheLub2023}.

\noindent\textbf{Acknowledgements:} 
  I wish to thank Ross Graham, Roberto Fernandez, David Barsky, Joel Lebowitz, Charles Newman, Hugo Duminil-Copin, Vincent Tassion, Simone Warzel, Dima Ioffe, Geoffrey Grimmett,  Mohamed El Hedi Bahri, Ron Peled, Jacob Shapiro, Jakob Bj\"ornberg,  Roland Bauerschmidt, Barry Simon and Frederik Klausen for  stimulating discussions (in roughly  chronological order) of topics related to random currents and their applications.  
I also gratefully acknowledge the  hospitality enjoyed at The Weizmann Institute of Science (Rehovot, Israel), where some of the writing took place.  	

\noindent\textbf{Statements and Declarations:}
The article's author states that  it is free of conflict of interest, and  data sharing is not applicable as no datasets were generated or analysed during the current study.

\noindent\textbf{Note added in proofs:}  
As this paper goes into print the author became aware that, independently of this work, extensions of the random current representation to spin systems with multivariate interactions and frustration, such as present in the DSS setting, are discussed and explored in a one dimensional setting in the parallel work \cite{AGKM25}.   


\end{document}